\definecolor{darkblue}{rgb}{0,0,0.38}
\definecolor{darkred}{rgb}{0.6,0,0}
\definecolor{darkgreen}{rgb}{0.1,0.35,0}
\DeclareMathOperator{\argmax}{argmax}
\DeclareMathOperator{\argmin}{argmin}
\newcommand{\labeltarget}[1]{\Hy@raisedlink{\hypertarget{#1}{}}}
\def\OPT{\mathsf{OPT}}
\def\CG{\text{CG}}
\newtheorem{theorem}{Theorem}
\newtheorem{lemma}[theorem]{Lemma}
\newtheorem{proposition}[theorem]{Proposition}
\newtheorem{definition}[theorem]{Definition}
\newtheorem{corollary}[theorem]{Corollary}
\newtheorem{observation}[theorem]{Observation}
\newcommand{\Lcross}{\ensuremath{L_{\text{cross}}}}
\newcommand{\Lcr}{\ensuremath{L^{\text{crit}}_{\text{cross}}}}
\newcommand{\Lncr}{\ensuremath{L^{\text{no-crit}}_{\text{cross}}}}
\newcommand{\Lin}{\ensuremath{L_{\text{in}}}}
\newcommand{\Lup}{\ensuremath{L_{\text{up}}}}
\newcommand{\Vcr}{\ensuremath{V_{\text{crit}}}}
\newcommand{\across}{\ensuremath{\alpha_{\text{cross}}}}
\newcommand{\acr}{\ensuremath{\alpha^{\text{crit}}_{\text{cross}}}}
\newcommand{\ancr}{\ensuremath{\alpha^{\text{no-crit}}_{\text{cross}}}}
\newcommand{\ain}{\ensuremath{\alpha_{\text{in}}}}
\newcommand{\aup}{\ensuremath{\alpha_{\text{up}}}}
\newcommand{\Ltwol}{\ensuremath{L_{2l}}}
\newcommand{\Lonel}{\ensuremath{L_{0/1l}}}
\newcommand{\atwol}{\ensuremath{\alpha_{2l}}}
\newcommand{\aonel}{\ensuremath{\alpha_{0/1l}}}
\DeclareMathOperator{\E}{\mathbb{E}}
\DeclareMathOperator{\cov}{cov}
\DeclareMathOperator{\supp}{supp}
\title{Improved Approximation for Tree Augmentation:\\ Saving by Rewiring}
\author{
Fabrizio Grandoni\thanks{
IDSIA, Lugano, Switzerland.
Email: \href{mailto:fabrizio@idsia.ch}
{fabrizio@idsia.ch}. Partially supported by Swiss National Science Foundation grant 200021\_159697.
}
\and 
Christos Kalaitzis\thanks{
Department of Mathematics, ETH Zurich, Zurich, Switzerland.
Email: \href{mailto:chistos.kalaitzis@ifor.math.ethz.ch}
{christos.kalaitzis@ifor.math.ethz.ch}.
Supported by Swiss National Science Foundation grant 200021\_165866.
}
\and
Rico Zenklusen\thanks{
Department of Mathematics, ETH Zurich, Zurich, Switzerland.
Email: \href{mailto:ricoz@math.ethz.ch}
{ricoz@math.ethz.ch}.
Supported by the Swiss National Science Foundation grant
200021\_165866.
}
}
\begin{document}

\maketitle

\begin{abstract}
The Tree Augmentation Problem (TAP) is a fundamental network design problem in which we are given a tree and a set of additional edges, also called \emph{links}. The task is to find a set of links, of minimum size, whose addition to the tree leads to a $2$-edge-connected graph. A long line of results on TAP culminated in the previously best known approximation guarantee of $1.5$ achieved by a combinatorial approach due to Kortsarz and Nutov [ACM Transactions on Algorithms 2016], and also by an SDP-based approach by Cheriyan and Gao [Algorithmica 2017]. Moreover, an elegant LP-based $(1.5+\epsilon)$-approximation has also been found very recently by Fiorini, Gro\ss, K\"onemann, and Sanit\'a [SODA 2018]. In this paper, we show that an approximation factor below $1.5$ can be achieved, by presenting a $1.458$-approximation that is based on several new techniques. 

By extending prior results of Adjiashvili [SODA 2017], we first present a black-box reduction to a very structured type of instance, which played a crucial role in recent development on the problem, and which we call $k$-wide. Our main contribution is a new approximation algorithm for $O(1)$-wide tree instances with approximation guarantee strictly below $1.458$, based on one of their fundamental properties: wide trees naturally decompose into smaller subtrees with a constant number of leaves. Previous approaches in similar settings rounded each subtree independently and simply combined the obtained solutions. We show that additionally, when starting with a well-chosen LP, the combined solution can be improved through a new ``rewiring'' technique, showing that one can replace some pairs of used links by a single link. We can rephrase the rewiring problem as a stochastic version of a matching problem, which may be of independent interest. By showing that large matchings can be obtained in this problem, we obtain that a significant number of rewirings are possible, thus leading to an approximation factor below $1.5$.
 \end{abstract}

\thispagestyle{empty}
\addtocounter{page}{-1}

\newpage

\section{Introduction}

In the Tree Augmentation Problem (TAP) we are given an undirected tree $T=(V,E)$ and a set of links $L\subseteq \binom{V}{2}$ between pairs of vertices. The task is to find a minimum cardinality set of links $U\subseteq L$ such that $(V,E\cup U)$ is $2$-edge-connected.
A natural extension of TAP is its weighted version, the Weighted Tree Augmentation Problem (WTAP), in which each link
$\ell\in L$ has a nonnegative cost $c_{\ell}$, and the task is to choose a set of links $U\subseteq L$ \emph{of minimum cost} 
such that $(V,E\cup U)$ is $2$-edge-connected.

TAP is a basic Survivable Network Design problem, i.e., a problem where our goal is to enhance the connectivity 
properties of an input network in order to make it tolerant to edge faults. The motivation behind studying such problems 
is clear: networks are not static, but evolve over time, and in particular they are subject to faults that result in 
edges or whole sub-networks becoming unavailable. Therefore, it is desirable that we design networks in such a way that 
the failure of a small number of network interfaces does not disrupt its connectivity. Specifically, TAP 
poses one of the arguably most basic questions in this context: given a network whose connectivity is disrupted even if 
a single link fails, how can we introduce as few links as possible in such a way, that a single link failure will still 
result in a connected network? In this sense, it is a special case of a much more general problem: given a $k$-edge-connected 
network, how can we introduce new links such that the resulting network is $(k+1)$-edge-connected? Interestingly enough, 
whenever $k$ is odd, this problem reduces to TAP (see Cheriyan, Jord{\'a}n, and Ravi~\cite{cheriyan19992}).

Given the importance of the problem, it is not surprising that TAP, WTAP, and even special cases thereof attracted 
considerable interest. To begin with, WTAP was shown to be NP-hard in the 80's by Frederickson and 
J\'aj\'a~\cite{frederickson1981approximation}, and also TAP was proven to be NP-hard later on (see Cheriyan, Jord{\'an}, and Ravi~\cite{cheriyan19992}). Unsurprisingly, considerable effort has been put into designing approximation algorithms.

For WTAP, the best-known approximation guarantee is $2$ and was first established by Frederickson and J\'aj\'a 
\cite{frederickson1981approximation}. Their algorithm was later simplified by Khuller and Thurimella 
\cite{khuller1993approximation}. A $2$-approximation can also be achieved by various other techniques developed later 
on, including a primal-dual technique by Goemans, Goldberg, Plotkin, Shmoys, Tardos, and 
Williamson~\cite{goemans_1994_improved}, and the iterative rounding technique by Jain~\cite{jain_2001_factor}.
For WTAP, improvements on the factor $2$ have only been obtained for restricted cases, including bounded diameter trees (see Cohen and Nutov~\cite{cohen20131+}), and instances where the ratio $\Delta$ between the highest and smallest cost is bounded by a constant (see Adjiashvili~\cite{adjiashvili2017beating}, and Fiorini, Gro\ss, K\"onemann, and Sanit\`a~\cite{fiorini_2018_approximating}). Moreover, very recently, Nutov~\cite{nutov_2017_tree} showed that even if $\Delta=O(\log n)$, where $n$ is the number of vertices, the factor of $2$ can be beaten.

Regarding TAP, the first algorithm beating the approximation guarantee of $2$ is due to Nagamochi~\cite{nagamochi_2003_approximation}, achieving an approximation factor of $1.815+\epsilon$. This factor was subsequently improved to $1.8$ by Even, Feldman, Kortsarz, and Nutov~\cite{even_2009_approximation}.
The best-known approximation guarantee is $1.5$, first obtained by Kortsarz and Nutov~\cite{kortsarz2016simplified}. These results are combinatorial in nature, displaying a contrast with the weighted version of the problem, where the majority of the results come from LP-based algorithms. With respect to dulling this contrast, Korsarz and Nutov~\cite{kortsarz2016lp} provided an LP-based $7/4$-approximation algorithm (the algorithm is combinatorial, but the analysis is LP-based). Moreover, Cheriyan and Gao~\cite{cheriyan_2017_approximating} presented a combinatorial $1.5$-approximation, whose analysis is based on an SDP obtained through Lasserre's hierarchy.
More recently, a $(5/3+\epsilon)$-approximation algorithm that is LP-based was given by Adjiashvili \cite{adjiashvili2017beating}. Fiorini, Gro\ss, K\"onemann, and Sanit\`a~\cite{fiorini_2018_approximating} were able to considerably strengthen Adjiashvili's approach to obtain an LP-based $(1.5+\epsilon)$-approximation by adding to Adjiashvili's LP a strong family of additional constraints, commonly known as $\{0,\frac{1}{2}\}$-Chv\'atal-Gomory constraints. Hence, up to an arbitrarily small constant $\epsilon >0$, which impacts the running time, this algorithm also matches the best factor of $1.5$ achieved by Kortsarz and Nutov~\cite{kortsarz2016simplified}, and Cheriyan and Gao~\cite{cheriyan_2017_approximating}.

Interestingly, despite the recent LP-based and SDP-based progress, relaxations for TAP 
are still badly understood. In 
particular, even the integrality gap of the most natural LP, known as the \emph{cut-LP}, was only very recently shown by 
Nutov~\cite{nutov_2017_tree} to be below $2$ (namely at most $2-\sfrac{2}{15}$), even though the best known lower bound 
is $1.5$, which is due to an example of Cheriyan, Karloff, Khandekar, and K{\"o}nemann~\cite{cheriyan2008integrality}. 
Hence, it is unknown whether the stronger LPs introduced in~\cite{adjiashvili2017beating} 
and~\cite{fiorini_2018_approximating} are any stronger than the canonical cut-LP.
Moreover, the fact that $1.5$-approximations (or $1.5+\epsilon$, respectively) have now been obtained by three 
independent approaches, one combinatorial~\cite{kortsarz2016simplified}, one 
SDP-based~\cite{cheriyan_2017_approximating}, and one LP-based~\cite{fiorini_2018_approximating}, together with the 
difficulty to obtain LPs or SDPs with an integrality gap strictly below $1.5$ (even with a non-algorithmic proof), has 
led to the question whether it is possible to obtain approximation factors strictly below $1.5$ for TAP (see, e.g., 
K\"onemann~\cite{konemann_2017_improved}, for an example where this question was raised explicitly). In this paper, we 
answer this question in the affirmative.

\subsection{Our Results}

Our main result is the first approximation algorithm beating the factor of $1.5$ for TAP.

\begin{theorem}\label{thm:main}
There exists a $1.458$-approximation algorithm for TAP. 
\end{theorem}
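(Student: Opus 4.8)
The plan is to combine a structural reduction with a new LP-rounding-plus-local-improvement scheme. First, extending the techniques of Adjiashvili~\cite{adjiashvili2017beating}, I would prove a black-box reduction showing that, at the cost of only a $(1+\epsilon)$ factor in the approximation ratio, it suffices to obtain a good approximation on \emph{$k$-wide} instances for a suitable constant $k=k(\epsilon)$. The defining feature we exploit is that the tree of a $k$-wide instance decomposes into a collection of subtrees, each having only $O(k)$ leaves, attached along a small backbone. Thus the whole problem is reduced to approximating TAP on such highly structured instances, and the target factor $1.458$ need only be established there (with a bit of slack absorbed by the $1+\epsilon$ loss of the reduction).

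Second, on a $k$-wide instance I would fix the \emph{right} LP relaxation: not the bare cut-LP, but one augmented with constraints tailored to the subtree decomposition — strong enough to certify both the cost of the per-subtree solutions and, crucially, the potential savings from rewiring, yet still decomposing cleanly over the subtrees. Solving this LP and splitting its mass along the subtrees, I would round each subtree independently; since each subtree has only $O(1)$ leaves, one can round within a $(1+\epsilon)$ factor of the subtree-restricted LP (even optimally, in the relevant regime) by known enumeration/rounding methods. Concatenating these per-subtree link sets produces a feasible solution for the whole instance whose cost is within $1.5$ of the LP optimum — matching, but not yet beating, the state of the art.

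Third — the crux — I would improve the concatenated solution by \textbf{rewiring}: show that many pairs of used links (typically a link internal to a subtree together with a link crossing between subtrees, or two crossing links) can be replaced by a single link without destroying $2$-edge-connectivity, each replacement saving one unit. To count how many \emph{disjoint} rewirings are simultaneously realizable, I would recast the question as a stochastic matching problem: each candidate rewiring is an edge of an auxiliary graph on the subtrees, its presence governed by the (randomized) rounding outcomes in the incident subtrees, and one must show that in expectation a constant fraction of a large matching survives. Plugging the expected number of rewirings back into the cost bound yields a saving of $\Omega(1)$ times the LP value below the $1.5$ barrier; tuning $k$ and $\epsilon$ then pushes the overall guarantee down to $1.458$, giving Theorem~\ref{thm:main}.

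The main obstacle I expect lies in this third step. One must (i) design the LP so that every candidate rewiring is ``paid for'' by dedicated LP mass while keeping the relaxation decomposable and per-subtree roundable — a delicate balance — and (ii) control the correlations in the stochastic matching so that a genuinely constant fraction of a large matching is realized rather than only a handful of rewirings. Ensuring that distinct rewirings do not interfere (using link-disjoint replacements and preserving global $2$-edge-connectivity), and handling links that cross among several subtrees, is where the bulk of the technical work — and the new ``rewiring as stochastic matching'' abstraction — will be concentrated.
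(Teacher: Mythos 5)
Your plan follows the paper's architecture almost exactly: a black-box reduction to $k$-wide instances (Theorem~\ref{thm:decomp}), a strengthened LP whose restriction to each principal subtree decomposes into a convex combination of shadow-minimal integral solutions, independent randomized rounding of the subtrees, and a rewiring step whose yield is lower-bounded by the expected size of a matching in a random graph (Lemmas~\ref{lem:wired-rounding} and~\ref{lem:matching}). At the level of strategy there is nothing different here.

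There is, however, a concrete gap in the accounting that the plan as stated cannot close. Concatenating the per-subtree roundings costs $x(\Lin)+2x(\Lcross)=(1+\across)\cdot x(L)$, which is only a $2$-approximation when the LP mass sits on cross-links --- not $1.5$ as you claim. Moreover, the rewiring gain one can certify is of order $x(\Lcr)^2/|\Vcr|$, which vanishes when the cross-link mass is small or is carried by links incident to degree-$2$ (non-critical) vertices; so rewiring alone does not even recover the factor $1.5$, let alone beat it. The missing ingredient is a second, complementary rounding to fall back on: the paper returns the better of the rewired solution and the $\{0,\frac{1}{2}\}$-Chv\'atal--Gomory rounding of Fiorini et al.\ (Corollary~\ref{cor:chvatal-rounding}), whose guarantee $2-\across-\aup$ improves precisely when cross-link mass is large, and the two regimes are tied together by Lemma~\ref{lem:up} ($x(\Lup)\geq x(\Lncr)$), which shows that cross-link mass on non-critical vertices forces up-link mass and hence helps the CG rounding. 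The final constant comes from optimizing the minimum of the two guarantees over the mass distribution ($2(\sqrt{3}-1)$ in the basic analysis, $\sqrt{34}/4<1.458$ after refinement). A smaller correction: the rewirings used replace two \emph{cross}-links by a single cross-link between active endpoints; replacing an in-link together with a cross-link, as you suggest, is not part of the scheme and would require a separate feasibility argument.
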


A first step in our approach exploits key observations done by Adjiashvili~\cite{adjiashvili2017beating} to reduce to a very structured TAP instance, which we call $k$-wide, and is defined as follows.
\begin{definition}[$k$-wide instance, principal subtree]
Let $k\in \mathbb{Z}_{\geq 1}$. We say that a tree $T=(V,E)$ is \emph{$k$-wide}, if there exists a vertex $r\in V$, which we also call \emph{root}, with the following property.
Any subtree of $T$ consisting of $r$, one of its children $u$, and all descendants of $u$ has at most $k$ leaves; moreover, these subtrees are called \emph{principal subtrees}.
\end{definition}

Starting with the work of Adjiashvili~\cite{adjiashvili2017beating}, several recently developed approaches \cite{fiorini_2018_approximating,nutov_2017_tree} can be interpreted as improving approximation factors and LP formulations for $O(1)$-wide TAP instances, and propagating such improvements to general TAP instances.
In a first step, we provide a black-box reduction of general TAP to $O(1)$-wide TAP instances. In previous approaches based on Adjiashvili's technique, there remained some entanglement between improvements on approximating $O(1)$-wide TAP instances and the propagation of these improvements to general TAP instances. Our statement below avoids such entanglements and allows for a simpler presentation of our main technique.
\begin{theorem}\label{thm:decomp}
Let $k\in \mathbb{Z}_{\geq 1}$, $\alpha \geq 1$, and $\mathcal{A}$ be an algorithm that is an $\alpha$-approximation 
for TAP on $k$-wide instances. Then there is an $(\alpha+O(\sfrac{1}{\sqrt{k}}))$-approximation algorithm for TAP, making a 
polynomial number of calls to $\mathcal{A}$ and performing further operations that take time polynomially bounded in the 
input size.
This reduction also works for randomized algorithms, where the approximation guarantees hold in expectation.
\end{theorem}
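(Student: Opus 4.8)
The plan is to reduce an arbitrary TAP instance to a collection of $k$-wide sub-instances via a laminar decomposition of the tree, run $\mathcal{A}$ on each, and stitch the outputs together. As preprocessing I would root $T$ at an arbitrary vertex $r$ and replace $L$ by its shadow-closure (add, for every link $\{a,b\}$, all ``sub-links'' whose tree path is contained in that of $\{a,b\}$); this standard step changes neither the problem nor $\OPT$, since any sub-link is dominated by its parent link. Throughout I use the elementary bound $\OPT \geq \lambda/2$, where $\lambda$ is the number of leaves of $T$: every leaf edge must be covered by a link incident to that leaf, and each link is incident to at most two leaves. Consequently an additive surplus of $O(\lambda/\sqrt{k})$ links over $\OPT$ amounts to the claimed multiplicative $(1+O(1/\sqrt{k}))$ factor.

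The combinatorial core is a decomposition computed greedily from the leaves upward: process vertices bottom-up and, whenever the current subtree hanging below some vertex $v$ --- with all pieces already carved out contracted to single leaves --- has more than $\sqrt{k}$ leaves, take the \emph{deepest} such $v$ and declare a new piece rooted at $v$, then contract it. Since $v$ is deepest, each child of $v$ carries at most $\sqrt{k}$ leaves in the current tree, so every principal subtree of the new piece has at most $\sqrt{k}\le k$ leaves and the piece is $k$-wide. Forming the piece collapses more than $\sqrt{k}$ current leaves into one, so the telescoping count of leaf reductions bounds the number of pieces by $O(\lambda/\sqrt{k})$; when the process halts, the residual top part has at most $\sqrt{k}$ leaves and becomes the final piece. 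This yields a laminar family $\mathcal{P}$ of $O(\lambda/\sqrt{k})$ $k$-wide pieces together with the induced laminar partition of the edges of $T$: each edge is an internal edge of exactly one piece, where inside a piece every maximal sub-piece is contracted to a leaf.

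For each $P\in\mathcal{P}$ I would form the sub-instance whose tree is $P$ with all maximal sub-pieces contracted to leaves, rooted at the top vertex of $P$, and whose link set consists of all links of $L$ with both endpoints inside $P$ (an endpoint inside a sub-piece $P'$ replaced by the leaf for $P'$), together with, for every link of $L$ with exactly one endpoint inside $P$, its shadow onto the top vertex of $P$ --- a genuine element of $L$ by shadow-closure. Each sub-instance is $k$-wide, so $\mathcal{A}$ returns a feasible augmentation; lifting each returned link to the link of $L$ it is a shadow of, and taking the union $U$ of these over all $P$, gives the final solution. Feasibility of $U$ for $T$ is a check along the laminar structure: each edge of $T$ is internal to exactly one piece and is covered there, a covered ``contracted-leaf edge'' corresponds to the original cut edge above some sub-piece, which is covered by the crossing link whose shadow was lifted, and lifting a link only enlarges its covered path.

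Since $|U|\le\sum_{P}|\mathcal{A}(P)|\le\alpha\sum_{P}\OPT(P)$, everything reduces to proving $\sum_{P}\OPT(P)\le(1+O(1/\sqrt{k}))\,\OPT$. Projecting a fixed optimum $U^*$ of $T$ onto a piece is feasible for that piece, but a link $\{a,b\}\in U^*$ gets charged more than once: once to the smallest piece containing both endpoints (its ``home''), and once --- through the added shadow --- to every piece lying strictly inside the home that still contains one of $a,b$. Hence the surplus equals $\sum_{\{a,b\}\in U^*}(\text{number of cut edges separating }a\text{ from }b\text{ below the home})$, and the whole theorem hinges on bounding this by $O(\lambda/\sqrt{k})$. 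This is the step I expect to be the genuine obstacle, and the place where the decomposition must be chosen more carefully than the naive greedy above --- plausibly by randomizing the leaf-mass thresholds so that, for every fixed link, the expected number of cut edges it crosses is $O(1/\sqrt{k})$, and then summing over $U^*$ using $\OPT\ge\lambda/2$. Such randomization is also why the statement is phrased for randomized $\mathcal{A}$ with guarantees in expectation: by linearity of expectation $\E|U|\le\alpha\,\E\!\big[\sum_P\OPT(P)\big]\le(\alpha+O(1/\sqrt{k}))\OPT$. Assembling the decomposition lemma, the feasibility check, and the surplus bound then gives the reduction with $O(\lambda/\sqrt{k})=O(n)$ calls to $\mathcal{A}$ and polynomial overhead.
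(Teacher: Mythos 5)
There is a genuine gap, and it sits exactly where you flagged it: the surplus bound $\sum_P \OPT(P)\le (1+O(1/\sqrt{k}))\,\OPT$ is not just ``the step I expect to be the obstacle'' --- for a decomposition chosen purely by leaf counts it is \emph{false}, and no randomization of thresholds can repair it. Concretely, take a root $r$ with leaf-children $a_1,\dots,a_m$ and one non-leaf child $w$, which in turn has leaf-children $b_1,\dots,b_m$, and let the links be $\{a_i,b_i\}$ for $i\in[m]$ (plus shadows). Here $\OPT=m$ and every link of every optimal solution crosses the edge $e=\{r,w\}$. For $m>k$ the tree is not $k$-wide with respect to any root, so every decomposition into $k$-wide pieces must place a piece boundary that separates the $a_i$'s from the $b_i$'s, and your greedy (for any threshold) cuts at $e$; the two resulting sub-instances then each have optimum $m$, so $\sum_P\OPT(P)=2\,\OPT$. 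The surplus per boundary edge is the number of optimal links crossing it, which can be $\Omega(\OPT)$, and summing ``$O(1/\sqrt{k})$ expected cuts per link'' does not help when a single forced cut is this expensive. This is precisely why the paper's decomposition (Lemma~\ref{lem:decompALaDavid}) is guided by an LP solution $x\in\Pi_T$ rather than by leaf counts: it splits only at \emph{$\gamma$-light} edges, i.e.\ edges $e$ with $x(\cov(e))\le \gamma\cdot x(\cov(E(e,\cdot))\setminus\cov(e))$ on both sides with $\gamma=1/\sqrt{k}$, so the total double-counted LP mass telescopes to $O(x(L)/\sqrt{k})$; the edges that are crossed by a lot of LP mass ($\zeta$-heavy edges, $\zeta=\sqrt{k}/4$) are never split on but instead collected into the sets $H_i$, contracted into the roots of the pieces, and covered separately by $O(x(L)/\sqrt{k})$ extra links. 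In the example above, $e$ is heavy, gets contracted, and the whole tree collapses to a single $1$-wide star --- no doubling occurs. Your proposal has no analogue of the $H_i$ mechanism, and $k$-wideness of your pieces is argued only via leaf counts, whereas in the paper it follows from the interplay of lightness and heaviness ($x(\cov(E(e,v)))\le \zeta/\gamma+\zeta\le k/2$, hence at most $k$ leaves).

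A second, smaller divergence: because the paper's decomposition depends on an LP point and the relevant inequality is $\sum_i x(\cov(E_i))\le(1+O(1/\sqrt{k}))\,x(L)$ rather than a statement about $\sum_P\OPT(P)$, the paper still has to produce a point $x$ with $x(L)\le\OPT$ \emph{and} certified lower bounds $x(\cov(E_i\setminus H_i))\ge |L_i|/\alpha$ on every piece. This is done by running the ellipsoid method on the polytope $\Pi_T(\nu,k)$ with a \emph{partial} separation oracle that uses $\mathcal{A}$ itself to generate cutting planes (Lemma~\ref{lem:boostedDecomp}); either the ellipsoid method certifies a feasible $x$, or it stalls on an unseparated point, and both outcomes yield the desired $x$ and link sets $L_i$. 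Had your combinatorial surplus bound been true you could have bypassed all of this, but since it is not, some version of this LP machinery (or at least the light/heavy edge decomposition relative to a fractional solution) appears unavoidable. Finally, the ``in expectation'' clause in the theorem refers to $\mathcal{A}$ being randomized, not to a randomized decomposition; the paper handles it by repeating $\mathcal{A}$ to get an $(\alpha+\delta)$-approximation with high probability before using it inside the separation oracle.
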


Finally, our main contribution, which leads to Theorem~\ref{thm:main} due to Theorem~\ref{thm:decomp}, is a new approximation algorithm for $O(1)$-wide instances, whose existence is formally stated in the following theorem.

\begin{theorem}\label{thm:mainKWide}
There exists an approximation algorithm for $O(1)$-wide TAP instances with approximation guarantee $\sfrac{\sqrt{34}}{4} < 1.458$.
\end{theorem}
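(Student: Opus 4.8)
The plan is to combine a principal-subtree decomposition with a carefully strengthened LP, round the LP locally inside each principal subtree, and then shave off a constant fraction of the links via a global \emph{rewiring} step. Fix a root $r$ witnessing $k$-wideness for a constant $k$, and let $T_1,\dots,T_m$ be the principal subtrees. Links split naturally into \emph{in-links} (both endpoints inside one $T_i$), \emph{up-links} (one endpoint equal to $r$), and \emph{cross-links} (endpoints in two different $T_i$, hence passing through $r$). Since each $T_i$ has at most $k=O(1)$ leaves, all quantities associated to a single principal subtree --- in particular the family of tree edges to be covered inside it, and the candidate link patterns --- are of constant size, so local enumeration and rounding run in polynomial time. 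I would formulate an LP with the usual cut-covering constraints together with additional strengthening (in the spirit of the $\{0,\tfrac12\}$-Chv\'atal--Gomory inequalities used by Fiorini et al., and/or subtree-local valid inequalities), chosen so that: its value is at most $\OPT$; its restriction to each $T_i$ admits a rounding covering the edges strictly inside $T_i$ with controlled cost; and enough fractional up/cross mass is present globally to later justify many rewirings.

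Next, I would round the LP restriction inside each $T_i$ independently, obtaining link sets $U_i$ covering all internal edges of $T_i$, and then patch up the coverage of the edges incident to $r$ with a few extra (up-type) links; the union is a feasible TAP solution $U$. Charging $|U|$ against the LP value gives some base factor $\beta$; done naively (independent rounding, no rewiring) one would only recover something around $1.5$, matching prior work. The key observation is that $U$ is typically \emph{redundant}: whenever two subtrees $T_i$ and $T_j$ each contribute a separate link that ``reaches toward $r$'', a single suitable cross-link (or up-link) between them can discharge both obligations at once, so one link may be deleted. Whether this rewiring is available for a given pair $(i,j)$ is a random event determined by the outcomes of the local roundings in $T_i$ and $T_j$ together with the link set $L$.

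I would then abstract the collection of feasible rewirings as a graph $H$ on $\{1,\dots,m\}$, with one vertex per principal subtree and one edge per available rewiring; a set of rewirings that can be applied simultaneously corresponds to a matching in $H$, since each subtree can be rewired at most once. The LP guarantees that a constant fraction of the subtrees are, with constant probability over the local rounding, in a ``rewirable'' state and have a suitable partner, so $H$ is a random graph with $\Omega(m)$ expected edges concentrated in the right places. The crux of the whole argument is to turn this into an $\Omega(m)$ lower bound on the expected size of a \emph{maximum} matching of $H$: one must deal with correlations between the per-pair events, with a single link possibly realizing many rewirings, and with the stochastic appearance of edges. I would phrase this as a self-contained statement about a stochastic matching problem (which may be of independent interest) and prove it by, e.g., an LP-rounding or greedy argument, or a second-moment/concentration argument showing that a large matching exists with constant probability and hence in expectation.

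Finally, applying the rewirings yields a solution of size $|U|$ minus the number of rewirings; writing the base factor and the fraction of saved links as functions of the single rounding threshold $\tau$ and balancing the two competing contributions pins down the optimal $\tau$ and gives the claimed guarantee $\sfrac{\sqrt{34}}{4}<1.458$. Throughout, $O(1)$-wideness is what keeps all per-subtree computations polynomial and all per-subtree parameters bounded by constants. I expect the matching lower bound to be the main obstacle; the LP design is delicate, but once the right inequalities are identified the accounting should be routine.
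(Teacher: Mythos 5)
Your proposal reproduces the paper's high-level architecture --- a strengthened LP, independent rounding of each principal subtree, and a rewiring step whose gain is quantified by a matching in a stochastic graph --- but it leaves unproven precisely the two steps on which the whole argument rests. The first and most serious gap is the matching lower bound itself: you explicitly defer it (``I expect the matching lower bound to be the main obstacle'') and only gesture at possible techniques. This bound is the paper's central technical contribution. The actual argument is not a second-moment or concentration argument: one takes the fractional cross-link mass $x$ restricted to the graph of critical vertices, replaces it by a vertex $y$ of the transportation polytope $\{y\ge 0: y(\delta(v))=x(\delta(v))\ \forall v\}$ (so that $|\supp(y)|\le |V|$), builds a matching $M$ greedily by decreasing $y$-weight, and shows $\sum_{\{u,v\}\in M} y(\delta(u))\,y(\delta(v)) \ge \sum_e y(e)^2 \ge y(E)^2/|V|$ via Cauchy--Schwarz. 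Without some such argument your claim that the random graph of available rewirings admits a large \emph{maximum} matching in expectation is an assertion, not a proof. Note also that the paper's matching lives on \emph{active tree vertices}, not on one node per principal subtree as in your graph $H$; collapsing each subtree to a single node caps the number of rewirings per subtree at one, which would weaken the bound.

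The second gap is that the matching lemma only applies because the LP is engineered to produce the right distribution of active vertices: the paper adds constraints forcing the LP restricted to each subtree to be a convex combination of \emph{shadow-minimal} integral solutions. Shadow-minimality guarantees that each active vertex has a unique incident cross-link in its local solution, which is what makes $\Pr[v\in A]$ equal exactly the fractional cross-link mass at $v$, and independence across subtrees gives the product form for $\Pr[u,v\in A]$. Your LP description (``cut constraints plus CG cuts and/or subtree-local valid inequalities \ldots chosen so that enough fractional up/cross mass is present'') does not identify this mechanism, and without it the rewiring graph has no controllable distribution. Finally, the numerical endgame is not a single-threshold optimization: one takes the minimum of the CG-rounding bound $2x(\Lin)-x(\Lup)+x(\Lcross)$ and the rewiring bound, uses the inequality $x(\Lup)\ge x(\Lncr)$ to handle cross-links with non-critical endpoints, and maximizes over the adversary's split of LP mass; reaching $\sfrac{\sqrt{34}}{4}$ rather than $2(\sqrt{3}-1)\approx 1.465$ additionally requires a refined matching bound that treats leaf-leaf cross-links separately. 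None of this accounting is routine once the inequalities are fixed, and none of it appears in your sketch.
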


For simplicity, the above results are stated only for unweighted instances. However, both Theorem~\ref{thm:decomp} and Theorem~\ref{thm:mainKWide} can be extended, in a weaker form, to WTAP instances where the ratio of the highest cost $c_{\max}$ to the lowest cost $c_{\min}$ is bounded by $\Delta$.
These extensions, on which we expand in Appendix~\ref{sec:WTAP}, show that we can beat the approximation factor of $1.5$ 
even for such instances by a constant that depends on $\Delta$.
 
\subsection{Brief Discussion of Used Techniques}

The black-box reduction presented by Theorem~\ref{thm:decomp} can be obtained by a careful combination of known techniques, including core reduction ideas by Adjiashvili~\cite{adjiashvili2017beating} and the use of the ellipsoid algorithm on a well-defined LP with a ``partial'' separation oracle based on an approximation algorithm for $O(1)$-wide instances. Ellipsoid-based techniques with such partial separation oracles have been employed successfully in prior work, including in the context of facility location~\cite{carr_2000_strengthening, levi_2008_approximation, an_2017_lp-based}, and very recently, Nutov~\cite{nutov_2017_tree} presented an elegant application of it by showing that a certain LP has integrality gap below $2$ for weighted TAP with bounded ratio between highest and lowest cost.

To show our main technical result, i.e., a $\frac{\sqrt{34}}{4}$-approximation for $O(1)$-wide instances as claimed by
Theorem~\ref{thm:mainKWide}, we introduce a new technique to improve solutions, which we call \emph{rewiring}.
More precisely, a common approach in algorithms to round solutions to relaxations of TAP is to \emph{split} certain links into two smaller
ones that cover the same edges. This approach is particularly well-suited for $O(1)$-wide instances where one can break the TAP problem into
independent ones over the principal subtrees, each of which only has $O(1)$-many leaves, by splitting all links that go over the root. We
show that, when starting with an appropriate LP, then after independently rounding the subtrees, one can further improve by identifying
pairs of links across different principal subtrees that can be replaced by a single link. We show that this rewiring process can be reduced
to a certain type of matching problem. We reduce the question of whether one can make a substantial improvement through rewirings, to a
question about the expected cardinality of maximum matchings in a certain random graph, for which we can show that large matchings always
exist in expectation.

\paragraph{Outline of the Paper.}
Section~\ref{sec:wide} proves Theorem~\ref{thm:mainKWide} by presenting our approximation algorithm for $O(1)$-wide TAP 
instances. To better highlight our main ideas, we first present a randomized approximation algorithm with an 
approximation factor of $1.465$, and expand on the derandomization in Appendix~\ref{sec:derandomization}, and on the 
better approximation factor stated in Theorem~\ref{thm:mainKWide} in Appendix~\ref{sec:improved-apx}. The black-box 
reduction to $k$-wide instances is contained in Appendix~\ref{sec:reductionToWideTrees}, modulo one technical result 
that closely follows a derivation by Adjiashvili~\cite{adjiashvili2017beating}. Due to space constraints, we defer a 
formal proof of this result to the long version of the paper.
Finally, Appendix~\ref{sec:calculus} contains some relatively standard but technical proofs that help us bound the 
approximation ratio of our procedure for $k$-wide instances.  
Finally, Appendix~\ref{sec:WTAP} contains a discussion on how to extend our results to weighted TAP instances with a 
bounded ratio between maximum and minimum link cost.

\paragraph{Basic Preliminaries.}

Consider a TAP instance $(T,L)$ on a tree $T=(V,E)$ with links $L\subseteq \binom{V}{2}$. For a link $\ell\in L$ we denote by $P_{\ell}\subseteq E$ the unique path between the endpoints of $\ell$ in $T$. Hence, TAP can be rephrased as the following covering problem
\begin{equation*}
\min\left\{|U| \;\middle\vert\; U\subseteq L, E=\cup_{\ell\in U}P_{\ell}\right\}\enspace.
\end{equation*}
For an edge $e\in E$, we denote by $\cov(e)=\{\ell\in L \mid e\in P_{\ell}\}$ the set of all links covering $e$. More generally, for $U\subseteq E$, we denote by $\cov(U)=\cup_{e\in U}\cov(e)$ all links that cover at least one edge of $U$.
The classical \emph{cut-LP} seeks to minimize $x(L)\coloneqq \sum_{\ell\in L}x_\ell$ over all $x$ in the polytope
\begin{equation*}
\Pi_T \coloneqq \left\{ x\in [0,1]^L  \;\middle\vert\; x(\cov(e)) \geq 1 \;\; \forall e\in E\right\}\enspace.
\end{equation*}

Moreover, for an integer $k\in \mathbb{Z}_{\geq 1}$, we use the notation $[k]\coloneqq \{1,\ldots, k\}$.

\section{Approximating $k$-Wide TAP}\label{sec:wide}

In this section, we prove Theorem~\ref{thm:mainKWide} by presenting our approximation algorithm for TAP on $k$-wide instances. We start in Section~\ref{sec:lp} by presenting a novel efficiently solvable LP relaxation for $k$-wide TAP instances. In Section~\ref{sec:algo}, we present a rounding algorithm for its solution, which is based on performing rewirings.
Interestingly, the number of rewirings can be lower bounded by the expected size of a matching on a particular type of random graph. In Section \ref{sec:matching}, we show that the latter expected size is large enough, hence implying our improved approximation factor as discussed in Section \ref{sec:analysis}.
Incidentally, this also shows that our LP relaxation has an integrality gap below $1.5$ for $k$-wide instances. This is the first LP relaxation which is known to have this property.

Throughout this section, we assume that we are dealing with TAP instances $(T,L)$ that are \emph{shadow-complete}. This means that for every link $\ell\in L$, and any two vertices $u,v$ on the path $P_\ell$, we have $\{u,v\}\in L$. In this case we have $P_{\{u,v\}}\subseteq P_{\ell}$, and $\{u,v\}$ is called a \emph{shadow} of $\ell$. Clearly, one can assume without loss of generality that a TAP instance is shadow-complete by introducing all shadows of links. This leads to an equivalent problem because any solution using a shadow $\ell'$ of an original link $\ell$ can be modified to another solution by replacing $\ell'$ by $\ell$, without increasing the number of links in the solution.

Moreover, we will consider different rounding algorithms whose approximation guarantees depend on the type of links involved. Given a $k$-wide TAP instance $(T,L)$ with root $r$, we partition $L$ into $\Lcross\cup \Lin$, where $\Lcross$, called \emph{cross-links}, are all the links whose endpoints are different from the root and are in different principal subtrees. The remaining set $\Lin$, called \emph{in-links}, are all the links with both endpoints in the same principal subtree. Moreover, $\Lup\subseteq\Lin$, called \emph{up-links}, are all the in-links $\{u,v\}\in\Lin$ such that one of the endpoints of $\{u,v\}$, say $u$, lies on the path from $r$ to $v$. We sometimes also use the notions of cross-links, in-links, and up-links with respect to some vertex $r$ for instances that are not $k$-wide.

\subsection{A New LP Relaxation for $k$-Wide TAP}
\label{sec:lp}

Recall that the cut-LP has integrality gap at least $1.5$~\cite{cheriyan2008integrality}.\footnote{Notice, also for $k$-wide instances, the cut-LP can have an integrality gap arbitrarily close to $1.5$ (for large enough $k$). This follows by the fact that a principal subtree can be any TAP instance with at most $k$ leaves. Hence, for large enough $k$, any integrality gap instance is a $k$-wide TAP.}
We thus need to strengthen it to beat this factor.
We do this by adding two sets of constraints.
We start with $\{0,\frac{1}{2}\}$-Chv\'atal-Gomory cuts, which were introduced in the context of TAP by Fiorini et al.~\cite{fiorini_2018_approximating}. They are described by the following polytope: 
\begin{equation*}
\Pi_{T}^{\CG} = \left\{
x\in [0,1]^L \;\middle\vert\;
x(\pi(S))\geq \frac{|\delta_E(S)|+1}{2}\quad \forall S\subseteq V \text{ with } |\delta_E(S)| \text{ odd}
\right\}\enspace,
\end{equation*}
where $\delta_E(S)$ are all edges with precisely one endpoint in $S$, and $\pi(S)$ is the multiset of links whose corresponding paths intersect $\delta_E(S)$, where the multiplicity of any link $\ell$ is $\lceil | P_{\ell} 
\cap 
\delta_E(S)|/2\rceil$. These constraints imply the cut-LP constraints.\footnote{The fact that the cut-LP constraint for any edge $e\in T$ is implied by $\Pi_T^{\CG}$, and hence $\Pi_T^{\CG}\subseteq \Pi_T$, follows by considering the set $S$ corresponding to the vertex set of one of the two connected components of $T$ when removing the edge $e$.}
Moreover, one can efficiently separate over $\Pi_T^{\CG}$ (see~\cite{fiorini_2018_approximating}).
A key property of $\Pi_T^{\CG}$ shown by Fiorini et al.~\cite{fiorini_2018_approximating}, is that points in $\Pi_T^{\CG}$ can be rounded losslessly on a particular type of instance, as summarized below.

\begin{theorem}[\cite{fiorini_2018_approximating}]\label{thm:fioriniGCexact}
Let $(T,L)$ be a TAP instance, let $x\in \Pi_T^{\CG}$, and let $r\in V$. If every link $\ell\in \supp(x)$ is either
\begin{enumerate*}[label=(\roman*)]
\item an uplink with respect to $r$, or
\item a cross-link with respect to $r$,
\end{enumerate*}
then one can losslessly round $x$, i.e., a solution $R\subseteq L$ to the TAP problem with $|R|\leq x(L)$ can be efficiently obtained. This even holds for WTAP.
\end{theorem}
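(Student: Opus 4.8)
The statement is equivalent to showing that the face $F$ of $\Pi_T^{\CG}$ obtained by setting $x_\ell=0$ for every link $\ell\in L$ that is neither an uplink nor a cross-link with respect to $r$ is \emph{integral} in the weak sense that every $x\in F$ dominates the indicator vector of some (possibly redundant) TAP solution $R$ with $c(R)\le c^\top x$; one then simply outputs such an $R$ of minimum cost, and the unweighted claim $|R|\le x(L)$ is the special case $c\equiv 1$. My plan is to prove this weak integrality by identifying $F$, after contracting the parts of $T$ that are irrelevant to the given support, with (the dominant of) an \emph{edge‑cover polytope} of an auxiliary graph $G$, and then invoking Edmonds' description of that polytope --- the crux being that the $\{0,\frac{1}{2}\}$‑Chv\'atal--Gomory inequalities defining $\Pi_T^{\CG}$ are exactly the odd‑set inequalities of $G$.

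\emph{Reduction to a simple instance.} I would first preprocess $(T,L,x)$ as follows: contract every tree edge $e$ with $|\cov(e)|=1$ (its unique covering link has $x_\ell=1$ and can be committed, which decreases both sides of the target inequality by $c_\ell$); suppress every degree‑two vertex of $T$ that is not an endpoint of a support link (the two incident edges then have identical covering sets, so their cut‑constraints coincide); and, using the well‑known fact that the cut‑LP restricted to uplinks is integral because its constraint matrix is a network matrix, collapse inside each principal subtree the structure whose covering is handled purely by uplinks. Each of these operations only deletes cut‑constraints that are implied by the remaining ones, hence preserves membership in $\Pi_T^{\CG}$, and none of them creates links outside the uplink/cross‑link classes. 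After preprocessing, the only interaction between distinct principal subtrees is through cross‑links, the within‑subtree covering is governed by a totally unimodular (uplink) subsystem, and one is left with an edge‑cover‑type coupling; in the cleanest case every principal subtree has collapsed to a single edge, $T$ is a spider centered at $r$, and every remaining link covers at most two tree edges --- an uplink reaching $r$ covers one (a ``loop''), every other remaining link covers two (an ordinary graph edge).

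\emph{Conclusion via matching theory.} On this reduced instance, covering $E$ with links is literally the problem of choosing a minimum‑cost edge cover of the graph $G$ whose vertices are the tree edges incident to a leaf and whose edges are the remaining links (loops for uplinks reaching $r$). The edge‑cover polytope of $G$ is integral, and its non‑trivial facets are the odd‑set inequalities $x\bigl(E_G[W]\cup\delta_G(W)\bigr)\ge\lceil|W|/2\rceil$ for $W$ of odd size; under the correspondence above these are precisely the inequalities $x(\pi(S))\ge\lceil|\delta_E(S)|/2\rceil$ of $\Pi_T^{\CG}$, because for a spider $\delta_E(S)$ is just the set of edges to the leaves indexed by $S$, so $|\delta_E(S)|=|S|$ and $|P_\ell\cap\delta_E(S)|=|\{\text{leaf endpoints of }\ell\}\cap S|\le 2$, which makes the ceiling‑weighted multiplicities in $\pi(S)$ agree exactly with the coefficients of the edge‑cover inequality for $W=S$. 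Hence $F$ becomes, after the reduction, the integral edge‑cover polytope, so every $x\in F$ dominates a minimum‑cost edge cover, i.e.\ a TAP solution of cost at most $c^\top x$, as required; tracing this back through the (cost‑preserving) contractions gives the desired $R$, and the argument never used positivity or structure of $c$.

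\emph{Main obstacle.} The routine‑looking but genuinely delicate part is the reduction, and within it the handling of uplinks. For a support consisting only of cross‑links the edge‑cover picture is immediate; what needs care is that the uplinks internal to a principal subtree can be ``absorbed'' by contraction without spoiling either membership in $\Pi_T^{\CG}$ or the odd‑set/CG correspondence --- this is exactly where the hypothesis that $\supp(x)$ contains no links other than uplinks and cross‑links with respect to $r$ is essential, since it confines the only ``integral‑by‑itself'' (network‑matrix) structure to the interiors of the principal subtrees and leaves the cross‑links as the sole source of non‑integrality, which Edmonds' inequalities then cure. An alternative route avoiding the explicit reduction is to prove weak integrality of $F$ directly by uncrossing the tight CG sets at a fractional vertex; there the obstacle shifts to checking that the ceiling‑weighted cut functions $S\mapsto x(\pi(S))$ behave supermodularly along the uncrossing and that the odd‑boundary condition is inherited by the correct one of the two uncrossed pairs (the usual parity subtlety of matching uncrossing) --- again a step that relies on the uplink/cross‑link restriction to keep the quantities $|P_\ell\cap\delta_E(S)|$ under control.
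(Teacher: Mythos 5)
First, a point of order: this paper does not prove Theorem~\ref{thm:fioriniGCexact} at all --- it is imported verbatim from Fiorini et al.~\cite{fiorini_2018_approximating} and used as a black box (only its consequence, Corollary~\ref{cor:chvatal-rounding}, is derived here). So there is no in-paper proof to compare your attempt against; I can only assess the sketch on its merits relative to the known argument.

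Your sketch does identify the right combinatorial engine: on a spider centered at $r$ in which every remaining link covers at most two tree edges, TAP is literally a minimum-cost edge cover problem, the Edmonds--Johnson odd-set inequalities give an exact description of that polytope, and under the correspondence you describe they coincide with the constraints $x(\pi(S))\geq (|\delta_E(S)|+1)/2$ of $\Pi_T^{\CG}$ (all multiplicities being $1$ there). That endgame is sound. The genuine gap is the reduction from a general up-link/cross-link instance to that spider. The step ``collapse inside each principal subtree the structure whose covering is handled purely by uplinks'' is not a well-defined operation: for a fractional $x$, a typical edge of a principal subtree is covered by a mixture of up-links and cross-links, so there is no canonical sub-structure to hand off to the network-matrix argument, and the decision of how much of each subtree's vertical extent is delegated to up-links versus cross-links interacts with the parity (CG) constraints that couple different subtrees --- which is precisely the content of the theorem, not something the pure-up-link integrality can absorb. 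You explicitly flag this as the ``main obstacle'' and leave it unresolved (the alternative uncrossing route is likewise only flagged), and you concede that the spider picture is reached only ``in the cleanest case.'' A secondary unverified point: before the instance becomes a spider, links can meet $\delta_E(S)$ more than twice, so the multiplicities $\lceil |P_\ell\cap\delta_E(S)|/2\rceil$ exceed $1$, and one must check that the needed CG inequalities are inherited through your contractions and suppressions. As written, the proposal is a plausible outline whose decisive step is missing, so it does not yet constitute a proof of the theorem.
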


The above theorem leads to a natural rounding algorithm for a point $x\in \Pi_T^{\CG}$. This rounding procedure, which was suggested in~\cite{fiorini_2018_approximating}, works as follows when applied to the root of a $k$-wide TAP instance $T=(V,E)$ with links $L$. We first modify $x$ to avoid the use of in-links that are not up-links, which will bring us to the setting of Theorem~\ref{thm:fioriniGCexact}.
More precisely, for $\ell=\{u,v\}\in \Lin\setminus \Lup$, let $a\in V$ be the vertex closest to the root among the vertices on the path $P_{\ell}$. Then one can modify $x$ by increasing its value on $\{u,a\}$ and $\{v,a\}$ by $x(\ell)$, and setting the value of $x(\ell)$ to $0$. Doing this for all links in $\Lin\setminus \Lup$, a new point $y\in \Pi_T^{\CG}$ is obtained with support only on cross-links and up-links and such that $y(\Lcross)=x(\Lcross)$ and $y(\Lup) = 2x(\Lin) - x(\Lup)$. Finally, by applying Theorem~\ref{thm:fioriniGCexact} to $y$, one obtains the following.

\begin{corollary}[\cite{fiorini_2018_approximating}]\label{cor:chvatal-rounding}
Given a point $x\in \Pi_T^{\CG}$, we can compute in polynomial time a solution $\mathcal{A}$ such that
\begin{equation*}
|\mathcal{A}|\leq 2x(\Lin)-x(\Lup)+x(\Lcross)\enspace.
\end{equation*}
\end{corollary}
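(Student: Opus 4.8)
The plan is to make precise the construction outlined just before the statement. Starting from a point $x\in\Pi_T^{\CG}$, I would build $y$ as follows: for every link $\ell=\{u,v\}\in\Lin\setminus\Lup$, let $a$ be the vertex of $P_\ell$ closest to the root $r$ (equivalently, the least common ancestor of $u$ and $v$), increase the value on $\{u,a\}$ and on $\{v,a\}$ by $x(\ell)$, and then set the value on $\ell$ to $0$; all cross-links and all original up-links are otherwise left untouched. Two observations make this well defined and useful. First, shadow-completeness guarantees $\{u,a\},\{v,a\}\in L$, and since $\ell$ is an in-link these two links lie in the same principal subtree, so they are in-links; moreover $a$ is an ancestor of both $u$ and $v$, so each of $\{u,a\},\{v,a\}$ is an \emph{up}-link, and $a\notin\{u,v\}$ since otherwise $\ell$ itself would be an up-link. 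Hence $\supp(y)$ consists exclusively of up-links and cross-links with respect to $r$. Second, since $\{u,a\},\{v,a\}\in\Lup$, they are never themselves re-processed, so the order of processing is irrelevant and the elementary modifications simply accumulate.

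Next I would verify that $y$ still lies in $\Pi_T^{\CG}$; the only point that is not pure bookkeeping is preservation of the Chv\'atal-Gomory inequalities $x(\pi(S))\ge(|\delta_E(S)|+1)/2$. Fix $S\subseteq V$ with $|\delta_E(S)|$ odd, and consider one elementary modification associated with $\ell=\{u,v\}$ and $a$. Since $a$ lies on $P_\ell$, the path $P_\ell$ is the edge-disjoint union of $P_{\{u,a\}}$ and $P_{\{v,a\}}$ (they meet only in the vertex $a$), so
\[
|P_\ell\cap\delta_E(S)| = |P_{\{u,a\}}\cap\delta_E(S)| + |P_{\{v,a\}}\cap\delta_E(S)|\enspace,
\]
and, using $\lceil (p+q)/2\rceil\le\lceil p/2\rceil+\lceil q/2\rceil$ for nonnegative integers $p,q$, the multiplicity of $\ell$ in $\pi(S)$ is at most the sum of the multiplicities of $\{u,a\}$ and $\{v,a\}$. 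Therefore moving mass $x(\ell)$ off $\ell$ and onto both $\{u,a\}$ and $\{v,a\}$ cannot decrease $x(\pi(S))$, and iterating over all $\ell\in\Lin\setminus\Lup$ gives $y(\pi(S))\ge x(\pi(S))\ge(|\delta_E(S)|+1)/2$. (If one insists on the literal box constraints, any coordinate that would exceed $1$ can be capped; inspecting the proof of Theorem~\ref{thm:fioriniGCexact} shows that only the covering inequalities are used, so this is harmless.)

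Finally, I would track the objective. For each $\ell\in\Lin\setminus\Lup$ its value $x(\ell)$ is added twice among up-links and then zeroed, while cross-links and original up-links keep their values, so $y(\Lcross)=x(\Lcross)$, $y(\Lin\setminus\Lup)=0$, and $y(\Lup)=x(\Lup)+2\bigl(x(\Lin)-x(\Lup)\bigr)=2x(\Lin)-x(\Lup)$; hence $y(L)=2x(\Lin)-x(\Lup)+x(\Lcross)$. Since every link in $\supp(y)$ is an up-link or a cross-link with respect to $r$, Theorem~\ref{thm:fioriniGCexact} applies to $y$ and yields, in polynomial time, a TAP solution $\mathcal{A}$ with $|\mathcal{A}|\le y(L)=2x(\Lin)-x(\Lup)+x(\Lcross)$, which is the claimed bound. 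Essentially nothing here is hard; the one place that needs care is the preservation of the Chv\'atal-Gomory inequalities, and within it the genuinely load-bearing facts are the edge-disjointness of the two halves of $P_\ell$ and the subadditivity of $p\mapsto\lceil p/2\rceil$.
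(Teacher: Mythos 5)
Your proposal is correct and follows exactly the route the paper takes (in the paragraph preceding the corollary): split each in-link that is not an up-link at its topmost vertex into two up-links, check that the resulting point $y$ still satisfies the Chv\'atal--Gomory constraints and has the stated objective value, and invoke Theorem~\ref{thm:fioriniGCexact}. You are in fact more careful than the paper on the two points it glosses over --- the preservation of the CG inequalities via edge-disjointness of $P_{\{u,a\}}$ and $P_{\{v,a\}}$ together with subadditivity of $p\mapsto\lceil p/2\rceil$, and the possible violation of the upper bound $y_\ell\le 1$ --- and your handling of both is sound.
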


On top of the Chv\'atal-Gomory constraints, we introduce a second set of constraints that deviates from prior work, but can be interpreted as a strengthening of so-called \emph{bundle constraints} introduced in~\cite{adjiashvili2017beating}, which were also used (with non-trivial technical adjustments) in~\cite{fiorini_2018_approximating,nutov_2017_tree}. Bundle constraints make sure that when restricting an LP solution to any subtree of $T$ with a constant number of leaves, then the value of this restricted LP solution is not lower than that of an optimal solution covering the same subtree.
Being able to losslessly round an LP solution on subtrees with $k=O(1)$ leaves, by replacing the LP solution with an integral solution of no higher cost, is of particular interest in the context of $k$-wide TAP instances due to the following. One can first independently consider each principal subtrees of the $k$-wide instance and round it, and then take the union of the links used in the solutions to the different principal subtrees.
As first shown in~\cite{adjiashvili2017beating}, this allows for obtaining a solution with at most $x(\Lin)+2x(\Lcross)$ links. By returning the better of this solution and the one guaranteed by Corollary~\ref{cor:chvatal-rounding}, a $1.5$-approximation is obtained for $k$-wide instances. Finally, by Theorem~\ref{thm:decomp}, this implies a $(1.5+\epsilon)$-approximation for any TAP instance. We highlight that this is a nice example of how Theorem~\ref{thm:decomp} allows for providing a more elegant description of existing techniques, because we can restrict ourselves to $k$-wide instances.

We will provide techniques that improve on this rounding procedure. More precisely, our main goal now is to introduce additional constraints such that if one rounds each principal subtree independently, then there is a way to further improve the solution after we take the union of all rounded links. For this, we want to be able to round each principal subtree in a well-structured way. More precisely, we first need to be able to decompose the LP solution, when restricted to one principal subtree, into a \emph{convex combination of integral solutions}. Moreover, we want that the links used in solutions appearing in this convex combination do not contain redundancies, which we formalize below through the notion of \emph{shadow-minimality} for link families.
\begin{definition}
Consider a TAP instance $(T,L)$.
Two links $\ell_1, \ell_2 \in L$ are \emph{shadow-minimal} if there exists no shadow $s$ of one link, say $\ell_1$, such that $P_{\ell_1} \cup P_{\ell_2} = P_s \cup P_{\ell_2}$. Moreover, a set of links $L'\subseteq L$ is \emph{shadow-minimal} if all links in $L'$ are pairwise shadow-minimal.
\end{definition}

The following basic observation shows that requiring shadow-minimality is not restrictive. In other words, there always exists an optimal solution that is shadow-minimal. This follows by the fact that, whenever we have a non-shadow-minimal solution, then some link can be shortened to one of its shadows without destroying feasibility.
\begin{observation}\label{obs:shadow-minimalOPT}
 Given a TAP instance $(T,L)$, and given some $L_1\subseteq L$, there exists a shadow-minimal set of links $L_2\subseteq L$ such that $|L_1|\geq|L_2|$, and $L_2$ covers the same edges as $L_1$. Furthermore, $L_2$ contains only links from $L_1$, or 
shadows thereof.
\end{observation}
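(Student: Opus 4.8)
The plan is to start from $L_1$ and, as long as the current link set fails to be shadow-minimal, repeatedly \emph{shorten} one offending link to an appropriate shadow of it. I would show that a single such step (i)~preserves the set of covered edges, (ii)~does not increase the number of links, (iii)~keeps every link either inside $L_1$ or a shadow of a link of $L_1$, and (iv)~strictly decreases an integer potential; hence the process terminates, and the final set is the desired $L_2$.

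In detail, maintain a set $L'\subseteq L$, initialized to $L':=L_1$. Suppose $L'$ is not shadow-minimal. Then there are two distinct links $\ell_1,\ell_2\in L'$ and a shadow $s$ of $\ell_1$ with $P_s\subsetneq P_{\ell_1}$ and $P_{\ell_1}\cup P_{\ell_2}=P_s\cup P_{\ell_2}$; here $P_s\subsetneq P_{\ell_1}$ (rather than $P_s=P_{\ell_1}$) because in a tree the edge set of a path determines its endpoints, and $s\in L$ since we work with shadow-complete instances. Replace $L'$ by $(L'\setminus\{\ell_1\})\cup\{s\}$. This does not increase $|L'|$, as we delete one link and insert at most one.

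For edge coverage, on the one hand $P_s\subseteq P_{\ell_1}$ gives $\cup_{\ell\in(L'\setminus\{\ell_1\})\cup\{s\}}P_\ell\subseteq\cup_{\ell\in L'}P_\ell$; on the other hand, using $\ell_2\in L'\setminus\{\ell_1\}$, we have $P_{\ell_1}\subseteq P_{\ell_1}\cup P_{\ell_2}=P_s\cup P_{\ell_2}\subseteq\cup_{\ell\in(L'\setminus\{\ell_1\})\cup\{s\}}P_\ell$, so the reverse inclusion also holds and the covered edge set is unchanged. (Intuitively: the edges of $P_{\ell_1}$ not already covered by $\ell_2$ are exactly those lying on $P_s$.) Moreover $s$ is a shadow of $\ell_1$, and inductively $\ell_1$ is in $L_1$ or a shadow of some link of $L_1$; since a shadow of a shadow of a link $\ell$ is again a shadow of $\ell$ (the two defining vertices still lie on $P_\ell$), $s$ too is in $L_1$ or a shadow of a link of $L_1$. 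Thus the invariant ``$L'\subseteq L_1\cup\{\text{shadows of links of }L_1\}$'' is maintained.

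Finally, for termination take $\Phi(L'):=\sum_{\ell\in L'}|P_\ell|\in\mathbb{Z}_{\ge 0}$. If $s$ already belonged to $L'\setminus\{\ell_1\}$, then $|L'|$ drops by one and $\Phi$ drops by $|P_{\ell_1}|\ge 1$; otherwise $\Phi$ drops by $|P_{\ell_1}|-|P_s|\ge 1$ because $P_s\subsetneq P_{\ell_1}$. Either way $\Phi$ strictly decreases, so after finitely many steps $L'$ becomes shadow-minimal, and this set is the required $L_2$. I do not expect a genuine obstacle in this argument; the only point deserving a moment's care is verifying that the local replacement does not destroy global coverage, which the inclusion chain above settles.
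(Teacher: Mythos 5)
Your proof is correct and follows exactly the argument the paper gives (the paper only sketches it in one sentence: repeatedly shorten a link violating shadow-minimality to a shadow without destroying feasibility); your version simply fills in the coverage-preservation check and adds an explicit potential function for termination.
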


To describe our constraint set, which allows for rounding principal subtrees in a shadow-minimal way, we present here a compact extended formulation for simplicity. It is also possible to achieve the same goal via an exponential-size set of constraints (and no extra variables). 

Fix a principal subtree $T_i = (V_i,E_i)$, $i\in [q]$. Conceptually, one could introduce a binary decision variable $y_S$ for any set $S\subseteq \cov(E_i)$ that covers all the edges of $T_i$ and is shadow-minimal, and then enforce the constraint of picking exactly one such set $S$ for each $T_i$. Clearly, there are exponentially many such sets $S$. We will show in the following that it suffices to focus on a polynomially small family of sets. Interestingly, even though we use shadow-minimality primarily for the rounding procedure to be introduced later, it also allows us to restrict the number of sets $S$ we have to consider, thus obtaining a family of polynomial size. To do so, we start by observing that shadow-minimal solutions of $k$-wide trees contain few cross-links:

\begin{lemma}\label{lem:minimal-cross}
	 Given a TAP instance $(T,L)$, where $T$ is a $k$-wide tree, consider a shadow-minimal set $L'\subseteq L$. Then, for each principal subtree $T_i$, $L'$ contains at most $k$ cross-links with an endpoint inside $T_i$.
\end{lemma}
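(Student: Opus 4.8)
The plan is to argue by contradiction: suppose some principal subtree $T_i$ contains at least $k+1$ cross-links of $L'$ with an endpoint inside $T_i$. Each such cross-link $\ell=\{u,v\}$ has exactly one endpoint in $T_i$ (the other lies in a different principal subtree, by definition of cross-links), so it makes sense to speak of its endpoint $u\in V_i$. Since $T_i$ has at most $k$ leaves, by pigeonhole two of these cross-links, say $\ell_1=\{u_1,v_1\}$ and $\ell_2=\{u_2,v_2\}$ with $u_1,u_2\in V_i$, must have their $T_i$-endpoints on the path from $r$ to a common leaf of $T_i$; equivalently, one of $u_1,u_2$ is an ancestor of the other in $T_i$. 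Say $u_1$ lies on the $r$--$u_2$ path (inside $T_i$).

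The key step is then to exhibit a violation of shadow-minimality. Because $\ell_1$ crosses the root, its path $P_{\ell_1}$ runs from $u_1$ up through $r$ and down into another subtree; in particular $P_{\ell_1}$ contains the entire portion of the $r$--$u_2$ path from $u_1$ up to $r$, i.e.\ the edges separating $T_i$'s upper part. Consider replacing $\ell_1$ by its shadow $s=\{u_2, w\}$, where $w$ is the non-$T_i$ endpoint $v_1$ of $\ell_1$ — this is a legitimate shadow since $u_2$ lies on $P_{\ell_1}$ (as $u_1$ is between $r$ and $u_2$, and $P_{\ell_1}$ contains $u_1$, $r$, hence the whole $u_1$--$u_2$ segment as well). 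I claim $P_{\ell_1}\cup P_{\ell_2} = P_s \cup P_{\ell_2}$. Indeed $P_s\subseteq P_{\ell_1}$ gives $\subseteq$ trivially once we also note the symmetric-difference edges are covered; for the reverse, the only edges of $P_{\ell_1}$ not in $P_s$ are those on the $u_1$--$u_2$ segment inside $T_i$, and these are all contained in $P_{\ell_2}$ because $\ell_2$'s path goes from $u_2$ up through $r$, hence also traverses the $u_1$--$u_2$ segment. So $\ell_1$ can be shortened to $s$ without changing the union of covered paths, contradicting shadow-minimality of $L'$.

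I should double-check the edge-case bookkeeping: the argument needs $u_1 \neq u_2$ (otherwise $\ell_1,\ell_2$ share their $T_i$-endpoint and a slightly different but easier shadow argument applies — shorten whichever link has the longer tail on the non-$T_i$ side, or note they have a common shadow), and it needs the relevant shadow to actually lie in $L$, which holds by the standing shadow-completeness assumption stated at the start of Section~\ref{sec:wide}. The main obstacle is getting the path-combinatorics exactly right — precisely identifying which edges of $P_{\ell_1}$ are "absorbed" by $P_{\ell_2}$ versus retained in the shadow $P_s$, and handling the ancestor relationship cleanly (including the case where $u_1=u_2$). Once that geometry is pinned down, the pigeonhole count over the $\le k$ leaves of $T_i$ immediately yields at most $k$ cross-links per principal subtree in any shadow-minimal solution.
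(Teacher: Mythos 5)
Your overall strategy---pigeonhole over the at most $k$ leaves of $T_i$ to find two cross-links $\ell_1=\{u_1,v_1\}$ and $\ell_2=\{u_2,v_2\}$ whose $T_i$-endpoints lie on a common root--leaf path, then derive a contradiction with shadow-minimality---is exactly the paper's (whose own proof is equally terse on the second step). However, your execution of the shadow-minimality step is wrong. With $u_1$ on the $r$--$u_2$ path, the path $P_{\ell_1}$ of the cross-link $\ell_1$ runs from $u_1$ \emph{upward} to $r$ and then down into a different principal subtree; it does \emph{not} contain the $u_1$--$u_2$ segment (that segment lies on the opposite side of $u_1$ from $r$). Hence $u_2\notin P_{\ell_1}$, and $s=\{u_2,v_1\}$ is not a shadow of $\ell_1$---rather, $\ell_1$ is a shadow of $s$. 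The containment you invoke, $P_s\subseteq P_{\ell_1}$, is therefore reversed, and replacing $\ell_1$ by $s$ would be a lengthening, which the definition of shadow-minimality does not forbid. (It is true that $P_s\cup P_{\ell_2}=P_{\ell_1}\cup P_{\ell_2}$ for your $s$, but since $s$ is not a shadow of either link this yields no contradiction.)

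The repair is short and handles $u_1=u_2$ uniformly: shorten $\ell_1$ to the genuine shadow $s=\{r,v_1\}$, which lies in $L$ by shadow-completeness. The edges of $P_{\ell_1}$ discarded in passing to $P_s$ are exactly those on the $u_1$--$r$ segment, and these all belong to $P_{\ell_2}$, because $P_{\ell_2}$ contains the entire $u_2$--$r$ path, which passes through $u_1$. Hence $P_{\ell_1}\cup P_{\ell_2}=P_s\cup P_{\ell_2}$, contradicting the shadow-minimality of $L'$. With this correction your argument coincides with the paper's.
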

\begin{proof}
	Suppose towards contradiction that there exists a principal subtree $T_i$ such that at least $k+1$ cross-links in $L'$ have an endpoint inside $T_i$. Since $T_i$ contains at most $k$ leaves, we have by the pigeonhole principle that there is one leaf $u$ such that the path from $u$ to $r$ in $T$ contains the endpoints of at least two cross-links, which contradicts the shadow-minimality of $L'$. 
\end{proof}

Let $\Lambda_i\subseteq 2^{\cov(E_i)}$ be the family of all shadow-minimal subsets of all cross-links with one endpoint in $V_i$. By Lemma~\ref{lem:minimal-cross}, each set $R\in \Lambda_i$ satisfies $|R|\leq k$. Hence, $\Lambda_i$ has polynomial size; more precisely, $|\Lambda_i| = O(|L|^k)$. For each $R\in \Lambda_i$, let $C(i,R)\subseteq L$ be a minimum cardinality set of links with both endpoints in $V_i$ that satisfies 
\begin{enumerate}[nosep,label=(\roman*),topsep=0.2em]
\item $R\cup C(i,R)$ is shadow-minimal, and
\item $R\cup C(i,R)$ covers $E_i$.
\end{enumerate}
Notice that such a set $C(i,R)$ can indeed be computed efficiently. The requirement that $R\cup C(i,R)$ covers $E_i$ corresponds to $C(i,R)$ being a TAP solution to the residual instance consisting of the tree $T_i$ after contracting the edges already covered by $R$. However, being a subtree of the $k$-wide tree $T_i$, this residual instance has no more than $k$ leaves, and TAP instances with constantly many leaves are known to be efficiently solvable (see, e.g.,~\cite{adjiashvili2017beating,nutov_2017_tree}).
To guarantee shadow-minimality of $R\cup C(i,R)$, one can first only consider links in the residual instance that do not lead to a violation of shadow-minimality with $R$, i.e., we delete in the residual instance any link $\ell$ such that $R\cup \{\ell\}$ is not shadow-minimal. Finally, any optimal solution to the residual instance that does not use any link $\ell$ such that $R\cup \{\ell\}$ is not shadow-minimal, can easily be transformed into a shadow-minimal one by shortening links if necessary.
Let $L_i^R = R\cup C(i,R)$.

Due to Observation~\ref{obs:shadow-minimalOPT} and Lemma~\ref{lem:minimal-cross}, there is an optimal solution to the $k$-wide TAP instance we consider that is of the form $L_1^{R_1} \cup \ldots \cup L_q^{R_q}$, where $R_i \in \Lambda_i$ for $i\in [q]$.
Indeed, consider some shadow-minimal optimal solution $L^*$; due to Observation \ref{obs:shadow-minimalOPT}, there exists at least one. Due to Lemma \ref{lem:minimal-cross}, we know $L^*$ contains at most $k$ cross-links with an endpoint inside any principal subtree $T_i$. Let $L^*_i$ be the cross-links of $L^*$ that have an endpoint in $T_i$; then $\bigcup_{i=1}^{q} (L^*_i \cup C(i,L^*_{i}))$ is an optimal solution of the desired form.

Our new LP constraints are based on this observation, and we have a variable $\lambda_i^R\geq 0$ for each principal subtree $T_i$ and each set $R\subseteq \Lambda_i$, where $\lambda_i^R=1$ is interpreted as including the links $L_i^R$ in the solution.
The local convex decomposition constraints we introduce ensure that $x$ and $\lambda$ are consistent, and that we only choose 
one set of links for each one of the $q$ principal subtrees in $T$:
\begin{equation}\label{eq:convDecConstraints}
\begin{array}{>{\displaystyle}rc>{\displaystyle}l@{\qquad}l}
x_\ell&=&\sum_{R\in\Lambda_i:\ell\in L_i^R}\lambda_{i}^{R},
  &\forall i\in [q]  \quad\forall \ell\in \cov(E_i),\\
\sum_{R\in\Lambda_i}\lambda_{i}^{R}&=&1, &\forall i\in [q],\\
\lambda_i^R &\geq &0 &\forall i \in [q] \quad \forall R\in \Lambda_i\enspace.
\end{array}
\end{equation}

As discussed, there must be at least one $x\in \{0,1\}^L$ that is the characteristic vector of an optimal solution and is feasible for the above constraints. Hence, the above constraints can be interpreted as a relaxation of the original TAP problem.

Finally, the feasible solutions of our LP consist of all tuples $(x,\lambda)$, where $x\in \mathbb{R}^L$ and $\lambda$ has $\sum_{i=1}^q |\Lambda_i|$ many components, one for each $i\in [q]$ and $R\in \Lambda_i$, such that $x\in \Pi_T^{\CG}$ and $(x,\lambda)$ satisfies~\eqref{eq:convDecConstraints}. Moreover, we minimize $x(L)$ over this polytope. For brevity, we call this resulting LP, the \emph{$k$-wide-LP}. One can efficiently optimize over the $k$-wide-LP for any $k=O(1)$, because there is an efficient separation oracle for $\Pi^{\CG}_T$, and there are polynomially many additional constraints and variables described by~\eqref{eq:convDecConstraints}.

We summarize that the key property we gain through our $k$-wide-LP compared to previous linear programs, is that any solution $x$ to the $k$-wide-LP, when restricted to the links $\cov(E_i)$ for any $i\in [q]$, can be written as a convex combination of shadow-minimal link sets, each covering $E_i$.

\subsection{The Rounding Algorithm}
\label{sec:algo}

We next present our rounding algorithm based on rewirings. Let $(x,\lambda)$ be an optimal fractional solution to the $k$-wide-LP. We return the better of two solutions obtained by two different rounding procedures. The first procedure is the one by Fiorini et al.~\cite{fiorini_2018_approximating}, whose guarantee is stated in Corollary~\ref{cor:chvatal-rounding}.
The second rounding algorithm deviates substantially from prior work, and is one of the main algorithmic contributions of this paper. For simplicity, we first present a randomized version of our rounding algorithm, and discuss in Appendix~\ref{sec:derandomization} how it can be derandomized.

Note that the links in $x$ that cover the edges of each $T_i$ are expressed as a convex combination of integral solutions $L_i^R$. We will interpret the coefficients $\lambda_i^R$ of this convex combination as a probability distribution and sample from it, independently for each $T_i$. Let $L_i$ be the (random) local solution for each $T_i$, and we start by considering $\cup_{i=1}^q L_i$, which is clearly a feasible integral solution.
Notice that each in-link $\ell$ has both endpoints in one principal subtree $T_i$, and is contained in $L_i$ with probability $x_{\ell}$ (and not contained in any other $L_j$). On the other hand, each cross-link $\ell$ has its endpoints in two different subtrees, say $T_i$ and $T_j$, and is thus contained in each of $L_i$ and $L_j$ with probability $x_{\ell}$ (and it is not contained in any other $L_h$). This simple line of argumentation leads to the following guarantee.
\begin{equation}\label{eq:boundUnionLi}
\E\left[|\cup_{i=1}^q L_i|\right] \leq \E\left[\sum_{i=1}^q |L_i|\right] = x(\Lin) + 2x(\Lcross)\enspace,
\end{equation}
which is the same guarantee that was also obtained by 
Adjiashvili~\cite{adjiashvili2017beating} through a slightly different approach.

In order to improve on this, we introduce our rewiring process that will replace some pairs of cross-links appearing in different principal subtrees by a single one. To do such a replacement, we introduce the notion of \emph{active vertices}, which will be potential endpoints of a new link that allows for removing two existing ones.

\begin{definition}\label{def:activeVert}
Let $u\in V$, and let $i\in [q]$ such that $u\in V_i$. Then $u$ is called \emph{active} if there exists a cross-link $\ell_u\in L_i\cap \Lcross$ that has $u$ as one of its endpoints.
\end{definition}
Notice that for an active vertex $u$, the corresponding link $\ell_u$, as described in the definition, is unique; for otherwise the links $L_i$ will not be shadow-minimal.

Our rewiring step is based on the following observation of how one can improve the solution $\cup_{i=1}^q L_i$. Whenever there are two active vertices $u,v$, say in subtrees $T_i$ and $T_j$, respectively, such that $\{u,v\}\in \Lcross$, then we can replace both the cross-link $\ell_u$ in $L_i$ that has $u$ as one of its endpoints and the cross-link $\ell_v$ in $L_j$ that has $v$ as one of its endpoints, by the single cross link $\{u,v\}$. The following proposition formalizes this operation for multiple pairs $\{u,v\}$ forming a matching $M$.
\begin{proposition}\label{prop:rewiringIsOk}
Let $M\subseteq \Lcross$ be a matching between active vertices. Moreover, for each active vertex $u$, which is part of some subtree $T_i$,
let $\ell_u\in L_i$ be the unique cross-link in $L_i$ with $u$ as one of its endpoints, and we denote by $L_i(M)$ the set of all $\ell_u\in
L_i$ for $u\in V_i$ being an endpoint of some edge in $M$. Then
\begin{equation*}
\mathcal{B}\coloneqq \left(\bigcup_{i=1}^q \left(L_i\setminus L_i(M)\right)
\right) \cup M
\end{equation*}
is a solution to the considered $k$-wide TAP instance. 
\end{proposition}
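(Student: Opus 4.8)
The plan is to verify that $\mathcal{B}$ covers every edge of $T$, since that (together with $\mathcal{B}\subseteq L$, which is immediate as each $L_i\subseteq L$ and $M\subseteq\Lcross\subseteq L$) is all that is needed for $\mathcal{B}$ to be a feasible TAP solution. Because the principal subtrees' edge sets $E_1,\dots,E_q$ form a partition of $E$, it suffices to fix $i\in[q]$ and show that $\mathcal{B}$ covers $E_i$. I would do this by comparing $\mathcal{B}$ to the local solution $L_i=L_i^{R_i}$, which by construction already covers $E_i$.

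The key structural fact I would establish first is that inside the subtree $T_i$ every cross-link behaves exactly like an ``up-link to the root''. Concretely: if a link $\ell\in L$ has exactly one endpoint $u$ in $V_i$ (its other endpoint lying in a different principal subtree, hence $\neq r$), then $P_\ell$ runs from $u$ up to $r$ and then down into the other subtree, so $P_\ell\cap E_i$ equals the tree path $P_{\{u,r\}}$ from $u$ to $r$. Applying this both to $\ell_u\in L_i\cap\Lcross$ --- the unique, by shadow-minimality of $L_i$, cross-link of $L_i$ with endpoint $u$ --- and to an edge $\{u,v\}\in M$ with $u\in V_i$ (its other endpoint $v$ lies in another principal subtree, so $\{u,v\}$ again has $u$ as its only $V_i$-endpoint) yields $P_{\{u,v\}}\cap E_i=P_{\{u,r\}}=P_{\ell_u}\cap E_i$.

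With this in hand the argument closes quickly. By definition, the links of $L_i$ that are deleted when forming $\mathcal{B}$ are exactly those in $L_i(M)$, and each such link is $\ell_u$ for some vertex $u\in V_i$ that is an endpoint of a (unique) matching edge $m_u\in M$; in particular $L_i(M)$ consists only of cross-links, so the in-links of $L_i$ (which all lie in $C(i,R_i)$) are never removed. For each deleted $\ell_u\in L_i(M)$, the edge $m_u\in M\subseteq\mathcal{B}$ covers, within $E_i$, exactly what $\ell_u$ did, by the identity above. Hence
\begin{equation*}
\bigcup_{\ell\in\mathcal{B}}(P_\ell\cap E_i)\ \supseteq\ \Bigl(\bigcup_{\ell\in L_i\setminus L_i(M)}(P_\ell\cap E_i)\Bigr)\cup\Bigl(\bigcup_{\ell\in L_i(M)}(P_\ell\cap E_i)\Bigr)\ =\ \bigcup_{\ell\in L_i}(P_\ell\cap E_i)\ \supseteq\ E_i,
\end{equation*}
as desired.

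The point that needs care --- and the step I would flag as the main obstacle --- is the bookkeeping around cross-links: a single cross-link may belong to two local solutions $L_i$ and $L_j$ and help cover edges of both $E_i$ and $E_j$, yet it is removed from $\mathcal{B}$ only through $L_i(M)$ (resp.\ $L_j(M)$). Working subtree-by-subtree, as above, sidesteps this, because the ``cross-links act as up-links inside each $E_i$'' observation is precisely what makes the replacement link $\{u,v\}$ simultaneously recover the $E_i$-coverage lost when $\ell_u$ is dropped from $T_i$ and the $E_j$-coverage lost when $\ell_v$ is dropped from $T_j$. The remaining ingredients are routine: uniqueness of the $V_i$-endpoint and of $\ell_u$ come from shadow-minimality of $L_i$ (the note after Definition~\ref{def:activeVert}), and the partition $E=\bigcup_i E_i$ is immediate from the definition of $k$-wide trees.
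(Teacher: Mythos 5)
Your proof is correct and follows essentially the same route as the paper's: check coverage principal subtree by principal subtree, noting that for each removed link $\ell_u\in L_i(M)$ the matching edge at $u$ covers within $E_i$ exactly the edges that $\ell_u$ covered (both restrict to the $u$--$r$ path). Your explicit ``cross-links act as up-links inside each $E_i$'' observation just makes precise what the paper states in one line.
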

\begin{proof}
We check that all edges in each principal subtree are covered. Hence, let $i\in[q]$, and consider an edge $e\in E_i$. We recall that by construction, $L_i$ covers all edges $E_i$.
If $e$ is covered by a link in $L_i\setminus L_i(M)$, then it is clearly covered by the same link in $\mathcal{B}$. Otherwise, if $e$ is covered by a link $\ell\in L_i(M)$, then there is a cross-link in $M$ that covers all edges within $E_i$ that $\ell$ covered, including the edge $e$.
\end{proof}

Notice that the cardinality of the solution $\mathcal{B}$ described in Proposition~\ref{prop:rewiringIsOk} can be bounded by
\begin{equation}\label{eq:boundCardCalL}
|\mathcal{B}| \leq \left(\sum_{i=1}^q |L_i\setminus L_i(M)|\right) + |M|
    = \left(\sum_{i=1}^q|L_i|\right) - |M| \enspace,
\end{equation}
which leads to an improvement compared to~\eqref{eq:boundUnionLi} (after taking expectations).
We call the process of replacing a pair of links $\ell_{u},\ell_{v}$ with $\{u,v\}\in \Lcross$ by a single link $\{u,v\}$ a \emph{rewiring}.
It may happen that $\ell_{u}$ and $\ell_{v}$ are the same. Nevertheless, we still improve on the analysis presented in~\eqref{eq:boundUnionLi}, which counts these links twice. However, it is crucial that we do not try to improve on the analysis presented in~\eqref{eq:boundUnionLi} by only exploiting that we improve if the same cross-link appears in two different $L_i$. Such events can be very unlikely, and would not be enough to improve on existing methods. Hence, we do need to do non-trivial rewirings.

Maximizing the number of rewirings $h$ can be achieved by solving a matching problem over the graph consisting of all active vertices, and whose edges are all cross-links between active vertices.
However, it turns out that we only need to consider rewirings on a subset of the active vertices, namely those of degree different from $2$. This is sufficient for our analysis and simplifies the exposition. Hence, let $\Vcr \subseteq V\setminus \{r\}$ be the set of all vertices, except for the root, with degree different from $2$, and let $A\subseteq \Vcr$ be the random set corresponding to all active vertices in $\Vcr$. For brevity, we call the vertices in $\Vcr$ \emph{critical vertices}.
Moreover, we partition the cross-links $\Lcross$ into the links $\Lcr$ with both endpoints being critical, and the remaining links $\Lncr$. Hence, our rewiring procedure finds a maximum cardinality matching in the graph
$(A,\Lcr\cap
\begin{psmallmatrix}
A\\
2
\end{psmallmatrix})$,
say with matching edges $M\subseteq \Lcross$, and returns the solution $\mathcal{B}$ as described in Proposition~\ref{prop:rewiringIsOk}. We highlight that in our description so far, our rounding algorithm is randomized due to the random choice of $L_1,\ldots, L_q$, which also leads to a random set $A$ of active critical vertices. To make sure that we can find a sufficiently large matching, it is crucial to observe that the distribution of $A$ has the following properties:
\begin{equation}\label{eq:propsOfA}
\begin{alignedat}{2}
\Pr[v\in A] &= \sum_{\ell\in \Lcr, v\in \ell} x_{\ell} \qquad &&\forall v\in \Vcr\enspace, \text{ and}\\
\Pr[u\in A \text{ and } v\in A] &= \Pr[u\in A] \cdot \Pr[v\in A] \qquad &&\forall \{u,v\}\in \Lcr\enspace,
\end{alignedat}
\end{equation}
where the first property follows from the fact that the link sets $L_i$ are shadow-minimal, which implies that there is at most one cross-link in $L_i$ that has $v$ as its endpoint. Moreover, the second property follows from the fact that if $\{u,v\}\in \Lcr$, then $u$ and $v$ are in different principal subtrees, say $T_i$ and $T_j$, respectively. Indeed, the independence of $\{u\in A\}$ and $\{v\in A\}$ then follows by the independence of the distributions of $L_i$ and $L_j$, which holds because we round each principal subtree independently.
The following lemma summarizes the key properties of the solution we obtain through our rounding algorithm based on rewiring. For any graph $G=(V,E)$, we denote by $\eta(V,E)$ the cardinality of a maximum cardinality matching in $G$.

\begin{lemma}\label{lem:wired-rounding}
Consider a $k$-wide TAP instance $(T,L)$ and a solution $(x,\lambda)$ to the $k$-wide-LP. There exists a randomized algorithm that returns a solution $\mathcal{B}\subseteq L$ to the TAP instance $(T,L)$, such that
$$
\E[|\mathcal{B}|] \leq x(\Lin)+2x(\Lcross)-\E_A\left[ \eta\left(A,\Lcr\cap
\begin{psmallmatrix}
A\\
2
\end{psmallmatrix}
\right)\right]\enspace,
$$
where $A$ is a random subset of $\Vcr$ with a distribution satisfying~\eqref{eq:propsOfA}.
\end{lemma}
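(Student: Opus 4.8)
The plan is to assemble the construction described in the paragraphs above into a single algorithm and then check the three assertions of the lemma: that it returns a feasible TAP solution, that its expected size obeys the displayed bound, and that the random set $A$ it generates is distributed as in~\eqref{eq:propsOfA}. \textbf{The algorithm.} Given an optimal $(x,\lambda)$ of the $k$-wide-LP, for each principal subtree $T_i$ ($i\in[q]$) I would treat $(\lambda_i^R)_{R\in\Lambda_i}$ as a probability distribution over $\Lambda_i$ (legitimate since these coefficients are nonnegative and sum to $1$ by~\eqref{eq:convDecConstraints}) and draw an index set $R$ from it, independently across the $q$ subtrees; write $L_i\coloneqq L_i^{R}$ for the sampled link set. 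From $L_1,\dots,L_q$ one reads off the set $A\subseteq\Vcr$ of active critical vertices (Definition~\ref{def:activeVert}); the algorithm then computes a maximum cardinality matching $M$ in the graph $(A,\,\Lcr\cap\binom{A}{2})$ and returns the set $\mathcal{B}$ of Proposition~\ref{prop:rewiringIsOk}.

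\textbf{Feasibility and expected size.} Feasibility is immediate from Proposition~\ref{prop:rewiringIsOk}, since $\bigcup_i L_i$ covers $E=\bigcup_i E_i$ and $M\subseteq\Lcross$ is a matching between active vertices. For the cardinality, \eqref{eq:boundCardCalL} gives the deterministic inequality $|\mathcal{B}|\le\big(\sum_{i=1}^q|L_i|\big)-|M|$, which stays valid even in the degenerate case where a matching edge $\{u,v\}$ has $\ell_u=\ell_v$: that single cross-link is then counted twice in $\sum_i|L_i|$ and removed twice. Taking expectations, I would invoke~\eqref{eq:boundUnionLi} for $\E\big[\sum_i|L_i|\big]=x(\Lin)+2x(\Lcross)$ (an in-link lies in $\cov(E_i)$ for exactly one $i$ and is then in $L_i$ with probability $x_\ell$, while a cross-link lies in $\cov(E_i)$ for precisely the two subtrees meeting its endpoints). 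Finally, the edge set of $(A,\Lcr\cap\binom{A}{2})$ is a function of $A$ alone, hence so is its matching number, so $|M|=\eta\big(A,\Lcr\cap\binom{A}{2}\big)$ and $\E[|M|]=\E_A\big[\eta(A,\Lcr\cap\binom{A}{2})\big]$; combining the three gives the bound of the lemma.

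\textbf{The distribution of $A$, and where the work really is.} That $A$ satisfies~\eqref{eq:propsOfA} is verified directly from the construction, as indicated after~\eqref{eq:propsOfA}: the probability that a critical vertex $v$ (say with $v\in V_i$) is active equals the probability that $L_i$ picks a cross-link incident to $v$, and shadow-minimality of the sampled $L_i$ makes the events ``$\ell\in L_i$'' disjoint over the candidate links $\ell$, each of which has probability $x_\ell$ by~\eqref{eq:convDecConstraints}; moreover two critical endpoints of an $\Lcr$-link lie in distinct principal subtrees, which are rounded independently, giving the claimed pairwise independence. None of this is hard — the lemma is essentially bookkeeping layered on top of Proposition~\ref{prop:rewiringIsOk}. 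The genuinely difficult step, which this lemma only sets up, is to show that $\E_A\big[\eta(A,\Lcr\cap\binom{A}{2})\big]$ is large for every random vertex set $A$ obeying~\eqref{eq:propsOfA}; that stochastic-matching lower bound is what Section~\ref{sec:matching} must supply in order to drive the approximation ratio below $1.5$.
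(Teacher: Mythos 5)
Your proposal is correct and follows essentially the same route as the paper: sample each principal subtree independently from the distribution $(\lambda_i^R)_R$, build the matching on active critical vertices, and combine Proposition~\ref{prop:rewiringIsOk} with \eqref{eq:boundCardCalL} and \eqref{eq:boundUnionLi}, checking \eqref{eq:propsOfA} via shadow-minimality and the independence of the per-subtree rounding. The paper's proof is just a terser version of this same bookkeeping.
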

\begin{proof}
The set $A$ corresponds to the active vertices that are critical, which fulfill~\eqref{eq:propsOfA}. Moreover,
\begin{equation*}
\E[|\mathcal{B}|] \leq \E\left[ \textstyle\sum_{i=1}^q |L_i| \right]
- \E_A\left[\eta\left(A,\Lcr\cap
\begin{psmallmatrix}
A\\
2
\end{psmallmatrix}
\right)\right]
=  x(\Lin) + 2x(\Lcross)  
- \E_A\left[\eta\left(A,\Lcr\cap
\begin{psmallmatrix}
A\\
2
\end{psmallmatrix}
\right)\right]\enspace,
\end{equation*}
where the inequality follows from~\eqref{eq:boundCardCalL} and the equality from~\eqref{eq:boundUnionLi}.
\end{proof}

\subsection{Lower Bounding the Matching Size}
\label{sec:matching}
The next lemma shows that
$\E_{A}[\eta(A, \Lcr \cap
\begin{psmallmatrix}
A\\
2
\end{psmallmatrix})]$
is large if $x(\Lcr)$ is so.
\begin{lemma}\label{lem:matching}
	Let $G=(V,E)$ be an undirected graph, let $x\in\mathbb{R}_{\geq 0}^E$ such that for any $v\in V$ we have $x(\delta(v)) \leq 1$, and let $A$ be a random subset of $V$ with a distribution that satisfies:
	\begin{enumerate}[nosep,label=(\roman*),itemsep=0.2em,topsep=0.2em]
		\item\label{sample1}  $\Pr[v\in A]=x(\delta(v)) \quad \forall v\in V$, and 
		\item\label{sample2}  $\Pr[v\in A\text{ and }u\in A]= \Pr[u\in A] \cdot \Pr[v\in A]\quad \forall
\{u,v\}\in E$.
	\end{enumerate} 
Then there exists a matching $M\subseteq E$ in $G$ such that $\E\left[ |M\cap
\begin{psmallmatrix}
A\\
2
\end{psmallmatrix}
|  \right] \geq \sfrac{x(E)^2}{|V|}$.
\end{lemma}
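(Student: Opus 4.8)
The goal is to produce a single matching $M$ (deterministically chosen, not depending on $A$) whose expected intersection with $\binom{A}{2}$ is at least $x(E)^2/|V|$. The natural idea is to choose $M$ to be a maximum-weight matching with respect to the weights $w_{uv} := \Pr[u\in A]\cdot\Pr[v\in A] = x(\delta(u))\,x(\delta(v))$, since by property~\ref{sample2} this weight equals $\Pr[u\in A \text{ and } v \in A]$, i.e.\ exactly the probability that edge $\{u,v\}$ contributes to $|M\cap\binom{A}{2}|$. Then by linearity of expectation, $\E[|M\cap\binom{A}{2}|] = \sum_{\{u,v\}\in M} w_{uv} = w(M)$, the weight of a maximum-weight matching in $(G,w)$. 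So the whole statement reduces to a purely deterministic claim: the maximum-weight matching in $(G,w)$ has weight at least $x(E)^2/|V|$.

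To prove that deterministic claim, the plan is to exhibit a \emph{fractional} matching of weight at least $x(E)^2/|V|$ and then invoke integrality of the matching polytope for the weight objective — but one must be careful, since the matching polytope is not just the fractional (degree-constrained) polytope. A cleaner route: use $x$ itself as a candidate fractional matching. By hypothesis $x(\delta(v))\le 1$ for all $v$, so $x$ lies in the fractional matching polytope. Its weight under $w$ is $\sum_{e=\{u,v\}} x_e\, x(\delta(u))\, x(\delta(v))$. I would lower-bound this: writing $p_v := x(\delta(v)) = \Pr[v\in A]$, the weight is $\sum_{\{u,v\}\in E} x_{uv}\, p_u p_v$. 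Since $\sum_v p_v = \sum_v x(\delta(v)) = 2x(E)$, and we want a bound of the form $x(E)^2/|V|$, a Cauchy–Schwarz / averaging argument should do it: intuitively the "typical" edge has $p_u p_v$ comparable to $(2x(E)/|V|)^2$ times combinatorial factors, and summing $x_{uv}$ gives another factor $x(E)$. More concretely, I expect one shows $\sum_{\{u,v\}} x_{uv} p_u p_v \ge$ (something like) $\frac{1}{|V|}\big(\sum_{\{u,v\}} x_{uv}(p_u+p_v)\big)\cdot(\dots)$ — but the genuinely clean inequality is $\sum_{\{u,v\}\in E} x_{uv}(p_u+p_v) = \sum_v p_v \cdot p_v = \sum_v p_v^2 \ge (\sum_v p_v)^2/|V| = 4x(E)^2/|V|$, using $\sum_{\{u,v\}\ni v} x_{uv} = p_v$. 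This handles the \emph{linear}-in-$p$ version; for the product version one needs an extra step, e.g.\ AM–GM ($p_u p_v$ vs.\ $p_u + p_v$) does not directly help since it goes the wrong way, so instead I would bound $w(M^*)$ from below by the weight of the matching obtained from $x$ restricted to where it helps, or go through the fractional optimum more carefully.

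The most robust approach, and the one I would actually carry out: let $M^*$ be a maximum-weight matching under $w$, and let $\nu_w$ denote $w(M^*)$. By LP duality / the Edmonds matching polytope, $\nu_w$ equals the maximum of $\sum_e w_e z_e$ over the matching polytope, which is at least the value at the feasible fractional point obtained by scaling: since $x$ has $x(\delta(v))\le 1$, it is feasible, so $\nu_w \ge \sum_{\{u,v\}\in E} x_{uv} p_u p_v$. Now I bound this sum below. Using $\sum_{\{u,v\}\ni v} x_{uv} p_u \ge$ ... hmm — here is the clean finish: $\sum_{\{u,v\}\in E} x_{uv} p_u p_v = \frac12 \sum_{v\in V} p_v \big(\sum_{u:\{u,v\}\in E} x_{uv} p_u\big)$. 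The inner sum is hard to control pointwise, so instead apply Cauchy–Schwarz globally: $\big(\sum_e x_e\big)^2 = \big(\sum_e x_e^{1/2}\cdot x_e^{1/2}\big)$ — not quite. The genuinely simple inequality is this: $x(E)^2 = \big(\sum_{\{u,v\}} x_{uv}\big)^2 = \frac14\big(\sum_v p_v\big)^2 \le \frac{|V|}{4}\sum_v p_v^2 = \frac{|V|}{4}\sum_v p_v \sum_{u:\{u,v\}\in E}x_{uv} = \frac{|V|}{4}\sum_{\{u,v\}\in E} x_{uv}(p_u+p_v) \le \frac{|V|}{2}\sum_{\{u,v\}\in E}x_{uv}\max(p_u,p_v)$, which still is not the product.

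**The main obstacle.** Converting the "$p_u+p_v$" bound into a "$p_u p_v$" bound is the crux, since a priori $p_u p_v$ can be much smaller than $p_u+p_v$. I expect the resolution is that one does \emph{not} use $x$ itself as the fractional matching, but rather the point $z$ with $z_e = x_e \cdot \mathbb{1}[\text{both endpoints "heavy"}]$ after a thresholding argument, or — more likely — one directly analyzes $\E_A[\eta(\cdot)]$ via a random greedy matching: reveal $A$, and within $G[A]$ take any maximal matching, whose size is at least $|E(G[A])|/\Delta$ or at least (number of non-isolated vertices)$/2$; then $\E[\eta] \ge \frac12\E[\#\{v\in A: \exists u\in A, \{u,v\}\in E\}]$, and bound the latter. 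Actually the cleanest: $\eta(G[A]) \ge \frac{1}{|V|}\binom{|A\cap(\text{non-isolated})|}{?}$... I think the intended proof picks $M^*$ = max-weight matching under $w_e = p_u p_v$ and proves $w(M^*)\ge \sum_v p_v^2/|V| \cdot(\text{const})$ via an explicit combinatorial construction (iteratively pick the edge incident to the highest-$p$ unmatched vertex), giving $w(M^*) \ge \frac14\sum_v p_v^2 \ge \frac{1}{|V|}\big(\tfrac12\sum_v p_v\big)^2 = x(E)^2/|V|$. Establishing $w(M^*)\ge\frac14\sum_v p_v^2$ — equivalently, that a max-weight matching under product weights captures a constant fraction of $\sum_v p_v^2$ — is the technical heart; I would prove it by the greedy argument just sketched, charging each vertex $v$'s contribution $p_v^2$ to the matching edge that "kills" $v$, using that when $v$ is matched (or blocked) the partner $u$ has $p_u \ge$ (comparable to) $p_v$, so the matched edge has weight $\ge$ a constant times $p_v^2 + p_u^2$.
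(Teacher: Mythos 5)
Your reduction to a deterministic statement is correct and matches the paper's starting point: by properties (i)--(ii) and linearity, for any \emph{fixed} matching $M$ one has $\E[|M\cap \binom{A}{2}|]=\sum_{\{u,v\}\in M}p_u p_v$ with $p_v:=x(\delta(v))$, so it suffices to exhibit a matching whose product-weight is at least $x(E)^2/|V|$. However, the resolution you settle on for what you correctly identify as the crux rests on a false inequality. The claim $w(M^*)\ge\frac14\sum_v p_v^2$ fails on a star with $n-1$ leaves and $x_e=\frac{1}{n-1}$ on every edge: the center has $p_c=1$, any matching consists of a single edge of weight $\frac{1}{n-1}$, yet $\frac14\sum_v p_v^2>\frac14$. (This is exactly the tightness example the paper discusses right after the lemma.) The charging argument breaks for the same reason: when $u$ is matched or blocked via a partner $v$ with $p_v\le p_u$, the edge weight $p_u p_v$ is at least $p_v^2$ but \emph{not} at least $p_u^2$, so you cannot charge $p_u^2$ to that edge, and in the star all leaves charge to one edge of tiny weight. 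Your fallback via the fractional point $x$ also does not close the gap: first, $x(\delta(v))\le 1$ only places $x$ in the fractional matching polytope, not in Edmonds' matching polytope (a triangle with all $p_v=1$ and $x\equiv\frac12$ separates the two), and second, even granting $\nu_w\ge\sum_{\{u,v\}\in E} x_{uv}\, p_u p_v$, that bound is too weak: on $n/2$ disjoint edges each with $x_e=\epsilon$ it equals $\frac{n}{2}\epsilon^3$, whereas $x(E)^2/|V|=\frac{n}{4}\epsilon^2$.

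The two ideas missing from your proposal, which constitute the paper's actual proof, are these. First, \emph{sparsification}: replace $x$ by a vertex $y$ of $Q=\{y\in\mathbb{R}^E_{\ge0}: y(\delta(v))=x(\delta(v))\ \forall v\in V\}$, so that $|\supp(y)|\le|V|$ and Cauchy--Schwarz gives $x(E)^2=y(E)^2\le|V|\cdot\sum_{e\in E} y(e)^2$; this is what produces the $|V|$ in the denominator, and it shows that the right intermediate quantity is $\sum_e y(e)^2$ for a sparse degree-preserving $y$, not $\sum_v p_v^2$. Second, take $M$ to be the \emph{greedy matching with respect to the edge weights $y(e)$} (not with respect to the product weights), and prove $\sum_{\{u,v\}\in M}y(\delta(u))\,y(\delta(v))\ge\sum_{e\in E}y(e)^2$ by locating, for each $e\in E$, a distinct term $y(e)\,y(f)\ge y(e)^2$ in the expansion of the left-hand side, where $f\in M$ is an edge meeting $e$ with $y(f)\ge y(e)$ (guaranteed by greediness for $e\notin M$, and trivially $f=e$ for $e\in M$). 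Without these two steps the argument does not go through.
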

Before we prove the lemma, observe that
$M\cap \begin{psmallmatrix}
A\\
2
\end{psmallmatrix}$
is clearly a matching in
$(A,E\cap \begin{psmallmatrix}A\\ 2\end{psmallmatrix})$,
because $M$ is a matching. The above lemma thus implies that
$\E\left[\eta\left(A,E\cap
\begin{psmallmatrix}
A\\
2
\end{psmallmatrix}
\right)\right] \geq \sfrac{x(E)^2}{|V|}$.

\begin{proof}

We start by replacing $x$ by a sparser point $y$. For this, consider the polytope
\begin{equation*}
Q = \left\{y\in \mathbb{R}^E_{\geq 0} \;\middle\vert\; y(\delta(v)) = x(\delta(v)) \quad \forall v\in V\right\}\enspace,
\end{equation*}
which is non-empty because $x\in Q$. Let $y$ be any vertex of $Q$. We have $|\supp(y)|\leq |V|$, which follows by the following standard sparsity argument:\footnote{Recall that the support $\supp(y)\subseteq E$ of $y$ are all edges $e\in E$ with $y(e)>0$.} Any vertex $y$ of $Q$ is defined by $|E|$-many linearly independent and tight constraints; however, $Q$ has only $|V|$ constraints that are not nonnegativity constraints, and therefore, at least $|E|-|V|$ of the nonnegativity constraints must be tight at $y$, which implies $|\supp(y)| \leq |V|$.

We let the matching $M\subseteq E$ be a matching obtained by the greedy algorithm for maximum weight matchings with respect to the weights $y$. More formally, we order the edges $E=\{e_1,\ldots, e_m\}$ such that $y(e_1)\geq \ldots \geq y(e_m)$. We start with $M=\emptyset$ and go through the edges in the order $e_1,\ldots, e_m$. When considering $e_i$, we set $M=M\cup \{e_i\}$ if $M\cup \{e_i\}$ is a matching. We have the following (which holds for any matching $M$):
\begin{equation}\label{eq:expMatchSize1}
\E\left[|M\cap 
\begin{psmallmatrix}
A\\
2
\end{psmallmatrix}
|\right]
= \sum_{\{u,v\}\in M} x(\delta(u)) \cdot x(\delta(v))
= \sum_{\{u,v\}\in M} y(\delta(u)) \cdot y(\delta(v)) \enspace,
\end{equation}
which is an immediate consequence of property~\ref{sample1} and~\ref{sample2}, and of $y\in Q$. Next, we show that the following holds:
\begin{equation}\label{eq:expMatchSize2}
\sum_{\{u,v\}\in M} y(\delta(u)) \cdot y(\delta(v)) \geq \sum_{e\in E} y(e)^2\enspace.
\end{equation}
Note that when expanding the left-hand side of~\eqref{eq:expMatchSize2} by using 
$y(\delta(u))=\sum_{e\in \delta(u)}y(e)$ and $y(\delta(v))=\sum_{e\in \delta(v)}y(e)$, we obtain a sum of terms of the 
form $y(e)\cdot y(f)$ for some pairs $e,f\in E$. For any $e=\{u,v\}\in M$, the left-hand side 
of~\eqref{eq:expMatchSize2} has a term of the form $y(e)^2$, due to the term $y(\delta(u))\cdot y(\delta(v))$. Now 
consider any non-matching edge $e=\{u,v\}\in E\setminus M$. Because $e$ is a non-matching edge, and we constructed $M$ 
by the greedy algorithm, there is at least one endpoint of $e$, say $u$, such that there is an edge $f\in M$ incident 
with $u$ that satisfies $y(f)\geq y(e)$. This implies that $y(\delta(u))\cdot y(\delta(v))$ contains a term $y(f)\cdot 
y(e) \geq y(e)^2$. Hence, for each $e\in E$, we identified a term on the left-hand side of~\eqref{eq:expMatchSize2} of 
value at least $y(e)^2$; moreover, all such terms on the left-hand side are different, which 
implies~\eqref{eq:expMatchSize2}.

The statement of the lemma now follows by combining~\eqref{eq:expMatchSize1} together with~\eqref{eq:expMatchSize2} and the following inequality,
\begin{equation*}
x(E)^2 = y(E)^2 \leq |\supp(y)|\cdot \sum_{e\in E} y(e)^2 \leq |V| \cdot \sum_{e\in E} y(e)^2\enspace,
\end{equation*}
where the equality follows from $y\in Q$, the first inequality follows by the Cauchy-Schwarz inequality, and the second one by $|\supp(y)|\leq |V|$.
\end{proof}

Notice that the conclusion of the lemma is essentially optimal due to the following. Consider a graph $G=(V,E)$ that is a star with center $c\in V$, and let $x\in \mathbb{R}^E_{\geq 0}$ be the vector with value $\sfrac{1}{(|V|-1)}$ on each edge. Moreover, let $A$ be the random set of vertices such that with probability $\sfrac{1}{(|V|-1)}$ we have $A=V$, and with probability $\sfrac{(|V|-2)}{(|V|-1)}$ we have $A=\{c\}$. In this case, $\E[\nu(A,E\cap
\begin{psmallmatrix}
A\\
2
\end{psmallmatrix}
)] = \sfrac{1}{(n-1)}$ and $\sfrac{x(E)^2}{|V|} = \sfrac{1}{n}$.
Clearly, the above example can also be extended to larger values of $x(E)$ by considering disjoint stars.
We highlight that this does not exclude the possibility that, in the context we use the lemma, stronger conclusions may be obtained through a different statement and analysis.

\subsection{The Approximation Factor}
\label{sec:analysis}

We next upper bound the approximation factor of the considered rounding algorithm. For simplicity, and due to space constraints, we present here a slightly weaker upper bound and defer to Appendix~\ref{sec:improved-apx} the refined analysis leading to the bound claimed in Theorem~\ref{thm:mainKWide}.

Consider an optimal fractional solution  $(x,\lambda)$ to the $k$-wide-LP, and let $\OPT^*\coloneqq x(L)$ be its value.  Let $\across\in [0,1]$ be the fraction of $\OPT^*$ due to $x(\Lcross)$, i.e., $\across = x(\Lcross)/\OPT^*$, and define similarly $\acr$, $\ancr$, 
$\ain$ and $\aup$ w.r.t. $\Lcr$, $\Lncr$, $\Lin$ and $\Lup$, respectively. 
Observe that $\aup\leq \ain$,  $\acr+\ancr=\across$, and 
$\across+\ain=1$.

Previously, we showed that our rewiring procedure is the stronger the larger $x(\Lcr)$ 
is. Observe that if $x(\Lcross)$ is small enough, namely significantly less than $\sfrac{x(L)}{2}$, then the rounding 
guarantee of Lemma~\ref{lem:wired-rounding} already improves on the factor $1.5$ even without any rewiring. 
Still it may happen that $x(\Lcross)$ is large and $x(\Lcr)$ is small, which implies that $x(\Lncr)$ is large. The next lemma shows that in this case also $x(\Lup)$ has to be large, which in turn implies that we get a stronger guarantee via Corollary~\ref{cor:chvatal-rounding}.

\begin{lemma}\label{lem:up}
Given a $k$-wide TAP instance $(T,L)$ and a solution $(x,\lambda)$ to the $k$-wide-LP, 
we have $x(\Lup) \geq x(\Lncr)$, which is equivalent to $\aup\geq \ancr$.
\end{lemma}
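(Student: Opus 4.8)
The plan is to exhibit, for each principal subtree $T_i$ and each set $R \in \Lambda_i$ appearing with positive weight $\lambda_i^R > 0$, an injection from the no-critical cross-links of $L_i^R = R \cup C(i,R)$ lying ``inside'' $T_i$ into the up-links of $L_i^R$, and then sum over $i$ and $R$ weighted by $\lambda_i^R$. Since the local convex decomposition constraints~\eqref{eq:convDecConstraints} express $x(\Lncr)$ and $x(\Lup)$ as the corresponding $\lambda$-weighted sums of $|L_i^R \cap \Lncr|$ and $|L_i^R \cap \Lup|$ (with cross-links that are shared between two subtrees counted once on each side — this is where one must be careful, but it affects both sides symmetrically), a pointwise inequality $|L_i^R \cap \Lncr| \le |L_i^R \cap \Lup|$ for every $i$ and every $R$ with $\lambda_i^R > 0$ will suffice.

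So the heart of the matter is the combinatorial claim: in a \emph{shadow-minimal} link set $L' := L_i^R$ that covers $E_i$, the number of cross-links with an endpoint in $V_i$ whose endpoint in $V_i$ is \emph{not} a critical vertex (i.e.\ has degree $2$ in $T$) is at most the number of up-links of $L'$ inside $T_i$. To see this, first recall from Lemma~\ref{lem:minimal-cross} that $L'$ has at most $k$ cross-links touching $T_i$, and by shadow-minimality no two of their endpoints in $T_i$ lie on a common root-to-leaf path. Consider the endpoint $v \in V_i$ of a no-critical cross-link $\ell$, so $\deg_T(v) = 2$, meaning $v$ has a unique child $w$ in $T_i$. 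The tree edge $\{v,w\}$ below $v$ must be covered by $L'$; it cannot be covered by $\ell$ (whose path in $T$ leaves $T_i$ through $v$ going \emph{up}), nor by any other cross-link touching $T_i$ (that would put two cross-link endpoints on the same root-to-leaf path, violating shadow-minimality). Hence $\{v,w\}$ is covered by an in-link $\ell' \in L'$; since $v$ has degree $2$, $\ell'$ must pass through $v$ and continue strictly upward toward the leaves below $w$, so $v$ lies on the $r$-to-(lower-endpoint) path of $\ell'$, i.e.\ $\ell'$ is an up-link. The map $\ell \mapsto \ell'$ is injective: if two no-critical cross-links $\ell_1, \ell_2$ with in-$T_i$ endpoints $v_1 \ne v_2$ mapped to the same up-link $\ell'$, then $v_1$ and $v_2$ would both lie on the single root-to-leaf path determined by $\ell'$, again contradicting shadow-minimality of $L'$.

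The main obstacle, and the step deserving the most care, is making the degree and direction analysis of the covering in-link $\ell'$ airtight — in particular ruling out the possibility that the edge immediately below $v$ is covered by some other cross-link or by an in-link that does \emph{not} witness an up-link, and checking that $v_1 \ne v_2 \Rightarrow$ the up-links differ (rather than merely being incomparable). A subsidiary bookkeeping point is the weighted sum: one must verify that~\eqref{eq:convDecConstraints} gives $x(\Lncr) = \sum_i \sum_{R} \lambda_i^R \cdot (\text{number of } \Lncr\text{-links of } L_i^R \text{ with an endpoint in } V_i)$ after accounting correctly for cross-links shared by two subtrees, and likewise for $x(\Lup)$; since up-links are in-links and thus belong to exactly one $L_i$, that side is immediate, while the cross-link side is summed over the subtree containing the relevant endpoint, matching the per-subtree count used in the claim. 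The equivalence $x(\Lup) \ge x(\Lncr) \iff \aup \ge \ancr$ is immediate from the definitions $\aup = x(\Lup)/\OPT^*$, $\ancr = x(\Lncr)/\OPT^*$.
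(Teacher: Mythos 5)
Your overall route is the same as the paper's: use the convex-decomposition constraints of the $k$-wide-LP to reduce the inequality to a combinatorial fact about each shadow-minimal local solution $L_i^R$ --- namely that a cross-link whose endpoint $v$ in $T_i$ has degree $2$ forces an up-link hanging below $v$ --- and then aggregate with the weights $\lambda_i^R$ (the paper aggregates per non-critical vertex rather than per pair $(i,R)$, but this is equivalent, and your bookkeeping is fine once each link of $\Lncr$ is charged to a subtree containing one of its non-critical endpoints; your earlier parenthetical that the double appearance of cross-links ``affects both sides symmetrically'' is wrong, since up-links live in exactly one subtree, but you correct this yourself at the end).

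The genuine gap is in the step you yourself flag as the delicate one: the claim that the in-link $\ell'$ covering the edge $\{v,w\}$ below $v$ must be an up-link. Your justification --- ``since $v$ has degree $2$, $\ell'$ must pass through $v$ \dots so $v$ lies on the $r$-to-(lower-endpoint) path of $\ell'$, i.e.\ $\ell'$ is an up-link'' --- is a non sequitur: being an up-link requires an \emph{endpoint} of $\ell'$ to lie on the root path of the other endpoint, and the fact that the interior vertex $v$ lies on $P_{\ell'}$ does not give this. Concretely, an in-link $\ell'=\{a,b\}$ with $b$ a descendant of $w$ and $a$ in a sibling branch hanging off the parent of $v$ (or off any strict ancestor of $v$) covers $\{v,w\}$, is not an up-link, and is not excluded by $\deg_T(v)=2$. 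What excludes it is one more application of shadow-minimality to the pair $\{\ell,\ell'\}$, exactly as in your competing-cross-link case: if the apex (highest vertex) of $P_{\ell'}$ were a strict ancestor of $v$, then $P_{\ell}$ and $P_{\ell'}$ would share the segment from $v$ up to that apex, and $\ell$ could be shortened to the shadow joining that apex to $\ell$'s other endpoint, contradicting shadow-minimality of $L_i^R$. Hence the apex of $P_{\ell'}$ is $v$, and only now does $\deg_T(v)=2$ do its work: the subtree rooted at $v$ consists of $v$ and the descendants of $w$, so $v$ must itself be the upper endpoint of $\ell'$, making $\ell'$ an up-link. This is precisely the step in the paper's proof (``$\ell'$ cannot be in $\Lcross$ or $\Lin\setminus\Lup$, because in both cases shadow-minimality of $\{\ell,\ell'\}$ would be contradicted, since one could shorten $\ell$''). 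With that repaired, your injection and the weighted summation go through.
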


\begin{proof}
Consider any link $\ell\in \Lncr$, and let $v$ be any one of its non-critical endpoints. Let $T_i$ be the principal subtree containing $v$, and, for any $S\subseteq L$, we denote by $x_S$ be vector $x$ restricted on the coordinates corresponding to $S$.
From the design of the $k$-wide-LP, $x_{\cov(T_i)}$ is a convex combination of shadow-minimal link sets $L_i^R$, each covering $T_i$. In each such solution $L_i^R$ that contains $\ell$, there must exist another link $\ell'$ 
incident on $v$ and covering the edge right below $v$. The link $\ell'$ cannot be in $\Lcross$ or $\Lin\setminus \Lup$, because in both cases shadow-minimality of $\{\ell,\ell'\}$ would be contradicted, since one could shorten $\ell$. Thus $\ell'\in \Lup$. We can thus conclude that
\begin{equation*}
\sum_{\ell\in \Lup:v\in \ell}x_\ell\geq \sum_{\ell\in \Lncr:v\in \ell}x_{\ell}\enspace.
\end{equation*}
The claim follows by summing over all non-critical nodes $v$.
\end{proof}

\begin{lemma}\label{lem:apx}
There exists a randomized approximation algorithm for $O(1)$-wide TAP instances with approximation guarantee $2(\sqrt{3}-1) < 1.465$.
\end{lemma}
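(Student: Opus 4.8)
The plan is to combine the two rounding guarantees we have at our disposal — Corollary~\ref{cor:chvatal-rounding} and Lemma~\ref{lem:wired-rounding} — together with the matching lower bound of Lemma~\ref{lem:matching} and the structural inequality of Lemma~\ref{lem:up}, and then to optimize the resulting bound over the parameters $\across,\acr,\ancr,\ain,\aup$. Concretely, Corollary~\ref{cor:chvatal-rounding} gives a solution of size at most $2x(\Lin)-x(\Lup)+x(\Lcross) = (2\ain-\aup+\across)\OPT^*$, and Lemma~\ref{lem:wired-rounding} combined with the remark after Lemma~\ref{lem:matching} (applied with $V=\Vcr$, $E=\Lcr$, and the point $x$ restricted to $\Lcr$, which satisfies $x(\delta(v))\le 1$ because $x\in\Pi_T\supseteq\Pi_T^{\CG}$ forces the fractional degree at any critical vertex to be at most $1$ — each such vertex has an incident tree edge whose covering-constraint it controls) gives a solution of size at most $x(\Lin)+2x(\Lcross)-\sfrac{x(\Lcr)^2}{|\Vcr|}$. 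Since the algorithm returns the better of the two, the output size is at most the minimum of these two quantities.

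Next I would bound $\sfrac{x(\Lcr)^2}{|\Vcr|}$ from below in terms of $\OPT^*$. The natural move is to relate $|\Vcr|$, the number of critical vertices, to $\OPT^*$: a shadow-minimal optimal solution must cover the tree, and each critical vertex (degree $\ne 2$, i.e. a leaf or a branching vertex) forces links nearby, so $|\Vcr| = O(\OPT^*)$; more carefully one shows $|\Vcr|\le c\cdot\OPT^*$ for an explicit small constant $c$, or one argues directly that the number of leaves plus branching vertices is at most twice the number of leaves, and the number of leaves is at most $2\,\OPT$ since every leaf edge needs a distinct covering link endpoint — though getting the constant tight enough is exactly where care is needed. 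This turns the second guarantee into something like $x(\Lin)+2x(\Lcross)-\Omega(\acr^2)\cdot\OPT^*$, i.e. $\bigl(\ain+2\across - \Omega(\acr^2)\bigr)\OPT^*$.

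With both guarantees expressed as constants times $\OPT^*$, and using the relations $\across+\ain=1$, $\acr+\ancr=\across$, $\aup\le\ain$, and crucially $\aup\ge\ancr$ from Lemma~\ref{lem:up}, I would set up the worst-case optimization: the approximation ratio is at most $\min\{\,2\ain-\aup+\across,\ \ain+2\across-\beta\acr^2\,\}$ for the appropriate constant $\beta$, maximized over feasible parameters. In the first term I would substitute the worst case $\aup=\ancr=\across-\acr$ (making the Chvátal-Gomory bound as weak as possible given the relations), giving $2\ain-(\across-\acr)+\across = 2\ain+\acr = 2-2\across+\acr$; and the second term becomes $\ain+2\across-\beta\acr^2 = 1+\across-\beta\acr^2$. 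Balancing $2-2\across+\acr$ against $1+\across-\beta\acr^2$ and optimizing over $\acr\in[0,\across]$ and $\across\in[0,1]$ should yield the bound $2(\sqrt3-1)$; the algebraic optimum occurs when $\across$ and $\acr$ take specific values making both expressions equal to $2(\sqrt3-1)\approx 1.4641$, and one checks $1.4641<1.465$.

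The main obstacle I anticipate is twofold. First, pinning down the constant in $|\Vcr|\le c\cdot\OPT^*$ (and hence $\beta$) precisely enough: a loose bound here degrades the ratio, and one must be careful whether to bound $|\Vcr|$ against $\OPT^*=x(L)$ directly via LP feasibility or against the integral optimum. Second, the case analysis in the parameter optimization: one has to verify that the ``adversary'' choice $\aup=\ancr$ is indeed worst for the combined minimum (it weakens the first bound but is forced to be consistent with a fixed $\across$), and that the boundary cases $\acr=0$, $\acr=\across$, $\across=0$, $\across=1$ do not give a worse ratio than the interior optimum. I expect the calculus to be routine but finicky — which is presumably why the paper defers the sharpened version to Appendix~\ref{sec:calculus} and Appendix~\ref{sec:improved-apx}.
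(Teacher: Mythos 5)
Your proposal follows essentially the same route as the paper: return the better of the Chv\'atal-Gomory rounding (Corollary~\ref{cor:chvatal-rounding}) and the rewiring rounding (Lemma~\ref{lem:wired-rounding}), lower-bound the expected matching size via Lemma~\ref{lem:matching} together with a bound on $|\Vcr|$ in terms of the number of leaves and hence of $\OPT^*$, and then optimize the minimum of the two guarantees using $\aup\geq\ancr$ from Lemma~\ref{lem:up}. Your cruder estimate $|\Vcr|<2K\leq 4\,\OPT^*$ (the paper sharpens this to $2\OPT^*-x(\Lup)-x(\Lncr)\geq K$) in fact suffices: the resulting optimization $\max\min\{2-2\across+\acr,\,1+\across-\acr^2/4\}$ still attains its maximum at $\ancr=\aup=0$, $\acr=\across=4-2\sqrt 3$, with value $2(\sqrt3-1)$, so the weaker denominator costs nothing at the optimum.

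One justification in your write-up is genuinely wrong, though the fact you need is true. You claim $x(\delta(v))\leq 1$ for $v\in\Vcr$ (restricted to $\Lcr$) \emph{because} $x\in\Pi_T$; but $\Pi_T$ and $\Pi_T^{\CG}$ contain only covering (lower-bound) constraints, so they impose no upper bound on the fractional degree at any vertex --- e.g.\ the all-ones vector can be feasible. The correct source of this hypothesis of Lemma~\ref{lem:matching} is the convex-decomposition part~\eqref{eq:convDecConstraints} of the $k$-wide-LP: $x$ restricted to $\cov(E_i)$ is a convex combination of \emph{shadow-minimal} link sets, each of which contains at most one cross-link incident to any fixed vertex, whence $\sum_{\ell\in\Lcr:\,v\in\ell}x_\ell\leq 1$; this is exactly the first property in~\eqref{eq:propsOfA} and is one of the reasons the new LP constraints are introduced at all. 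With that substitution your argument matches the paper's proof.
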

\begin{proof}
Given a $k$-wide TAP instance $(T,L)$ with $k=O(1)$, and a solution $(x,\lambda)$ to the $k$-wide-LP, from Corollary~\ref{cor:chvatal-rounding} we know that we can round $x$ to a solution $\mathcal{A}$ such that
\begin{equation}\label{eq:guaranteeOfA}
\frac{|\mathcal{A}|}{\OPT{}^*}\leq
2\ain-\aup+\across = 2-\across-\aup\enspace,
\end{equation}
where the equality follows from $\across+\ain =1$.
Furthermore, from Lemma \ref{lem:wired-rounding} we know that we can randomly round $(x,\lambda)$ to a solution 
$\mathcal{B}$ such that 
\begin{equation}\label{eq:solBBound}
\E[|\mathcal{B}|] \leq x(\Lin)+2x(\Lcross)-\E_{A}[\eta (A,\Lcr\cap
\begin{psmallmatrix}
A\\
2
\end{psmallmatrix}
)]\enspace,
\end{equation}
where $A$ fulfills the conditions of Lemma~\ref{lem:wired-rounding}.
We proceed to upper bound $\E[|\mathcal{B}|]/\OPT{}^*$. Let $K$ be the total number of leaves. Notice that $|\Vcr|< 2K$; for otherwise one would obtain that the average degree of $T$ is at least $2$, which contradicts that any tree has average degree strictly below $2$.
For each leaf node $v$ one must have $\sum_{\ell\in L:v\in \ell}x_{\ell}\geq 1$ for the edge incident to $v$ to be covered by the LP solution. Each link can contribute at most twice to the previous sums, and links in $\Lup\cup \Lncr$ at most once. Thus we can conclude that
\begin{equation}\label{eq:leaves-vs-x}
2x(\Lcross)+2x(\Lin)-x(\Lup)-x(\Lncr)
= 2\OPT^* - x(\Lup) - x(\Lncr) \geq K > |\Vcr|/2\enspace.
\end{equation}
Therefore, by Lemma \ref{lem:matching}, the expected size of the matching satisfies
\begin{equation*}
\begin{aligned}
\E_{A}\left[\eta(A, \Lcr\cap
\begin{psmallmatrix}
A\\
2
\end{psmallmatrix}
)
\right] \geq \frac{x(\Lcr)^2}{|\Vcr|}\geq 
\frac{(\acr \OPT{}^*)^2}{(4-2\aup-2\ancr)\OPT{}^*}
 =\frac{
(\acr)^2}{(4-2\aup-2\ancr)}\OPT^*\enspace.
\end{aligned}
\end{equation*}
Combining the above with~\eqref{eq:solBBound}, we conclude that
\begin{equation}\label{eq:guaranteeOfB}
\frac{\E[|\mathcal{B}|]}{\OPT{}^*}\leq 
\ain+2\across-\frac{(\across)^2}{4-2\aup-2\ancr}=1+\across
-\frac{(\acr)^2}{4-2\aup-2\ancr}\enspace.
\end{equation}
Hence, by returning the better of the solutions $\mathcal{A}$ and $\mathcal{B}$, we can upper bound the expected approximation factor of the algorithm by the minimum of the right-hand sides of~\eqref{eq:guaranteeOfA} and~\eqref{eq:guaranteeOfB}, i.e.,
\begin{align*}
1+\min\left\{1-\across-\aup,\across-\frac{(\acr)^2}{4-2\aup-2\ancr}\right\}\enspace.
\end{align*}
To obtain our approximation guarantee we maximize this function over $\aup,\across,\acr,\ancr\in [0,1]$ subject to the constraints:  
\begin{enumerate*}[label=(\roman*)]
\item $\aup\geq \ancr$ (by Lemma \ref{lem:up}),
\item $\across=\acr+\ancr$ (by definition), and
\item $\aup + \across \leq 1$ (because $\ain + \across =1$ and $\aup\leq \ain$).
\end{enumerate*} In Appendix~\ref{sec:calculus} (see Lemma~\ref{lem:maxAlphas}), we show via basic calculations that 
this maximum, achieved
for $\aup=\ancr=0$ and $\acr=\across=4-2\sqrt{3}$, is $2\cdot(\sqrt{3}-1)< 1.465$.
\end{proof}

\appendix

\section{Omitted Proofs in the Determination of our Approximation Factor}\label{sec:calculus}

In this section, we show a technical statement, stated as Lemma~\ref{lem:maxAlphas} below, about the maximizer of a function used in Lemma~\ref{lem:apx}, which completes the proof of Lemma~\ref{lem:apx}.

\begin{lemma}\label{lem:maxAlphas}
The maximum of
\begin{equation*}
1+\min\left\{1-\across-\aup,\across-\frac{(\acr)^2}{4-2\aup-2\ancr}\right\}
\end{equation*}
over $\aup,\across,\acr,\ancr\in [0,1]$ subject to the constraints:  
\begin{enumerate}[label=(\roman*),nosep,topsep=0.3em]
\item $\aup\geq \ancr$, 
\item\label{item:acrossDecomp} $\across = \acr + \ancr$, and
\item\label{item:upCrossBound} $\aup + \across \leq 1$.
\end{enumerate}
is achieved for $\aup = \ancr = 0$ and $\acr=\across=4-2\sqrt{3}$, leading to a value of $2(\sqrt{3}-1)$.
\end{lemma}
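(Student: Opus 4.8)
The plan is to treat this as a small, explicit constrained-optimization problem in four variables and reduce it, step by step, to a one-variable calculus exercise. First I would observe that the objective is $1$ plus the minimum of two expressions; call them $f_1 = 1 - \across - \aup$ and $f_2 = \across - (\acr)^2/(4 - 2\aup - 2\ancr)$. Since we are maximizing, and $f_1$ is decreasing in $\aup$ while $f_2$ is increasing in $\across$ (hence, via constraint~\ref{item:acrossDecomp}, we want $\acr$ large) and also increasing in $\aup$ only through the denominator of the subtracted term, I expect the optimum to balance the two expressions, but I would first argue that several variables sit at their boundary.

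The key monotonicity steps, in order: (1) The term subtracted in $f_2$ is $(\acr)^2/(4 - 2\aup - 2\ancr)$. Increasing $\ancr$ (keeping $\acr$, $\aup$ fixed) only shrinks the denominator, hence enlarges the subtracted term, hence decreases $f_2$; it also does not help $f_1$. Moreover increasing $\ancr$ forces, via~\ref{item:acrossDecomp} with $\acr$ fixed, a larger $\across$, which tightens~\ref{item:upCrossBound} and only hurts $f_1$. So at the optimum we may take $\ancr$ as small as the constraints allow; I would show $\ancr = 0$ is feasible and optimal, which also collapses $\across = \acr$ and makes constraint~(i) ($\aup \ge \ancr$) trivial. (2) With $\ancr = 0$, the objective becomes $1 + \min\{1 - \across - \aup,\ \across - (\across)^2/(4 - 2\aup)\}$. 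Now I would argue $\aup = 0$: the first expression is decreasing in $\aup$, and although increasing $\aup$ shrinks the denominator $4 - 2\aup$ and thus slightly increases the subtracted term in the second expression (hurting it too) — in any case raising $\aup$ cannot increase either expression, so $\aup = 0$ is optimal.

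(3) This leaves the single-variable problem: maximize $g(\across) = 1 + \min\{1 - \across,\ \across - (\across)^2/4\}$ over $\across \in [0,1]$. I would note $1 - \across$ is strictly decreasing and $\across - (\across)^2/4$ is increasing on $[0,1]$ (derivative $1 - \across/2 > 0$), so the minimum is maximized where the two branches cross: $1 - \across = \across - (\across)^2/4$, i.e. $(\across)^2/4 - 2\across + 1 = 0$, i.e. $(\across)^2 - 8\across + 4 = 0$, giving $\across = 4 - 2\sqrt{3}$ (the root in $[0,1]$). Plugging back, the common value is $1 - \across = 1 - (4 - 2\sqrt 3) = 2\sqrt 3 - 3$, so the objective equals $1 + (2\sqrt 3 - 3) = 2\sqrt 3 - 2 = 2(\sqrt 3 - 1)$. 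Finally I would double-check feasibility of the claimed maximizer $\aup = \ancr = 0$, $\acr = \across = 4 - 2\sqrt 3 \approx 0.536 \in [0,1]$ against constraints (i)–(iii) — all hold — which completes the proof. The only mild subtlety (and the step I would be most careful about) is justifying the reductions in (1)–(2) rigorously rather than by hand-waving monotonicity: one must check that pushing $\ancr$ and $\aup$ to $0$ does not violate the remaining constraints and that the relevant expressions are genuinely monotone in the claimed direction over the whole feasible box; this is routine but is where an error would hide.
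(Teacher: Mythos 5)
Your step (3) and the final feasibility check are fine, and the overall conclusion is of course correct, but your reduction in step (1) contains a genuine error: the claimed monotonicity in $\ancr$ is false, and in fact the opposite holds. With $\acr$ and $\aup$ fixed, the second branch is
$f_2=\across-\frac{(\acr)^2}{4-2\aup-2\ancr}=\acr+\ancr-\frac{(\acr)^2}{4-2\aup-2\ancr}$,
so
$\frac{\partial f_2}{\partial \ancr}=1-\frac{2(\acr)^2}{(4-2\aup-2\ancr)^2}$.
Feasibility gives $\aup+\ancr\leq \aup+\across\leq 1$, hence $4-2\aup-2\ancr\geq 2$ and $\frac{2(\acr)^2}{(4-2\aup-2\ancr)^2}\leq \frac{1}{2}$, so $\frac{\partial f_2}{\partial \ancr}\geq \frac{1}{2}>0$: increasing $\ancr$ \emph{increases} $f_2$. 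You accounted for the shrinking denominator but forgot that the leading term $\across=\acr+\ancr$ grows by exactly the increase in $\ancr$, and over the feasible box this gain strictly dominates the loss. Consequently you cannot push $\ancr$ to $0$ by a branchwise monotonicity argument: decreasing $\ancr$ helps $f_1$ but hurts $f_2$, and the minimum of the two need not improve. This is exactly the subtlety you flagged at the end, and it is where the proposal breaks.

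The paper's proof avoids this by reducing in a different order and then doing a genuine two-variable analysis. It first observes that the objective is non-increasing in $\aup$ alone (both branches are), so one may set $\aup=\ancr$; substituting $\acr=\across-\aup$ leaves a two-variable problem in $(\across,\aup)$. Crucially, in that reduced problem the second branch is still \emph{increasing} in the remaining variable $y=\aup=\ancr$ while the first is decreasing, so no further monotonicity shortcut exists. Instead the paper shows by contradiction that any maximizer must equalize the two branches, solves $f_1=f_2$ for $\across$ as a function of $y$, and verifies that the resulting common value $h(y)$ has non-positive derivative on $[0,\sfrac{1}{2}]$, which is what finally yields $y^*=0$ and $\across^*=4-2\sqrt{3}$. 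To repair your proof you would need to replace step (1) by an argument of this balancing type (or by the paper's substitution); once $\aup=\ancr=0$ is legitimately established, your one-variable computation in step (3) is correct and matches the paper's answer $2(\sqrt{3}-1)$.
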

\begin{proof}
The function is non-increasing in $\aup$, and we can hence set $\aup=\ancr$. Moreover, using $\acr = \across - \ancr = \across - \aup$, we
have to maximize \begin{align*}
f(\across,\aup) \coloneqq 1+\min\left\{1-\across-\aup,\across-\frac{(\across-\aup)^2}{4(1-\aup)}\right\}\enspace,
\end{align*}
subject to the constraints
\begin{enumerate}[label=(\alph*),nosep,topsep=0.3em]
\item $\aup\leq\across$ (because $\aup=\ancr\leq \across$ due to~\ref{item:acrossDecomp}), and
\item $\aup + \across \leq 1$ (due to~\ref{item:upCrossBound}).
\end{enumerate}

Let $f_1(x,y)=1-x-y$, $f_2(x,y)=x-\frac{(x-y)^2}{4(1-y)}$ and $f(x,y)=\min\{f_1(x,y),f_2(x,y)\}$. Using this notation, the maximization problem we consider is 
\begin{equation}\label{eq:maxProbSimpl}
\max_{0\leq y\leq x\leq 1-y} f(x,y)\enspace.
\end{equation}

Let $D=\{(x,y): 0\leq y\leq x\leq 1-y\}$ be the domain over which we maximize, and let $(x^*,y^*)\in D$ be a maximizer of~\eqref{eq:maxProbSimpl}.
We first show that any such maximizer $(x^*, y^*)$ must satisfy $f_1(x^*,y^*)=f_2(x^*,y^*)$.

Indeed, if $f_1(x^*,y^*)<f_2(x^*,y^*)$, then $(x^*,y^*)$ is a maximizer of $\max_{(x,y)\in D} \max f_1(x,y)$. However, the only maximizer
of $f_1$ over $D$ is clearly $x=y=0$, for which we have $f_1(0,0) > f_2(0,0)$, which contradicts $f_1(x^*,y^*)< f_2(x^*,y^*)$.

On the other hand, if $f_1(x^*,y^*)> f_2(x^*,y^*)$, then $(x^*,y^*)$ is a 
maximizer of $\max_{(x,y)\in D} f_2(x,y)$. Observe that $\frac{\partial}{\partial x} f_2(x,y)=1-\frac{2x-2y}{4(1-y)}$ 
is nonnegative for all $(x,y)\in D$, which
follows directly from the fact that $(x,y)\in D$ implies $y\leq \frac{1}{2}$.
Therefore, if $(x^*,y^*)$ is a maximizer of $f_2$ over $D$, it holds that $y^*=1-x^*$. Hence, 
$f_2(x^*,y^*)=x^*-\sfrac{(2x^*-1)^2}{4x^*}=1-\sfrac{1}{4x^*}$, which implies that $(x^*,y^*)=(1,0)$, which is the unique maximizer 
of $f_2(x,y)$ over all points in $D$ satisfying $y=1-x$. However, this contradicts that $f_1(x^*,y^*)>f_2(x^*,y^*)$.

Hence, any maximizer $(x^*,y^*)$ of $f(x,y)$ over $D$ satisfies $f_1(x^*,y^*)=f_2(x^*,y^*)$, as 
desired. We thus only need to consider points $(x,y)\in D$ satisfying $f_1(x,y)=f_2(x,y)$, when seeking a maximizer 
of~\eqref{eq:maxProbSimpl}.

By expanding $f_1(x,y)=f_2(x,y)$, we get
\begin{equation*}
-x^2 - 6xy + 8x - 5y^2 +8y -4 = 0 \enspace.
\end{equation*}
Solving the above equation for $x$ gives
\begin{equation*}
x \in \left\{
4 - 3y - 2 \sqrt{(3-y)(1-y)}\;,\;
4 - 3y + 2 \sqrt{(3-y)(1-y)}
\right\}\enspace.
\end{equation*}
Because $x\leq 1$ and $y\leq \sfrac{1}{2}$, we must have
\begin{equation}\label{eq:xFromY}
x = 4 - 3y - 2 \sqrt{(3-y)(1-y)}\enspace.
\end{equation}
By substituting $x$ using the above equation, we obtain
\begin{equation*}
f_1(x,y)=f_2(x,y)=
2 y + 2 \sqrt{\left(3 - y\right) \left(1-y\right)} - 3\enspace \eqqcolon h(y)\enspace.
\end{equation*}
Finally, observing that the derivative of $h$ with respect to $y$, which is given by 
\begin{equation*}
\frac{dh(y)}{dy} = \frac{2 y - 4}{\sqrt{(3-y)(1-y)}} + 2\enspace,
\end{equation*}
is non-positive for $y\in [0,\sfrac{1}{2}]$, we have that $h(y)$ is maximized for $y^*=0$. By~\eqref{eq:xFromY}, this leads to $x^*=4-2\sqrt{3}$. Because these values for $x^*$ and $y^*$ satisfy $(x^*,y^*)\in D$, they are indeed a maximizer for~\eqref{eq:maxProbSimpl}, with resulting value $f_1(x^*,y^*) = f_2(x^*,y^*) = 2(\sqrt{3}-1)$, as desired.
\end{proof}

\section{Reduction to $k$-Wide Instances}\label{sec:reductionToWideTrees}

In this section, we show Theorem~\ref{thm:decomp}, i.e., that general TAP instances can be reduced to $O(1)$-wide ones by losing only a small constant in the approximation guarantee.
Let $k\in \mathbb{Z}_{\geq 1}$, and let $\mathcal{A}$ be an algorithm that is an $\alpha$-approximation for TAP on $k$-wide instances.

A key ingredient in our reduction is the classical cut-LP for TAP, which, we recall, is given by 
\begin{equation*}
\min\left\{x(L)  \;\middle\vert\; x\in \Pi_T\right\}\enspace.
\end{equation*}
To reduce general TAP to $O(1)$-wide TAP, we start with the above cut-LP, and then use the ellipsoid method to strengthen the LP through cutting planes that we generate by using the algorithm $\mathcal{A}$ on well-chosen sub-problems that are $O(1)$-wide.
To better illustrate the idea of our approach, we first state a reduction result that can be derived by using techniques introduced by Adjiashvili~\cite{adjiashvili2017beating}, and which we crucially exploit later on. In short, the result shows that the problem of finding a good rounding algorithm for the cut-LP can be reduced to finding a good rounding algorithm for $O(1)$-wide instances.

In the lemma below, a partition of $T=(V,E)$ into subtrees $T_1=(V_1,E_1),\ldots,T_q=(V_q,E_q)$ means that $T_1,\ldots, T_q$ are connected subgraphs of $T$ (hence trees), and $\{E_1,\ldots, E_q\}$ partitions $E$. Moreover, the notation $T_i/H_i$, where $H_i\subseteq E_i$ is a subset of the edges of $T_i$, denotes the graph obtained from $G_i$ by contracting $H_i$.
\begin{lemma}\label{lem:decompALaDavid}
Let $k\in \mathbb{Z}_{\geq 1}$, let $(T,L)$ be a TAP instance, and let $x\in \Pi_T$.
Then one can efficiently partition $T$ into subtrees $T_i=(V_i,E_i)$ for $i\in [q]$ and find $H_i\subseteq E_i$ for $i\in [q]$ such that:
\begin{enumerate}[label=(\roman*),itemsep=-0.2em,topsep=0.3em]
\item\label{item:thinSubInstances} For $i\in [q]$, $T_i/H_i$ is $k$-wide.
\item\label{item:canCoverHi} One can efficiently get a link set $M\subseteq L$ with $|M| = O(\sfrac{1}{\sqrt{k}}) \cdot x(L)$ that covers all edges of $\cup_{i=1}^q H_i$.
\item\label{item:goodLpSolForSubProbs} $\sum_{i=1}^q x(\cov(E_i)) =  (1+O(\sfrac{1}{\sqrt{k}}))\cdot x(L)$.
\end{enumerate}
Moreover, one can choose the edge sets $H_i$ for $i\in [q]$ to form subtrees, and in the $k$-wide trees $T_i/H_i$, the contracted node that corresponds to $H_i$ can be chosen as the root.
\end{lemma}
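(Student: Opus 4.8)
The plan is to follow the decomposition strategy of Adjiashvili~\cite{adjiashvili2017beating}, organized around a breadth-first layering of the tree $T$ together with a random shift. Root $T$ at an arbitrary vertex and let $d(v)$ denote the depth of $v$. Pick a parameter $t=\Theta(\sqrt{k})$ and a uniformly random offset $s\in\{0,1,\ldots,t-1\}$, and call a vertex $v$ a \emph{cut vertex} if $d(v)\equiv s \pmod t$. Removing the tree-edges immediately above all cut vertices breaks $T$ into a collection of subtrees; each such subtree $T_i$ spans at most $t$ consecutive depth-levels, with a unique topmost vertex $\rho_i$ (a cut vertex, or the root of $T$). For each $i$, let $H_i\subseteq E_i$ be the set of edges of $T_i$ lying on paths from $\rho_i$ to the cut vertices at the next level $d(\rho_i)+t$ that are contained in $T_i$ — equivalently, a connected ``spine'' subtree hanging from $\rho_i$ whose leaves are exactly those next-level cut vertices. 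Contracting $H_i$ in $T_i$ produces $T_i/H_i$ rooted at the contracted node. The first task is to verify item~\ref{item:thinSubInstances}: every principal subtree of $T_i/H_i$ hangs below a single contracted-away cut vertex at depth $d(\rho_i)+t$, so it is an induced subtree of $T$ spanning at most $t$ further depth-levels, and one argues (as in~\cite{adjiashvili2017beating}) that the number of leaves of such a ``thin'' subtree is $O(t^2)=O(k)$ after choosing the constants appropriately; this is really a statement about trees of bounded depth, and is where the relation $k=\Theta(t^2)$ is pinned down. I would also note here that, by construction, each $H_i$ is a subtree and the contracted node is chosen as the root, giving the final sentence of the lemma.

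Next I would handle item~\ref{item:goodLpSolForSubProbs}, the LP-value accounting. Each tree-edge $e\in E$ belongs to exactly one part $E_i$, so $\sum_i x(\cov(E_i))$ overcounts $x(L)$ only through links $\ell$ whose path $P_\ell$ meets more than one part, i.e.\ crosses at least one cut-edge; such a link is counted once for each part it touches, hence at most $1+(\text{number of cut-edges on }P_\ell)$ times. The key point is that for a fixed link $\ell$, over the random choice of the offset $s$, the expected number of cut-edges on $P_\ell$ is at most $|P_\ell|/t$, and more usefully $\Pr[P_\ell \text{ meets a cut-edge}]\le \min\{1,|P_\ell|/t\}$; but since the tree has depth at most $n$ this crude bound is not enough on its own, so instead one charges the overcount of $\ell$ against the $x$-weight it contributes on the cut-edges it crosses and sums. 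The cleanest route is: the total overcount $\sum_i x(\cov(E_i))-x(L)$ is at most $\sum_{\text{cut-edges }e} x(\cov(e))\le \sum_{\text{cut-edges }e}\sum_{\ell\in\cov(e)}x_\ell$, and exchanging sums this is $\sum_{\ell\in L} x_\ell\cdot(\#\text{cut-edges on }P_\ell)$; taking expectation over $s$ bounds $\E[\#\text{cut-edges on }P_\ell]\le |P_\ell|/t$, and finally one uses the standard observation that $\sum_{\ell} x_\ell |P_\ell| = O(1)\cdot x(L)\cdot(\text{something})$ — here one must be a bit careful, and the right statement (again from~\cite{adjiashvili2017beating}) is that one can preprocess so that every link has $|P_\ell|=O(k)$, or alternatively charge levels rather than edges. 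Modulo that preprocessing, $\E[\text{overcount}]=O(1/t)\cdot x(L)=O(1/\sqrt{k})\cdot x(L)$, and we fix an $s$ achieving the expectation. Item~\ref{item:canCoverHi} is analogous: the edges $\cup_i H_i$ all lie within $t$ levels below cut vertices, and one covers them greedily by links — a single link up from each relevant leaf — whose total number is, in expectation over $s$, an $O(1/\sqrt{k})$ fraction of $x(L)$ by the same level-counting argument; again we fix a good $s$ (the same $s$ works for all three items, by a union/linearity argument on the three expectations, each of which is $O(1/\sqrt{k})\cdot x(L)$).

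The main obstacle I anticipate is item~\ref{item:thinSubInstances} — proving that a subtree of $T$ spanning $O(\sqrt{k})$ consecutive depth levels has only $O(k)$ leaves is \emph{false} without further structure (a complete binary tree of depth $\Theta(\sqrt k)$ has $2^{\Theta(\sqrt k)}$ leaves), so the real content is that Adjiashvili's decomposition is not a plain BFS layering but one that also \emph{contracts} the ``heavy'' portions into $H_i$ and carefully routes the extra covering links, so that what remains genuinely has few leaves; reconstructing that argument faithfully is the technical heart, and it is precisely the part the paper says it defers to the long version. The secondary obstacle is the bookkeeping in items~\ref{item:canCoverHi}–\ref{item:goodLpSolForSubProbs}: making the ``$O(1/\sqrt k)$'' bounds honest requires either the link-length preprocessing or a level-based charging scheme, and one must check the same random offset $s$ can be used to make all three estimates hold simultaneously, which follows since each fails with probability bounded away from $1$ by Markov, or more simply since we only need the expectations and can pick $s$ minimizing their sum.
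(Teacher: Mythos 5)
There is a genuine gap, and you have in fact identified it yourself: your decomposition is a depth-based BFS layering with a random shift, and, as you concede, a subtree spanning $O(\sqrt{k})$ consecutive depth levels can have $2^{\Theta(\sqrt{k})}$ leaves, so item~\ref{item:thinSubInstances} simply does not follow from your construction. This is not a detail deferred to a longer write-up; it is the heart of the lemma, and the correct mechanism is entirely different in kind. The paper's decomposition is driven by the LP solution $x$, not by tree depth: one repeatedly splits the tree at \emph{$\gamma$-light} edges, i.e.\ edges $e=\{u,v\}$ with $x(\cov(e))\leq \gamma\cdot\min\{x(\cov(E_{T_i}(e,u))\setminus\cov(e)),\,x(\cov(E_{T_i}(e,v))\setminus\cov(e))\}$ for $\gamma=1/\sqrt{k}$, always choosing a split that leaves one side with no further light edges, until every component is unsplittable. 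Within each resulting $T_i$ one picks a root $r_i$ (a sink of a natural orientation of $E_i$) and lets $H_i$ be the connected component containing $r_i$ of the \emph{$\zeta$-heavy} edges, those with $x(\cov(e))\geq\zeta$ for $\zeta=\sqrt{k}/4$. The leaf bound is then an LP argument: the edge $e=\{u,v\}$ attaching a principal subtree of $T_i/H_i$ to the contracted root is neither $\gamma$-light nor $\zeta$-heavy, which forces $x(\cov(E_{T_i}(e,v)))\leq \zeta/\gamma+\zeta\leq k/2$; since every leaf of that principal subtree needs $x$-coverage at least $1$ and each link covers at most two leaves, it has at most $k$ leaves. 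No statement about depth is ever needed.

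Your treatment of items~\ref{item:canCoverHi} and~\ref{item:goodLpSolForSubProbs} has the same problem in milder form: the ``level-counting'' bound $\sum_\ell x_\ell\,|P_\ell|=O(x(L))$ is false without a preprocessing step you do not supply, and the random offset is a detour. In the paper both items are deterministic consequences of the light/heavy definitions: for~\ref{item:goodLpSolForSubProbs}, each split edge $e_i$ satisfies $x(\cov(e_i))\leq\gamma\cdot x(\cov(E_i)\setminus\cov(e_i))$ and the sets $\cov(E_i)\setminus\cov(e_i)$ are pairwise disjoint, so the total overcount $2\sum_i x(\cov(e_i))$ is at most $2\gamma\, x(L)=O(1/\sqrt{k})\cdot x(L)$; for~\ref{item:canCoverHi}, every edge of $\cup_i H_i$ is $\zeta$-heavy, so $\frac{1}{\zeta}x$ fractionally covers them all and any $2$-approximate rounding yields a cover of size at most $\frac{2}{\zeta}x(L)=O(1/\sqrt{k})\cdot x(L)$. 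So the proposal cannot be repaired by tightening your estimates; it needs to be replaced by the LP-guided light/heavy splitting scheme.
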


The above lemma follows quite directly from techniques introduced by Adjiashvili~\cite{adjiashvili2017beating}, with some smaller changes.
Still, for completeness, we provide a formal proof of it in Appendix~\ref{sec:decomp}.
In what follows, we do not strictly need the 
fact that the $H_i$ can be chosen to form trees. Still, we want to highlight this additional property, which can lead 
to considerable simplifications when designing reductions to wide instances that do not rely on the ellipsoid method.
In particular, one could use our Lemma~\ref{lem:decompALaDavid} in Adjiashvili's approach~\cite{adjiashvili2017beating}, and avoid most technicalities involving so-called compound nodes.

Notice that Lemma~\ref{lem:decompALaDavid} indeed implies that a solution $x\in \Pi_T$ to the cut-LP for a TAP instance 
can be broken down into solutions to the cut-LPs of independent $k$-wide TAP instances, by increasing the objective 
value by a factor $1+O(\sfrac{1}{\sqrt{k}})$.
To see this, consider the problem of finding a smallest set of links that covers only the edges $E_i$ for some $i\in 
[q]$, where $T_j=(V_j,E_j)$ for $j\in[q]$ are the subtrees as stated in Lemma~\ref{lem:decompALaDavid}. This problem can 
be interpreted as a TAP instance on the subtree $T_i=(V_i,E_i)$, by interpreting each link $\ell\in L$ as a link that 
covers $P_\ell\cap E_i$.
Hence, the vector $x^i$ obtained from $x$ by setting to $0$ all coordinates corresponding to links in $L\setminus 
\cov(E_i)$, is a cut-LP solution for the TAP instance on $T_i$.
Moreover, the total cost of these cut-LP solutions is
\begin{equation*}
\sum_{i=1}^q x^i(L) = \sum_{i=1}^q x(\cov(E_i)) = (1+O(\sfrac{1}{\sqrt{k}}))\cdot x(L)\enspace,
\end{equation*}
where the last equality follows by point~\ref{item:goodLpSolForSubProbs} of Lemma~\ref{lem:decompALaDavid}, and is 
hence only by a factor $1+O(\sfrac{1}{\sqrt{k}})$ larger than the LP cost of $x$. Finally, by 
Lemma~\ref{lem:decompALaDavid}~\ref{item:canCoverHi} we can cover all edges in $\cup_{i=1}^q H_i$ at small cost. Thus, 
for each $i\in [q]$, it remains to cover the edges $E_i\setminus H_i$, which, after contracting $H_i$ which we already 
covered, reduces to a $k$-wide TAP instance on $T_i/H_i$ by point~\ref{item:thinSubInstances} of 
Lemma~\ref{lem:decompALaDavid}.

The above outline, and additional details provided in the proof of Lemma~\ref{lem:decompALaDavid}, highlights how one 
can try to reduce LP-based approaches to wide instances.
In particular, it shows that if $O(1)$-wide TAP instances have small integrality gap with respect to the cut-LP, then the cut-LP has a small integrality gap for any instance.
A similar reduction approach was key in some recent progress on TAP, and also WTAP with a bounded ratio between maximal and minimal cost~\cite{adjiashvili2017beating,fiorini_2018_approximating}. Even though such reductions have a certain versatility, so far, they only have been carried out for specific linear programs. For example, Adjiashvili~\cite{adjiashvili2017beating} extended the cut-LP with ``bundle''-constraints, and Fiorini et~al.~\cite{fiorini_2018_approximating} additionally added $\{0,\sfrac{1}{2}\}$-Chv\'atal-Gomory-constraints. In both cases the authors had to show that key properties are preserved by the decomposition.
In the following, we show how, based on the decomposition guaranteed by Lemma~\ref{lem:decompALaDavid}, one can reduce TAP instances to $O(1)$-wide ones without assuming a particular LP-based approach.

To obtain a black-box reduction, we show that the following holds. Given an $\alpha$-approximation algorithm for $k$-wide TAP instances, one can efficiently compute a point $x\in \Pi_T$ and an $\alpha$-approximation for each $k$-wide tree in the decomposition obtained by Lemma~\ref{lem:decompALaDavid}. This result is formalized in the following statement.
\begin{lemma}\label{lem:boostedDecomp}
Let $k\in \mathbb{Z}_{\geq 1}$.
Given an $\alpha$-approximation algorithm $\mathcal{A}$ for TAP on $k$-wide instances, one can, for any TAP instance $(T,L)$, efficiently compute:
\begin{enumerate}[label=(\roman*),nosep]
\item\label{item:xBoundedByOpt} A point $x\in \Pi_T$ satisfying $x(L)\leq \nu^*$, where $\nu^*$ is the optimal value of $(T,L)$,
\item a partition $T_i=(V_i, E_i)$ with $H_i\subseteq E_i$  $\forall i\in [q]$ of $T$ w.r.t.~$x$ as described by Lemma~\ref{lem:decompALaDavid}, and
\item\label{item:LiCheap} for $i\in [q]$, a link set $L_i\subseteq L$ covering $E_i\setminus H_i$ and satisfying $|L_i| \leq \alpha\cdot x(\cov(E_i\setminus H_i))$.
\end{enumerate}
\end{lemma}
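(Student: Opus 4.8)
The plan is to use the ellipsoid method on a suitable linear program whose separation oracle is implemented via the $\alpha$-approximation algorithm $\mathcal{A}$, together with the decomposition guaranteed by Lemma~\ref{lem:decompALaDavid}. The idea is that we cannot directly optimize the cut-LP augmented with constraints forcing each $k$-wide subproblem to be ``roundable'' by $\mathcal{A}$, since those constraints depend on the decomposition which itself depends on $x$; instead we use a ``partial'' separation oracle in the spirit of Carr--Fujito--Konjevod--Parekh~\cite{carr_2000_strengthening} and the recent application by Nutov~\cite{nutov_2017_tree}. Concretely, consider the LP $\min\{x(L) \mid x\in \Pi_T\}$, and run the ellipsoid method aiming to certify that its optimum is at most a target value $\nu$. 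At each candidate point $x$, if $x\notin \Pi_T$ we return a violated cut constraint as usual. If $x\in \Pi_T$, we invoke Lemma~\ref{lem:decompALaDavid} to obtain the partition $T_i=(V_i,E_i)$ with edge sets $H_i\subseteq E_i$, form the $k$-wide TAP instances on $T_i/H_i$, and run $\mathcal{A}$ on each of them. This yields link sets covering $E_i\setminus H_i$ whose total size, together with the cost of covering $\cup_i H_i$ from point~\ref{item:canCoverHi}, is at most $(\alpha + O(\sfrac{1}{\sqrt{k}}))\cdot x(L)$. If this total is at most $\nu$, we have found the desired output for that $\nu$; otherwise, $x$ violates a valid inequality $x(L)\ge \nu'$ for the appropriate $\nu'$, and we can feed this as the separating hyperplane.

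The key point making this work is the following observation about correctness of the partial oracle: whenever $x\in \Pi_T$ and the value produced from $x$ via the decomposition-plus-$\mathcal{A}$ pipeline exceeds $\nu$, we may safely \emph{cut off} $x$, because any point that is actually good (in particular, the characteristic vector $\chi^*$ of an optimal TAP solution, scaled appropriately) will never be cut off. Indeed, $\chi^*\in\Pi_T$, and running the pipeline on $\chi^*$ produces, for each $i$, a link set of size at most $\alpha$ times the optimum of the $k$-wide subinstance on $T_i/H_i$, which is at most $\alpha\cdot \chi^*(\cov(E_i\setminus H_i))$ since $\chi^*$ restricted to $\cov(E_i\setminus H_i)$ is a feasible fractional (in fact integral) cover of $E_i\setminus H_i$ in $T_i/H_i$; summing and adding the $H_i$-covering cost gives at most $(\alpha+O(\sfrac1{\sqrt k}))\,\chi^*(L) = (\alpha+O(\sfrac1{\sqrt k}))\,\nu^*$. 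Hence, choosing the target $\nu = \nu^*$ (which can be found by binary search, or simply by decreasing $\nu$ until the ellipsoid method reports infeasibility and backing off), the ellipsoid method must terminate at a point $x$ for which the pipeline succeeds, and by construction $x\in\Pi_T$, the decomposition is as required by Lemma~\ref{lem:decompALaDavid}, and the link sets $L_i$ satisfy $|L_i|\le \alpha\cdot x(\cov(E_i\setminus H_i))$. (Here I would note that we should actually set up the LP so that the oracle certifies $x(L)\le \nu^*$ rather than the stronger approximation bound; the approximation factor $\alpha+O(\sfrac1{\sqrt k})$ is then recovered by comparing the size of the final solution $\bigl(\bigcup_i L_i\bigr)\cup M$ to $x(L)\le\nu^*$, which is exactly what Theorem~\ref{thm:decomp} needs — but this lemma only asks for items \ref{item:xBoundedByOpt}--\ref{item:LiCheap}, so it suffices to guarantee $x(L)\le\nu^*$ and that each $L_i$ is an $\alpha$-approximation of $x(\cov(E_i\setminus H_i))$ on the corresponding $k$-wide instance.)

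A few technical points need care. First, the bound $|L_i|\le \alpha\cdot x(\cov(E_i\setminus H_i))$ requires that, on the $k$-wide instance $T_i/H_i$, the restriction $x_i$ of $x$ to $\cov(E_i\setminus H_i)$ is a \emph{feasible fractional solution} — this is immediate since $x\in\Pi_T$ implies $x_i(\cov(e))\ge 1$ for every surviving edge $e$ of $T_i/H_i$, so the optimum of that $k$-wide instance is at most $x_i(L) = x(\cov(E_i\setminus H_i))$, and $\mathcal{A}$ being an $\alpha$-approximation gives $|L_i|\le\alpha$ times this optimum. Second, the ellipsoid method requires the feasible region to be full-dimensional / bounded away appropriately; this is standard and one can work with a suitably perturbed/truncated version of $\Pi_T$, or appeal to the known fact that cut-LP optima and vertices have polynomially bounded encoding length. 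Third, one must ensure the whole procedure runs in polynomial time: $\mathcal{A}$ is called a polynomial number of times (once per ellipsoid iteration, of which there are polynomially many), and Lemma~\ref{lem:decompALaDavid} together with point~\ref{item:canCoverHi} is efficient. I expect the \emph{main obstacle} to be the clean formulation of what the partial separation oracle certifies and the argument that cutting off ``bad'' points is valid because $\chi^*$ survives — once that invariant is nailed down, everything else is a routine invocation of the ellipsoid method plus Lemma~\ref{lem:decompALaDavid}; the subtlety is purely in the bookkeeping of which inequality to return and in matching the target value $\nu^*$ via binary search on integers in $\{1,\dots,|E|\}$.
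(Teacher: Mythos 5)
Your overall strategy --- run the ellipsoid method with a partial separation oracle built from Lemma~\ref{lem:decompALaDavid} and $\mathcal{A}$, and argue correctness from the fact that the characteristic vector $\chi^*$ of an optimal solution is never cut off --- is the same as the paper's. However, there is a genuine gap at the step you call ``immediate''. You claim that since the restriction $x_i$ of $x\in\Pi_T$ to $\cov(E_i\setminus H_i)$ is a feasible fractional cut-LP solution for the $k$-wide instance $T_i/H_i$, ``the optimum of that $k$-wide instance is at most $x_i(L)$''. This reverses the LP-relaxation inequality: feasibility of $x_i$ only says that $x_i(L)$ is at least the \emph{fractional} optimum, which is a \emph{lower} bound on the integral optimum $\OPT(E_i\setminus H_i)$; the integral optimum can exceed $x(\cov(E_i\setminus H_i))$ by up to the integrality gap of the cut-LP, which can be close to $1.5$ even on $k$-wide instances (see the footnote in Section~\ref{sec:lp}). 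Hence the bound of item~\ref{item:LiCheap}, $|L_i|\leq \alpha\cdot x(\cov(E_i\setminus H_i))$, does \emph{not} follow for an arbitrary $x\in\Pi_T$; if it did, no ellipsoid machinery would be needed at all, since one could solve the cut-LP exactly and run the pipeline once.

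This error propagates into your oracle: when the pipeline fails on a point $y$, you propose returning ``$x(L)\geq \nu'$'', which is neither well-specified nor a valid inequality for $\chi^*$ (a feasible solution produced by the pipeline only \emph{upper}-bounds $\nu^*$). The paper's fix is to define a fixed strengthened polytope $\Pi_T(\nu,k)$: the cut-LP together with $x(L)\leq \nu$ and the constraints $x(\cov(U\setminus H))\geq \OPT(U\setminus H)$ for \emph{every} $k$-wide sub-instance $(V,U)/H$; this polytope contains $\chi^*$ whenever $\nu\geq\nu^*$. If $\mathcal{A}$ returns some $L_i$ with $y(\cov(E_i\setminus H_i))<\lceil |L_i|/\alpha\rceil$, the valid violated inequality to return is the \emph{local} one, $x(\cov(E_i\setminus H_i))\geq \lceil |L_i|/\alpha\rceil$, which holds on $\Pi_T(\nu,k)$ because $\lceil |L_i|/\alpha\rceil\leq\OPT(E_i\setminus H_i)$; conversely, if no such $i$ exists, then item~\ref{item:LiCheap} holds for $y$ by definition, whether or not $y\in\Pi_T(\nu,k)$. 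With this oracle the ellipsoid run either reaches a point of $\Pi_T(\bar\nu,k)$ or stalls on an unseparated point, and in both cases all three items follow; a binary search over $\nu\in\{0,\ldots,|L|\}$ yields $\bar\nu\leq\nu^*$, giving item~\ref{item:xBoundedByOpt}. Your binary-search and encoding-length remarks are fine, but the argument stands or falls on formulating this fixed polytope and returning the local cut, neither of which your write-up supplies.
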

We first observe that Theorem~\ref{thm:decomp} is an immediate consequence of the above lemma.
\begin{proof}[Proof of Theorem~\ref{thm:decomp}]
Because the partition $T_i=(V_i,E_i)$ and $H_i\subseteq E_i$ for $i\in [q]$ of $T$ satisfy the conditions of 
Lemma~\ref{lem:decompALaDavid}, we can efficiently obtain a set of links $M\subseteq L$ by 
Lemma~\ref{lem:decompALaDavid} such that $|M| = O(\sfrac{1}{\sqrt{k}})\cdot x(L)$ and $M$ covers all edges of $\cup_{i=1}^q 
H_i$. We return the solution $Q\coloneqq M \cup \bigcup_{i=1}^q L_i$. $Q$ is clearly a solution to the TAP instance 
$(T,L)$. It remains to show that its approximation guarantee is $\alpha+O(\sfrac{1}{\sqrt{k}})$, which follows due to
\begin{align*}
\left\vert M \cup \bigcup_{i=1}^q L_i\right\vert
  &\leq  |M| +\sum_{i=1}^q |L_i| \\
  &\leq O\left(\frac{1}{\sqrt{k}}\right)\cdot x(L) + \alpha\cdot \sum_{i=1}^q x(\cov(E_i\setminus H_i))\\
  &\leq O\left(\frac{1}{\sqrt{k}}\right)\cdot x(L) + \alpha\cdot \left(1+O\left(\frac{1}{\sqrt{k}}\right)\right) \cdot x(L)\\
  &= \left(\alpha + O\left(\frac{1}{\sqrt{k}}\right)\right)\cdot x(L)\\
  &= \left(\alpha + O\left(\frac{1}{\sqrt{k}}\right)\right)\cdot \nu^*\enspace,
\end{align*}
where the second inequality follows from Lemma~\ref{lem:decompALaDavid} \ref{item:canCoverHi} and 
Lemma~\ref{lem:boostedDecomp}~\ref{item:LiCheap}, the third inequality follows from 
Lemma~\ref{lem:decompALaDavid}~\ref{item:goodLpSolForSubProbs}, and the last equation from Lemma~\ref{lem:boostedDecomp}~\ref{item:xBoundedByOpt}.
\end{proof}

It remains to show Lemma~\ref{lem:boostedDecomp}. To this end, let $(T,L)$ be a TAP instance on a tree $T=(V,E)$ with links $L\subseteq \binom{V}{2}$.
Consider the following polytope, parameterized by a value $\nu\in \mathbb{Z}_{\geq 0}$ and $k\in 
\mathbb{Z}_{\geq 1}$, and where we denote by $\OPT(S)$, for any edge set $S\subseteq E$, the minimum number of links 
needed to cover all edges of $S$.
\begin{equation*}
\Pi_T(\nu,k) \coloneqq \left\{
x\in \Pi_T \;\middle\vert\!
\begin{array}{r@{\;\,}c@{\;\,}l@{\quad}l}
x(\cov(U\setminus H))  &\geq  &\OPT(U\setminus H)
    & \forall U\subseteq E, H\subseteq U \text{ s.t.~}(V,U)/H \text{ is $k$-wide,}\\
x(L) &\leq &\nu\enspace &
\end{array}
\!\right\}
\end{equation*}
To clarify, stating that $(V,U)/H$ is $k$-wide in the first set of constraints, implies that $U\subseteq E$ and $H\subseteq U$ must be such that $(V,U)/H$ is a tree (that is $k$-wide).

In words, $\Pi_T(\nu,k)$ describes all cut-LP solutions of value at most $\nu$ with the additional requirement that on any 
$k$-wide sub-instance $(V,U)/H$ defined by edge sets $H\subseteq U \subseteq E$, the $x$-value of the links covering 
edges in $(V,U)/H$ is no lower than the optimal value for this sub-instance.
Notice that if $\nu^*\in \mathbb{Z}_{\geq 0}$ is the optimal value of the original instance, then $\Pi_T(\nu^*,k)$ 
contains the characteristic vector of an optimal solution, and is thus indeed a relaxation of the considered TAP 
problem. Note that we can easily ``guess'' the optimal value $\nu^*$, because it lies within $\{0,\ldots, |L|\}$, and 
thus assume that $\nu^*$ is known.\footnote{As we will discuss later, we can also use a binary search technique to get 
an approximation of $\nu^*$ that is good enough for us. This speeds up the procedure, which is interesting 
when extending our reduction to weighted instances with small $\sfrac{c_{\max}}{c_{\min}}$.}

Using Lemma~\ref{lem:decompALaDavid}, it is not hard to see that the integrality gap of the linear program $\min\{x(L) 
\mid x\in \Pi_T(\nu^*,k)\}$ is bounded by $1+O(\sfrac{1}{\sqrt{k}})$.
\footnote{Indeed, we can apply Lemma~\ref{lem:decompALaDavid} to an optimal solution $y$ to $\min\{x(L) \mid x\in \Pi_T(\nu^*,k)\}$, because $\Pi_T(\nu^*,k) \subseteq \Pi_T$. This allows for partitioning $T$ into subtrees $T_i=(V_i,E_i)$ for $i\in [q]$ with $H_i\subseteq E_i$ that satisfy the conditions of Lemma~\ref{lem:decompALaDavid} with respect to $y$.
Point~\ref{item:canCoverHi} of the lemma shows that covering all edges $\cup_{i=1}^q H_i$ can be done with at most $O(\sfrac{1}{\sqrt{k}})\cdot y(L)$ links. Moreover, Lemma~\ref{lem:decompALaDavid}~\ref{item:goodLpSolForSubProbs} shows that if, for each $i\in [q]$, the edge set $E_i\setminus H_i$ can be covered by at most $y(\cov(E_i))$ links, then the obtained solution uses at most $(1+O(\sfrac{1}{\sqrt{k}})) y(L)$ links, thus leading to an integrality gap of $1+O(\sfrac{1}{\sqrt{k}})$. This is indeed possible because $\Pi_T(\nu^*,k)$ contains constraints that guarantee that $E_i\setminus H_i$ can be covered by at most $y(\cov(E_i\setminus H_i))\leq y(\cov(E_i))$ links, as long as $(V,E_i)\setminus H_i$ is $k$-wide, which holds due to Lemma~\ref{lem:decompALaDavid}~\ref{item:thinSubInstances}.
}
However, optimizing (or separating) over $\Pi_T(\nu^*,k)$ is 
difficult. We therefore use a \emph{partial separation oracle} that correctly never separates a point $y\in \mathbb{R}^L$ that is in $\Pi_T(\nu^*,k)$, but may sometimes not separate points $y\in \mathbb{R}^L$ that are not in $\Pi_T(\nu^*,k)$.
The partial separation oracle for a point $y\in \mathbb{R}^L$, and an arbitrary value $\nu\in \mathbb{Z}_{\geq 0}$, is 
described by Algorithm~\ref{alg:partialSepOracle}.

For simplicity, our description is given for $\mathcal{A}$ being a deterministic $\alpha$-approximation algorithm for 
$k$-wide instances.
The extension to randomized algorithms $\mathcal{A}$ can be done with standard techniques. More precisely, if $\mathcal{A}$ is a randomized procedure, one can, for any constant $\delta >0$, obtain a $(\alpha+\delta)$-approximation with high probability in polynomial time, by doing multiple runs of $\mathcal{A}$ on the same instance and returning the best outcome.\footnote{Notice that a usual application of Markov's inequality would guarantee that we obtain a $(1+O(\delta))\cdot \alpha$-approximation with high probability. However, because we can assume $\alpha\leq 2$, as $2$-approximations are well-known even for WTAP, this implies that we can obtain an $(\alpha+\delta)$-approximation.}
Hence, because we focus on polynomial-time procedures, we can assume that with high probability, $\mathcal{A}$ returns a $(\alpha+\delta)$-approximation whenever it is called in the following.

\begin{algorithm2e}[h]
\begin{enumerate}[leftmargin=0cm,rightmargin=1cm, itemsep=0.2em]
\item If $y(L) > \nu$, return the separating hyperplane $x(L)\leq \nu$.

\item If $y\not\in \Pi_T$, return a separating hyperplane that separates $y$ from $\Pi_T$. 

\item\label{item:sepOverKWide} Decompose $T$ into $T_i  = (V_i,E_i)$ for $i\in [q]$ and find sets $H_i\subseteq E_i$ for $i\in [q]$ by using Lemma~\ref{lem:decompALaDavid} with $x=y$ and parameter $k$.
For each $i\in [q]$, use $\mathcal{A}$ to obtain an $\alpha$-approximation to the $k$-wide TAP instance $T_i/H_i$; let $L_i\subseteq L$ be the obtained solution.
If $y(\cov(E_i\setminus H_i))< \lceil \sfrac{|L_i|}{\alpha}\rceil$ for some $i\in [q]$, then return the separating hyperplane $x(\cov(E_i\setminus H_i)) \geq \lceil \sfrac{|L_i|}{\alpha}\rceil$.

\end{enumerate}
\caption{Partial separation oracle for $\Pi_T(\nu,k)$}
\label{alg:partialSepOracle}
\end{algorithm2e}
Notice that whenever Algorithm~\ref{alg:partialSepOracle} returns a separating hyperplane for some point $y\in 
\mathbb{R}^L$, then this is indeed a valid one. This is clear for the first two steps of the algorithm. Moreover, if 
step~\ref{item:sepOverKWide} returns a hyperplane $x(\cov(E_i\setminus H_i)) \geq \lceil \sfrac{|L_i|}{\alpha}\rceil$, 
then this hyperplane indeed separates $y$ over $\Pi_T(\nu,k)$ because $|L_i| \leq \alpha \cdot \OPT(E_i\setminus H_i)$, 
as $L_i$ is an $\alpha$-approximate solution for the TAP problem on $T_i$, and $\OPT(E_i\setminus H_i)$ is integer. 
Thus, $\lceil \sfrac{|L_i|}{\alpha}\rceil \leq \OPT(E_i\setminus H_i)$.

We now show that a point $x\in \Pi_T$, and links $L_i\subseteq L$ for $i\in [q]$ as guaranteed by 
Lemma~\ref{lem:boostedDecomp}, can be obtained efficiently by using any version of the ellipsoid method that efficiently 
returns a point in $\Pi_T(\nu^*,k)$; however we run the ellipsoid method with the partial separation oracle given by 
Algorithm~\ref{alg:partialSepOracle} instead of a true separation oracle.
\begin{proof}[Proof of Lemma~\ref{lem:boostedDecomp}]
As mentioned, we could guess the optimal value $\nu^*$ of the TAP problem, because $\nu^*\in \{0,\ldots, |L|\}$. However, we will instead use a binary search technique that only leads to an additional multiplicative factor of $\log|L|$ in the running time (instead of $|L|$). This is useful in our discussion of how our reduction can be extended to some weighted instances.

Consider a fixed $\nu\in \{0,\ldots, |L|\}$. We run a classical ellipsoid type method to determine a point in 
$\Pi_T(\nu,k)$, if $\Pi_T(\nu,k)$ is non-empty. If we had a true (instead of a partial) separation oracle for $\Pi_T(\nu,k)$, 
then this could be done with a polynomial number of oracle calls and further operations taking polynomial time in the 
encoding lengths of the constraints. Notice that the constraints defining $\Pi_T(\nu,k)$ have small encoding length 
because all left-hand sides are $0/1$-vectors, and the right-hand sides are integers within $\{0,\ldots, |L|\}$. (We 
refer the interested reader to the excellent book of Gr\"otschel, Lov\'asz, and 
Schrijver~\cite{groetschel_1993_geometric} for details on the ellipsoid method.)
This implies that, assuming $\Pi_T(\nu,k)\neq \emptyset$, if we run the ellipsoid method to find a point in $\Pi_T(\nu,k)$ 
with the partial separation oracle, then one of the following happens:
\begin{enumerate}[label=(\roman*),nosep,topsep=0.4em]
\item\label{item:alwaysSeparate} The partial separation oracle will throughout the ellipsoid method always return a 
separating hyperplane whenever it is called for a point $y\not\in\Pi_T(\nu,k)$. In this case, we get a point $x\in \Pi_T(\nu,k)$ in polynomial time as desired.

\item\label{item:onceNotSeparated} At some point during the ellipsoid method, the separation oracle will be invoked with a point $y\in \mathbb{R}^L$ for which no separating hyperplane is found.
\end{enumerate}
By employing binary search on $\nu\in \{0,\ldots, |L|\}$, we find a value $\bar{\nu}\in \{0,\ldots, |L|\}$ such that running 
the above-described ellipsoid method with partial separation oracle on $\Pi_T(\bar{\nu}-1,k)$ returns that 
$\Pi_T(\bar{\nu}-1,k)=\emptyset$; however, for $\Pi_T(\bar{\nu},k)$ it returns that the polytope is not empty. Such a 
$\bar{\nu}$ can clearly be found through binary search over $\{0,\ldots, |L|\}$ by $O(\log|L|)$ calls of the ellipsoid 
method with partial separation oracle. Also notice that $\bar{\nu} \leq \nu^*$, because the partial separation oracle 
is ``weaker'' than a true separation oracle, i.e., whenever it separates a point $y\in \mathbb{R}^L$, then $y\not\in 
\Pi_T(\nu,k)$; however, sometimes points not in $\Pi_T(\nu,k)$ do not get separated.

Consider the run of the ellipsoid method with partial separation oracle on $\Pi_T(\bar{\nu},k)$. If~\ref{item:alwaysSeparate} happens, then we get a point $x\in \Pi_T(\bar{\nu},k)$ and a corresponding partition of $T$ into $T_i=(V_i,E_i)$ for $i\in [q]$ and $H_i\subseteq E_i$ for $i\in [q]$, as stated in Lemma~\ref{lem:decompALaDavid}. Running for $i\in [q]$ the $\alpha$-approximation $\mathcal{A}$ on the $k$-wide instance $T[E_i]/H_i$, a link set $L_i\subseteq L$ is obtained such that
\begin{equation*}
|L_i| \leq \alpha\cdot \OPT(E_i\setminus H_i) \leq \alpha \cdot x(\cov(E_i\setminus H_i))\enspace,
\end{equation*}
where the second inequality follows from $x\in \Pi_T(\bar{\nu},k)$. Hence, the point $x$ together with the partition of $T$, and the links $L_i$ for $i\in [q]$ fulfill the conditions of Lemma~\ref{lem:boostedDecomp} as desired.

Now assume that~\ref{item:onceNotSeparated} occurs, and let $x\in \mathbb{R}^L$ be a point for which the ellipsoid 
method called the partial separation oracle, and no separating hyperplane was returned for $x$. The partial separation 
oracle computed a partition of $T$ into $T_i=(V_i,E_i)$ with sets $H_i\subseteq E_i$ for $i\in [q]$ using 
Lemma~\ref{lem:decompALaDavid}. Moreover, for each $i\in [q]$, a link set $L_i\subseteq L$ was determined that satisfies 
$x(\cov(E_i\setminus H_i))\geq \lceil \sfrac{|L_i|}{\alpha}\rceil$, because $x$ was not separated by 
step~\ref{item:sepOverKWide} of the partial separation procedure. Hence, for each $i\in [q]$, we have $|L_i| \leq 
\alpha\cdot x(\cov(E_i\setminus H_i))$.
Again, the point $x$ together with the partition of $T$ and the links $L_i$ for $i\in [q]$ satisfy the conditions of Lemma~\ref{lem:boostedDecomp}.
\end{proof}

\section{Decomposing Into $k$-Wide Trees}\label{sec:decomp}

In this section, we show Lemma~\ref{lem:decompALaDavid}. As already mentioned, this statement follows quite directly from techniques introduced by Adjiashvili~\cite{adjiashvili2017beating}, with some smaller changes.
Still, for completeness, we provide a formal proof of the lemma in this section, by closely following Adjiashvili's approach. 

We are given a TAP instance $(T,L)$ and a vector $x\in \Pi_T$, and our goal is to construct a decomposition into $q$ subtrees $D=\{T_1,\ldots, T_q\}$ fulfilling the conditions of Lemma~\ref{lem:decompALaDavid}.
As in~\cite{adjiashvili2017beating}, to construct $D$, we rely on the concept of $\gamma$-light edges, which we introduce next. 
For a given TAP instance $(T=(V,E),L)$ and edge $e=\{u,v\}\in E$, let $T(e,u)$ be the subtree of $T$ that is 
obtained by removing $e$ and considering the obtained connected component containing $u$. 
Let $E_T(e,u)$ be the corresponding set of edges.
We proceed to define $\gamma$-light edges, which are edges on which the instance can be split by only incurring a small increase in cost. In words, $\gamma$-light edges with respect to some point $x\in \Pi_x$ are edges $e$ such that the $x$-value of links covering $e$ is small compared to the $x$-value of links on each of the two connected components of the tree when removing $e$. Because we need this notion also for subtrees, we define it with respect to any subtree $T_i$ of $T$.
\begin{definition}\label{def:light}
Let $(T,L)$ be a TAP instance, let $x\in \Pi_T$, and let $\gamma >0$.
Given a subtree $T_i$ of $T$, an edge $e=\{u,v\}$ of $T_i$ is called \emph{$\gamma$-light} in $T_i$ if 
\begin{equation}\label{eq:gammaLight}
x(\cov(e))\leq \gamma \cdot \min\left\{x(\cov(E_{T_i}(e,u))\setminus\cov(e)),x(\cov(E_{T_i}(e,v))\setminus
\cov(e))\right\}\enspace.
\end{equation}
\end{definition}

Analogous to Adjiashvili's approach, the decomposition $D$ will be constructed by applying a splitting operation iteratively. We originally set $D$ to $\{T\}$. 
First, we define what a splitting operation consists of:
\begin{definition}
	Given a $\gamma$-light edge $e=\{u,v\}$ in any $T_i\in D$, a \emph{split} consists of the following steps:
	\begin{enumerate}[label=(\roman*),nosep,itemsep=0.2em,topsep=0.2em]
		\item Remove $T_i$ from $D$.
		\item Let $T^u$ be the maximal subtree of $T_i$ that contains vertex $v$, and none of $v$'s neighbors 
except for $u$. Conversely, let $T^v$ be the maximal subtree of $T_i$ that contains $v$ and all of $v$'s neighbors, 
except for $u$. Insert $T^u$ and $T^v$ into $D$.
	\end{enumerate}
\end{definition}

Next, we have to specify \emph{which} light edge we will choose to split at any given iteration. To simplify our 
analysis, 
we want to choose light edges that produce one subtree with no light edges. We have the following observation:
\begin{observation}
Given a TAP instance $(T,L)$ and $x\in \Pi_T$,
if we can split some subtree $T'$ of $T$ into $T^1$ and $T^2$ and some edge $e$ is $\gamma$-light in $T^1$, then $e$ is 
light in $T'$ as well.
\end{observation}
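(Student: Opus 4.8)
The plan is to prove this observation by a short monotonicity argument: passing from $T^1$ to the larger tree $T'$ can only enlarge the two ``sides'' appearing in the definition of a $\gamma$-light edge, while it leaves $x(\cov(e))$ unchanged, so $\gamma$-lightness is preserved.

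First I would record the structural facts that $T^1$ is a connected subgraph (subtree) of $T'$ and that the edge $e=\{u,v\}$ in question is an edge of $T^1$, hence also of $T'$. The heart of the argument is then the claim that
$$
E_{T^1}(e,u)\subseteq E_{T'}(e,u)
\qquad\text{and}\qquad
E_{T^1}(e,v)\subseteq E_{T'}(e,v).
$$
To see this, note that since $T'$ is a tree, deleting $e$ splits $T'$ into exactly two connected components, one containing $u$ and one containing $v$; since $T^1$ is a connected subgraph of $T'$ that contains $e$, deleting $e$ from $T^1$ also separates $u$ from $v$, because the only $u$--$v$ path in $T'$ is $e$ itself, so there is no $u$--$v$ path in $T^1$ avoiding $e$. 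Hence the component of $T^1-e$ containing $u$ is a subgraph of the component of $T'-e$ containing $u$, and likewise on the $v$ side, which yields the two edge-set inclusions.

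Given these inclusions, I would conclude using monotonicity of $\cov(\cdot)$ with respect to edge-set inclusion, together with $x\in\Pi_T\subseteq[0,1]^L$ (so $x\ge 0$), to obtain
$$
x\bigl(\cov(E_{T^1}(e,u))\setminus\cov(e)\bigr)\;\le\;x\bigl(\cov(E_{T'}(e,u))\setminus\cov(e)\bigr)
$$
and the analogous inequality with $v$ in place of $u$; I would also note that $\cov(e)$ is independent of the ambient subtree, since $e$ belongs to both $T^1$ and $T'$. Taking the minimum of the two quantities and multiplying by $\gamma>0$ shows that the right-hand side of the $\gamma$-lightness inequality~\eqref{eq:gammaLight} for $e$ in $T'$ is at least the corresponding right-hand side for $e$ in $T^1$; combining this with the hypothesis that $e$ is $\gamma$-light in $T^1$ gives $x(\cov(e))\le\gamma\cdot\min\{x(\cov(E_{T'}(e,u))\setminus\cov(e)),\,x(\cov(E_{T'}(e,v))\setminus\cov(e))\}$, i.e.\ $e$ is $\gamma$-light in $T'$.

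I do not expect a real obstacle here: this is essentially a monotonicity remark, and the only step needing a line of care is the edge-set inclusion above, which rests on the elementary fact that in a tree an edge is the unique path between its endpoints, so removing it disconnects those endpoints in every connected subgraph that contains it.
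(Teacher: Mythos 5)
Your proof is correct and matches the paper's argument: the paper likewise justifies the observation solely via the inclusion $E_{T^1}(e,u)\subseteq E_{T'}(e,u)$ (and its analogue for $v$), combined with monotonicity of $x(\cov(\cdot)\setminus\cov(e))$ under set inclusion. Your additional verification of the inclusion via the fact that removing $e$ disconnects $u$ from $v$ in any connected subgraph containing $e$ is a fine, if routine, elaboration of what the paper leaves implicit.
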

This observation immediately follows from the fact that for any subtree $T^1$ of $T$ and any edge $e=\{u,v\}$ that is part of $T^1$, we have $E_{T^1}(e,u)\subseteq E_T(e,u)$.
The above observation produces the following corollary, regarding the existence of good splits:
\begin{corollary}
Given a TAP instance $(T,L)$ and a vector $x\in \Pi_T$, if there exists a $\gamma$-light edge in subtree $Q$ of $T$ then 
there exists a $\gamma$-light edge $\{u,v\}$ in $Q$ such that splitting $Q$ into $Q^u$ and $Q^v$ implies that at least one of $Q^u$ and $Q^v$ contain no $\gamma$-light edges.
\end{corollary}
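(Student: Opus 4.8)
The plan is to choose a $\gamma$-light edge of $Q$ that is as ``deep'' as possible with respect to an arbitrary rooting of $Q$, and then show that the part of the split lying below this edge is automatically free of $\gamma$-light edges.

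Concretely, I would first fix an arbitrary vertex $\rho$ of $Q$ as a root, so that every edge $\{u,v\}$ of $Q$ gets oriented: call the endpoint closer to $\rho$ the \emph{upper} one and the other one the \emph{lower} one, and let the \emph{depth} of a vertex be its distance to $\rho$ in $Q$. By hypothesis the set of $\gamma$-light edges of $Q$ is nonempty and finite, so I can pick among them an edge $e^* = \{u^*, v^*\}$, with $u^*$ upper and $v^*$ lower, whose lower endpoint $v^*$ has maximum depth. By the definition of a split, splitting $Q$ at $e^*$ produces the subtree $Q^{u^*}$, which contains $e^*$ together with the component of $Q - e^*$ containing $u^*$, and the subtree $Q^{v^*}$, which is exactly the component of $Q - e^*$ containing $v^*$; the latter is the subtree of $Q$ hanging below $v^*$, so $e^* \notin E(Q^{v^*})$ and every vertex of $Q^{v^*}$ is a descendant of $v^*$ in $Q$.

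Next I would show that $Q^{v^*}$ contains no $\gamma$-light edge. Suppose, for contradiction, that some edge $f = \{a,b\}$ of $Q^{v^*}$ is $\gamma$-light in $Q^{v^*}$. Since $Q^{v^*}$ arises from splitting the subtree $Q$ of $T$ at $e^*$, the Observation above (applied to this split) implies that $f$ is $\gamma$-light in $Q$ as well. Orient $f$ with respect to $\rho$; since the root-induced parent--child structure of $Q$ is inherited by $Q^{v^*}$ when the latter is rooted at $v^*$, the lower endpoint of $f$ is a proper descendant of its upper endpoint, and both endpoints lie in $Q^{v^*}$ and are hence descendants of $v^*$. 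Therefore the lower endpoint of $f$ has depth at least $\mathrm{depth}(v^*) + 1$, which contradicts the maximality in the choice of $e^*$ (note that $f \neq e^*$ automatically, since $e^* \notin E(Q^{v^*})$). Hence $Q^{v^*}$ has no $\gamma$-light edge.

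Finally, this shows $e^*$ is the edge the corollary asks for: it is $\gamma$-light in $Q$, and splitting $Q$ at $e^*$ yields the part $Q^{v^*}$ with no $\gamma$-light edges, so at least one of the two parts is $\gamma$-light-free. I do not expect a genuine obstacle; the only points that need a little care are the bookkeeping that the split places $e^*$ into $Q^{u^*}$ (so that $e^*$ itself need not be considered among the edges of $Q^{v^*}$), and the fact that the rooting of $Q$ restricted to $Q^{v^*}$ agrees with rooting $Q^{v^*}$ at $v^*$, which is what legitimizes comparing depths across $Q$ and $Q^{v^*}$. Both are immediate from the definitions of the split and of subtrees.
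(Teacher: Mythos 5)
Your proof is correct and matches the argument the paper leaves implicit: the paper states this corollary as an immediate consequence of the preceding observation, and the intended justification is exactly your extremal choice of a deepest $\gamma$-light edge, whose lower component must then be free of $\gamma$-light edges by that observation. The bookkeeping about which side of the split receives $e^*$ and the consistency of the rooting is handled correctly.
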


We call a a subtree with no light edges an \emph{unsplittable component} (or simply \emph{unsplittable}), and an edge $\{u,v\}$ as described in the above corollary is called \emph{$Q$-critical}.

Remember that  originally $D=\{T\}$. The decomposition process to construct $D$ is now straightforward: while there 
exists a $\gamma$-light edge in some $T_i\in D$, choose a $T_i$-critical $\gamma$-light edge $e$, and split $T_i$ at 
$e$. 

For convenience, we will index the subtrees in $D$ in the order by which they were introduced in $D$, with $T_1$ being the first one introduced and $T_q$ the last one. (To be precise, the last splitting step created two subtrees that are contained in $D$; it does not matter which one of the two is the last one in the numbering and which one the second-to-last one.)
For $1\leq i \leq q-1$, let $e_i$ be the edge that was split in order to introduce $T_i$ into $D$.
We now show that $D=\{T_1,\ldots, T_q\}$ obtained by successive splitting at $\gamma$-light edges, where $\gamma = \frac{1}{\sqrt{k}}$ with $k$ being the parameter in Lemma~\ref{lem:decompALaDavid}, satisfies the conditions of of Lemma~\ref{lem:decompALaDavid}, for well-chosen $H_i\subseteq E_i$ for $i\in [q]$.
\begin{proof}[Proof of Lemma~\ref{lem:decompALaDavid}]
 We start by proving that the trees $T_i=(V_i,E_i)$ in the decomposition $D$ satisfy point~\ref{item:goodLpSolForSubProbs} of Lemma~\ref{lem:decompALaDavid}. Because the trees $T_i$ for $i\in [q]$ do not share any edges, we have
\begin{equation}
\sum_{i=1}^q x\left(L\cap
\begin{psmallmatrix}
V_i\\
2
\end{psmallmatrix}
\right)\leq x(L)\enspace. \label{eqn:leq-x(L)} 
\end{equation}
Furthermore, observe that
\begin{equation}
\sum_{i=1}^q x(\cov(E_i))=\sum\limits_{i=1}^q
x\left(L\cap
\begin{psmallmatrix}
V_i\\
2
\end{psmallmatrix}
\right)
+2\sum\limits_{i=1}^{q-1}x(\cov(e_i))\enspace.\label{eqn:2*cov} 
\end{equation}
To see this, consider the \emph{split-tree} $S$, that contains a vertex $s_{T_i}$ for each subtree $T_i\in D$, and an 
edge $\{s_{T_i},s_{T_j}\}$ if $T_i$ and $T_j$ share a vertex (remember that different subtrees in $D$ will share some 
vertices, but no edges). Now, root $S$ at vertex $s_{T_q}$, and turn $S$ into a directed graph by directing all the 
edges away from the root. The key observation is that, for any $T_i\in D\setminus \{T_q\}$, 
\begin{equation*}
x(\cov(E_i))\leq x\left(L\cap
\begin{psmallmatrix}
V_i\\
2
\end{psmallmatrix}
\right)+x(\cov(e_i))+\sum\limits_{(s_{T_i},s_{T_j})\in \delta^+_S(s_{T_i})} 
x(\cov(e_j))\enspace,
\end{equation*}
while for $T_q$ we have
$$
x(\cov(E_q))\leq x(L\cap
\begin{psmallmatrix}
V^q\\
2
\end{psmallmatrix}
)+\sum\limits_{(s_{T_q},s_{T_j})\in \delta^+_S(s_{T_q})} x(\cov(e_j))\enspace.
$$
Summing up $x(\cov(E_i))$ over all $i\in [q]$, we get (\ref{eqn:2*cov}). Now, from  (\ref{eqn:leq-x(L)}) and 
(\ref{eqn:2*cov}), we conclude that in order to prove~\ref{item:goodLpSolForSubProbs}, it suffices to show
$$
 \sum\limits_{i=1}^{q-1}x\left(\cov(e_i)\right) = O\left(\frac{1}{\sqrt{k}}\right)\cdot x(L)\enspace.
$$
To do so, observe that, for any $i\in[q-1]$, due to the decomposition process we have
\begin{equation}\label{eqn:gamma-cov}
x(\cov(e_i))\leq \gamma \cdot x(\cov(E_i)\setminus\cov(e_i))\enspace,
\end{equation}
and moreover
\begin{equation}\label{eqn:sum-ei-leq-x}
\sum\limits_{i=1}^{q-1} x(\cov(E_i)\setminus\cov(e_i))\leq x(L)\enspace,
\end{equation}
because for any $i,j\in [q]$ with $i\neq j$, we have $(\cov(E_i)\setminus \cov(e_i))\cap (\cov(E_j)\setminus \cov(e_j))=\emptyset$. Indeed, any link $\ell \in \cov(E_i)\cap \cov(E_j)$ must be in either $\cov(e_i)$ or $\cov(e_j)$, because by removing $e_i$ and $e_j$ from $T$, the two edge sets $E_i$ and $E_j$ are in different connected components.
Finally, from~\eqref{eqn:gamma-cov} and~\eqref{eqn:sum-ei-leq-x} we obtain point~\ref{item:goodLpSolForSubProbs} of Lemma~\ref{lem:decompALaDavid}:
\begin{equation}
\sum\limits_{i=1}^{q-1} x(\cov(e_i))\leq \gamma \cdot x(L) = \frac{1}{\sqrt{k}}\cdot x(L)\enspace.
\end{equation}

To show points~\ref{item:thinSubInstances} and~\ref{item:canCoverHi} of Lemma~\ref{lem:decompALaDavid}, we first have to define the sets $H_i\subseteq E_i$ for $i\in [q]$. For this we use a variation of an idea by Adjiashvili~\cite{adjiashvili2017beating}, based on heavy edges.

We start by observing that for each tree $T_i=(V_i,E_i)$, where $i\in [q]$, there is a vertex $r_i\in V_i$ such that 
\begin{equation}\label{eq:rootInTi}
x\left(\cov(E_{T_i}(e,w))\setminus \cov(e)\right) \leq x\left(\cov(E_{T_i}(e,r_i))\setminus \cov(e)\right)
\quad \forall e=\{r_i,w\}\in E_i\enspace.
\end{equation}
To see why such a vertex $r_i$ exists, orient the edges $\{u,v\}\in E_i$ of $T_i$ as follows:
\begin{itemize}[nosep,topsep=0.2em]
\item if $x(\cov(E_{T_i}(e,u))\setminus \cov(e)) \leq x(\cov(E_{T_i}(e,v))\setminus \cov(e))$, orient $\{u,v\}$ from $u$ to $v$,
\item otherwise, orient $\{u,v\}$ from $v$ to $u$.
\end{itemize}
To obtain~\eqref{eq:rootInTi}, $r_i$ can be chosen to be any vertex with only outgoing edges, which always exists in a directed tree.
Notice that~\eqref{eq:rootInTi} implies the following
\begin{equation}\label{eq:lightAwayFromRoot}
\begin{aligned}
  x\left(\cov(E_{T_i}(e,v))\setminus \cov(e)\right) \leq 
  &x\left(\cov(E_{T_i}(e,u))\setminus \cov(e)\right)\\
&\quad\forall e=\{u,v\}\in E_i \text{ with $u$~}
\text{being on unique $s$-$v$ path in $T_i$.}
\end{aligned}
\end{equation}

The set $H_i$ will only contain so-called \emph{$\zeta$-heavy} edges, for $\zeta=\sfrac{\sqrt{k}}{4}$, which are edges $e\in E_i$ satisfying $x(\cov(e))\geq \zeta$. Let $U_i\subseteq E_i$ be all $\zeta$-heavy edges of $T_i$. Then $H_i\subseteq U_i$ is chosen to be all edges in the connected component of $(V_i,U_i)$ that contains $r_i$; Moreover, we denote by $W_i\subseteq V_i$ all vertices of the connected component of $(V_i,U_i)$ that contains $r_i$.

To see why point~\ref{item:thinSubInstances} of Lemma~\ref{lem:decompALaDavid} holds for the trees $T_i=(V_i,E_i)$ and sets $H_i\subseteq E_i$ for $i\in [q]$, consider the principal subtrees of $T_i/H_i$ obtained by using as root the vertex that corresponds to the contraction of $H_i$.
For each such principal subtree, there is an edge $e=\{u,v\}$ with $u\in W_i$ and $v\in V_i\setminus W_i$ such that the vertices of the principal subtree consist of $u$ and all vertices in the connected component of $(V_i,E_i\setminus \{e\})$ that contains $v$. We denote by $T_v$ this principal subtree.

Observe that $e=\{u,v\}$ is not $\gamma$-light (for otherwise, we would have split on that edge), and $e$ is by definition not $\zeta$-heavy. This implies
\begin{align*}
x\left(\cov(E_{T_i}(e,v))\setminus \cov(e)\right)
  \leq \frac{1}{\gamma} \cdot x(\cov(e))
  \leq \frac{\zeta}{\gamma}
  = \frac{k}{4}\enspace,
\end{align*}
where the first inequality follows from~\eqref{eq:lightAwayFromRoot} and $e$ not being $\gamma$-light, and the second one from the fact that $\{u,v\}$ is not $\zeta$-heavy.
This in turn implies 
\begin{equation}\label{eq:boundOnPrincipXVal}
x(\cov(E_{T_i}(e,v)))\leq x(\cov(E_{T_i}(e,v)\setminus \cov(e)) +x(\cov(e))
\leq \frac{k}{4} + \zeta \leq \frac{k}{2}\enspace.
\end{equation}
Because every leaf of $T_v$ must be (fractionally) covered by $x$, and any link can cover at most two leaves of $T_v$, we have that the number of leaves of $T_v$ is at most 
\begin{equation*}
2\cdot x\left(\cov(E_{T_i}(e,v))\right) \leq 2\cdot \frac{k}{2} = k\enspace,
\end{equation*}
where the inequality follows from~\eqref{eq:boundOnPrincipXVal}. This shows point~\ref{item:thinSubInstances} of Lemma~\ref{lem:decompALaDavid}.

Finally, observe that $\sfrac{1}{\zeta}\cdot x$ (fractionally) covers each $\zeta$-heavy edge at least once. Therefore, by applying a $2$-factor rounding algorithm to the vector induced by $\sfrac{1}{\zeta}\cdot x$, when we restrict ourselves to the TAP instance induced by the edges $\cup_{i=1}^q H_i$, we get a set of links $M\subseteq L$ covering $\cup_{i=1}^q H_i$ whose cost is at most $\sfrac{2}{\zeta}\cdot x(L) = O(\sfrac{1}{\sqrt{k}})\cdot x(L)$. This completes the proof by showing point~\ref{item:canCoverHi} of Lemma~\ref{lem:decompALaDavid}.

\end{proof}

\section{Derandomized Rewirings}\label{sec:derandomization}
In this section, we show that the randomized part of our approach can be completely derandomized. More precisely, we 
will sketch a proof of the following variant of Lemma \ref{lem:apx}:
\begin{lemma}\label{lem:wired-rounding-derandomized}
There exists a deterministic approximation algorithm for $O(1)$-wide $TAP$ instances with approximation guarantee 
$2(\sqrt{3}-1) < 1.465$.
\end{lemma}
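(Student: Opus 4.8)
The algorithm behind Lemma~\ref{lem:apx} uses randomness in exactly one place: the independent sampling, for each principal subtree $T_i$, of a local solution $L_i$ from the distribution $(\lambda_i^R)_{R\in\Lambda_i}$. Everything else is already deterministic --- solving the $k$-wide-LP, the rounding of Corollary~\ref{cor:chvatal-rounding} that produces the solution $\mathcal{A}$, and the sparsification step and the greedy matching inside the proof of Lemma~\ref{lem:matching}. Since $|\mathcal{A}|$ is a fixed number, the plan is to replace the random draw of $L_1,\dots,L_q$ by a deterministic choice via the method of conditional expectations, so that the resulting rewired solution $\mathcal{B}'$ satisfies $|\mathcal{B}'|\le\E[|\mathcal{B}|]$; returning the better of $\mathcal{A}$ and $\mathcal{B}'$ then reproduces the analysis of Lemma~\ref{lem:apx} verbatim, as $\min\{|\mathcal{A}|,|\mathcal{B}'|\}\le\min\{|\mathcal{A}|,\E[|\mathcal{B}|]\}$ is upper bounded by the minimum of the right-hand sides of \eqref{eq:guaranteeOfA} and \eqref{eq:guaranteeOfB}, hence by $2(\sqrt 3-1)\cdot\OPT^*$ through Lemma~\ref{lem:maxAlphas}.

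To turn the quantity to be controlled into a tractable function of $L_1,\dots,L_q$, I would first fix once and for all the matching $M^*\subseteq\Lcr$ constructed in the proof of Lemma~\ref{lem:matching} (applied to the graph $(\Vcr,\Lcr)$ with the LP point $x$); recall that this $M^*$ is obtained deterministically in polynomial time as the greedy maximum-weight matching on a vertex of the polytope $Q$ from that proof. The derandomized rewiring then uses only the sub-matching $M^*\cap\binom{A}{2}$, where $A\subseteq\Vcr$ is the set of active critical vertices determined once $L_1,\dots,L_q$ are fixed. By Proposition~\ref{prop:rewiringIsOk} and \eqref{eq:boundCardCalL} the corresponding solution $\mathcal{B}'$ obeys $|\mathcal{B}'|\le F(L_1,\dots,L_q)$, where $F(L_1,\dots,L_q):=\sum_{i=1}^q|L_i|-|M^*\cap\binom{A}{2}|$. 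Using \eqref{eq:boundUnionLi}, the independence and marginal identities \eqref{eq:propsOfA}, and the conclusion of Lemma~\ref{lem:matching}, one gets
\begin{equation*}
\E[F]=x(\Lin)+2x(\Lcross)-\sum_{\{u,v\}\in M^*}\Pr[u\in A]\cdot\Pr[v\in A]\;\le\; x(\Lin)+2x(\Lcross)-\frac{x(\Lcr)^2}{|\Vcr|}\enspace,
\end{equation*}
which is precisely the bound used to derive \eqref{eq:guaranteeOfB}.

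It remains to fix $L_1,\dots,L_q$ one at a time without the realized value of $F$ exceeding $\E[F]$, using $F$ itself as the (exact) estimator. The crucial point is that $\E[F\mid L_1,\dots,L_t]$ is computable in polynomial time: the term $\E[\sum_i|L_i|\mid L_1,\dots,L_t]$ equals $\sum_{i\le t}|L_i|+\sum_{i>t}\sum_{R\in\Lambda_i}\lambda_i^R|L_i^R|$, and $\E[|M^*\cap\binom{A}{2}|\mid L_1,\dots,L_t]=\sum_{\{u,v\}\in M^*}\Pr[u\in A\text{ and }v\in A\mid L_1,\dots,L_t]$ decomposes edge by edge; for $\{u,v\}\in M^*$ with $u\in V_i$, $v\in V_j$ and $i\ne j$, the event ``$u$ active'' depends only on $L_i$ and ``$v$ active'' only on $L_j$, so the conditional probability is either determined (if $i,j\le t$), or a single marginal $\sum_{R\in\Lambda_j:\,v\text{ is an endpoint of a cross-link in }L_j^R}\lambda_j^R$ times a determined indicator (if exactly one of $i,j$ exceeds $t$), or a product of two such marginals (if $i,j>t$). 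I would then process $t=1,\dots,q$, and at step $t$ choose $L_t\in\{L_t^R:R\in\Lambda_t\}$ minimizing $\E[F\mid L_1,\dots,L_t]$; since $|\Lambda_t|=O(|L|^k)$ with $k=O(1)$ this is a polynomial enumeration, and because $\E[\,\E[F\mid L_1,\dots,L_t]\mid L_1,\dots,L_{t-1}]=\E[F\mid L_1,\dots,L_{t-1}]$ the chosen value never increases the estimator. After $q$ steps all $L_i$ are fixed, $F(L_1,\dots,L_q)=\E[F\mid L_1,\dots,L_q]\le\E[F]$, and we output the better of $\mathcal{A}$ and $\mathcal{B}'$; by the first paragraph this is a deterministic $2(\sqrt 3-1)$-approximation (and one may even run an actual maximum-matching rewiring as in Lemma~\ref{lem:wired-rounding} on top of $M^*$ in the final step, which only helps).

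The step I expect to be the main obstacle is exactly this polynomial-time evaluation of the conditional expectation, and it is the reason for committing in advance to the fixed matching $M^*$ and rewiring only along $M^*\cap\binom{A}{2}$ instead of recomputing a maximum matching on the random set of active critical vertices: with a re-optimized matching the estimator would require the expected size of a maximum matching in a random subgraph, which we do not know how to compute, whereas Lemma~\ref{lem:matching} already certifies that the fixed $M^*$ is large enough in expectation. Every other ingredient of the rounding procedure is deterministic from the outset, so no further derandomization is needed.
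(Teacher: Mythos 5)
Your proposal is correct and follows essentially the same route as the paper: fix the matching $M^*$ from Lemma~\ref{lem:matching} in advance, use $\sum_i|L_i|-|M^*\cap\binom{A}{2}|$ as an exactly computable estimator (exploiting the independence across principal subtrees and the polynomial size of each $\Lambda_i$), and apply the method of conditional expectations to fix the $L_i$ one at a time. Your write-up is in fact somewhat more detailed than the paper's sketch, in particular in spelling out the edge-by-edge evaluation of the conditional expectation and the reason for not re-optimizing the matching.
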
 

We stress that we choose to sketch the derandomization of the slightly weaker (in terms of approximation ratio) Lemma 
\ref{lem:apx} for the sake of simplicity. Nonetheless, the derandomization of Theorem \ref{thm:main} works in an almost 
identical way.

Let us review our rounding procedure. Clearly, the only randomized part is the algorithm in Lemma  
\ref{lem:wired-rounding}. In particular, given a TAP instance $(T=(V,E),L)$ where $T$ is a $k$-wide tree, and a 
 $k$-wide-LP solution $(x,\lambda)$, this algorithm samples a local solution $L_i^R$ for each principal 
 subtree $T_i$ and each $R\in\Lambda_i$ with probability $\lambda_i^R$. Let $L_i$ be the random set of links we sample 
for principal subtree  $T_i$, let $A_i$ be the random set of active nodes (i.e., a node in some $T_i$ that is an 
endpoint of a cross-link in $L_i$), and let $A$ be  the random variable $\cup_{i\in[q]} A_i$. Note that $A_i$ is 
completely determined by $L_i$. The key idea behind our  derandomization is the following: since $L_i$ has a domain 
whose size is polynomial in $|V|$, we can apply the method  of \emph{Conditional Expectations} in order to fix a local 
solution for each principal subtree deterministically.

More precisely, we begin by fixing a matching $M$ according to Lemma \ref{lem:matching}; remember that this matching 
is selected deterministically. Then, our rounding algorithm returns a tree augmentation whose size is at most
$$
\E\left[\sum_{i=1}^{q} |L_i|-\big|M\cap {
\begin{psmallmatrix}
A\\
2
\end{psmallmatrix}
}\big|\right].
$$
Clearly, we can always compute $\E[|L_i|]$.
Crucially, we observe that $L_i$ is independent of $L_j$ for any principal subtrees $T_i$ and $T_j$ (and hence the same 
is true for $A^i$ and $A^j$). This implies that we can compute $\E\left[\big|M\cap {
\begin{psmallmatrix}
A\\
2
\end{psmallmatrix}
}\big|\right]$, since we know 
$\Pr[u\in A_i, v\in A_j]$ for any principal subtrees $T_i$, $T_j$ and any vertices $u$, $v$ in them respectively.
Furthermore, since the domain size of any $L_i$ is polynomial for any principal subtree $T_i$, we can enumerate over 
its domain. In particular, we can condition on $L_1=L^{R_1}_1$, where 
$$
R_1=\argmin_{R\in \Lambda_1}\E\left[\sum\limits_{i\in[q]} |L_i|-\big|M\cap {
\begin{psmallmatrix}
A\\
2
\end{psmallmatrix}
}\big| \ \middle \vert L_1=L_1^{R_1}\right].
$$
By applying this reasoning inductively, we can condition on random variable $L_2$, $L_3$ etc., until we find 
some
$$
R_j=\argmin_{R\in \Lambda^j}\E\left[\sum\limits_{i\in[q]} |L_i|-\big|M\cap {
\begin{psmallmatrix}
A\\
2
\end{psmallmatrix}
}\big| \ \middle \vert L_1=L_1^{R_1},\dots, 
L_{j-1}=L^{R_{j-1}}_{j-1}\right],
$$
for all $j\in [q]$. Since the expected cost does not increase as we condition on a variable, we found a point in 
the sample space of variables $L_i$, for all $i\in[q]$, whose cost is at most the cost of 
$$
\E\left[\sum\limits_{i\in[q]} |L_i|-\big|M\cap {
\begin{psmallmatrix}
A\\
2
\end{psmallmatrix}
}\big|\right].
$$
By Lemmas~\ref{lem:wired-rounding} and~\ref{lem:matching} we thus obtain 
$$
\E\left[\sum\limits_{i\in[q]} |L_i|-\big|M\cap {
\begin{psmallmatrix}
A\\
2
\end{psmallmatrix}
}\big|\right]\leq x(\Lin)+2x(\Lcross)-\frac{x^2(\Lcr)}{|\Vcr|}\enspace,
$$
which is the same relation obtained for the randomized procedure, and thus the analysis in Lemma~\ref{lem:apx} applies, which immediately implies Lemma~\ref{lem:wired-rounding-derandomized}.

 \section{A Refined Approximation Factor}\label{sec:improved-apx}

In order to establish a stronger approximation guarantee than that of Lemma \ref{lem:apx}, we first provide a strengthening of Lemma~\ref{lem:matching}, described by the lemma below. Its proof is almost identical to the proof of Lemma~\ref{lem:matching}, except for a more careful application of the Cauchy-Schwarz inequality:
\begin{lemma}\label{lem:improved-matching}
	Let $G=(V,E)$ be an undirected graph, let $x\in\mathbb{R}_{\geq 0}^E$ such that for any $v\in V$ we have 
	$x(\delta(v)) \leq 1$, and let $A$ be a random subset of $V$ with a distribution that satisfies:
	\begin{enumerate}[nosep,label=(\roman*),itemsep=0.2em,topsep=0.2em]
		\item\label{sample1-improved}  $\Pr[v\in A]=x(\delta(v)) \quad \forall v\in V$, and 
		\item\label{sample2-improved}  $\Pr[v\in A\text{ and }u\in A]= \Pr[u\in A] \cdot \Pr[v\in A]\quad 
\forall \{u,v\}\in E$.
	\end{enumerate}
Moreover, let $z\in\mathbb{R}_{\geq 0}^E$ by a vertex of the polytope 
$Q = \left\{y\in \mathbb{R}^E_{\geq 0} \;\middle\vert\; y(\delta(v))= x(\delta(v)) \quad \forall 
v\in 
V\right\}$. 
Then, 
\begin{itemize}[nosep,topsep=0.2em]
	\item $|\supp(z)|\leq |V|$\enspace, and
	\item there exists a matching $M\subseteq E$ in $G$ such that
\begin{equation*}
	\E\left[ \big|
	M\cap {
\begin{psmallmatrix}
A\\
2
\end{psmallmatrix}
}
	\big|  \right] \geq \sum\limits_{e\in E} z_e^2\geq 
	\frac{(\sum_{e \in E_1} z_e)^2}{|E_1|}+\frac{(\sum_{e \in E_2} z_e)^2}{|E_2|}\enspace,
\end{equation*}
for any $E_1,E_2\subseteq 
	\supp(z)$ with $E_1\cap E_2=\emptyset.$
\end{itemize}
\end{lemma}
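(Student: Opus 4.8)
The plan is to follow the proof of Lemma~\ref{lem:matching} almost verbatim, changing only the final application of the Cauchy--Schwarz inequality, which we now apply separately to the two index sets $E_1$ and $E_2$ instead of once to all of $E$. Concretely, there are two things to show: the sparsity bound $|\supp(z)|\le|V|$, and the existence of a matching $M$ with the claimed lower bound on $\E[|M\cap\binom{A}{2}|]$.

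For the sparsity bound I would reuse the standard vertex-counting argument from the proof of Lemma~\ref{lem:matching}: a vertex $z$ of $Q$ is determined by $|E|$ linearly independent tight constraints, and since $Q$ has only $|V|$ constraints that are not nonnegativity constraints, at least $|E|-|V|$ of the constraints $z_e\ge 0$ are tight at $z$, so $|\supp(z)|\le|V|$.

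For the matching, I would take $M$ to be the matching produced by the greedy maximum-weight matching algorithm with respect to the weights $z$ (process the edges in nonincreasing order of $z$-value and add an edge whenever doing so keeps $M$ a matching). Equations~\eqref{eq:expMatchSize1} and~\eqref{eq:expMatchSize2} in the proof of Lemma~\ref{lem:matching} were derived using only the sampling properties of $A$, the membership of the point in $Q$, and the greedy construction of $M$; hence they carry over verbatim with $y$ replaced by $z$, giving $\E[|M\cap\binom{A}{2}|]=\sum_{\{u,v\}\in M}z(\delta(u))\,z(\delta(v))\ge\sum_{e\in E}z_e^2$, which is the first claimed inequality. For the second inequality, since $E_1$ and $E_2$ are disjoint subsets of $\supp(z)\subseteq E$ and every summand $z_e^2$ is nonnegative, we have $\sum_{e\in E}z_e^2\ge\sum_{e\in E_1}z_e^2+\sum_{e\in E_2}z_e^2$; applying Cauchy--Schwarz to each group separately, i.e.\ $|E_i|\cdot\sum_{e\in E_i}z_e^2\ge(\sum_{e\in E_i}z_e)^2$, then yields the stated bound. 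The sole difference from Lemma~\ref{lem:matching} is that there one combined a single Cauchy--Schwarz step over all of $E$ with the crude bound $|\supp(z)|\le|V|$; keeping the two groups separate is exactly what produces the refinement.

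I do not expect a genuine obstacle here; the statement is a routine strengthening obtained by being more careful in the last inequality. The only points to keep in mind are that equations~\eqref{eq:expMatchSize1}--\eqref{eq:expMatchSize2} rely only on the sampling properties~\ref{sample1-improved} and~\ref{sample2-improved}, on $z\in Q$, and on greediness (and not on $z$ being a vertex we are free to choose), so that they apply to the arbitrary given vertex $z$; and that the matching $M$ depends on $z$ but not on $E_1,E_2$, so a single $M$ works simultaneously for all admissible choices of $E_1$ and $E_2$.
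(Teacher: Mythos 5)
Your proposal is correct and matches the paper's own proof essentially step for step: the same sparsity argument for the vertex $z$ of $Q$, the same greedy matching with respect to the weights $z$, the same two intermediate identities/inequalities carried over from Lemma~\ref{lem:matching}, and the same final split of the Cauchy--Schwarz step over the disjoint sets $E_1$ and $E_2$. Your closing remarks (that the earlier equations depend only on the sampling properties, membership in $Q$, and greediness, and that $M$ is independent of the choice of $E_1,E_2$) are exactly the right points to verify.
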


\begin{proof}
Clearly, $x\in Q$. Moreover, for any vertex $z$ of $Q$ we indeed have $|\supp(z)|\leq |V|$. This follows through the 
following standard sparsity argument: Any vertex $z$ of $Q$ is defined by $|E|$-many linearly independent and tight 
constraints; however, $Q$ has only $|V|$ constraints that are not nonnegativity constraints, and therefore, at least 
$|E|-|V|$ of the nonnegativity constraints must be tight at $z$, which implies $|\supp(z)| \leq |V|$.

We let the matching $M\subseteq E$ be a matching obtained by the greedy algorithm for maximum weight matchings with respect to 
the weights $z$. More formally, we order the edges $E=\{e_1,\ldots, e_m\}$ such that $z(e_1)\geq \ldots \geq z(e_m)$. 
We start with $M=\emptyset$ and go through the edges in the order $e_1,\ldots, e_m$. When considering $e_i$, we set 
$M=M\cup \{e_i\}$ if $M\cup \{e_i\}$ is a matching.

We have the following (which holds for any matching $M$):
\begin{equation}\label{eq:expMatchSize1-improved}
\E\left[\big|
	M\cap {
\begin{psmallmatrix}
A\\
2
\end{psmallmatrix}
}
	\big|  \right]
= \sum_{\{u,v\}\in M} x(\delta(u)) \cdot x(\delta(v))
= \sum_{\{u,v\}\in M} z(\delta(u)) \cdot z(\delta(v)) \enspace,
\end{equation}
which is an immediate consequence of property~\ref{sample1-improved} and~\ref{sample2-improved}, and of $z\in Q$. Next, we show that the following holds:
\begin{equation}\label{eq:expMatchSize2-improved}
\sum_{\{u,v\}\in M} z(\delta(u)) \cdot z(\delta(v)) \geq \sum_{e\in E} z(e)^2\enspace.
\end{equation}
Note that expanding the left-hand side of~\eqref{eq:expMatchSize2-improved} by using $z(\delta(u))=\sum_{e\in 
\delta(u)}z(e)$ and $z(\delta(v))=\sum_{e\in \delta(v)}z(e)$, leads to a sum of terms of the form $z(e)\cdot z(f)$ for 
some pairs $e,f\in E$. For any $e=\{u,v\}\in M$, the left-hand side of~\eqref{eq:expMatchSize2-improved} has a term of 
the form $z(e)^2$, due to the term $z(\delta(u))\cdot z(\delta(v))$. Now consider any non-matching edge $e=\{u,v\}\in 
E\setminus M$. Because $e$ is a non-matching edge, and we constructed $M$ by the greedy algorithm, there is at least one 
endpoint of $e$, say $u$, such that there is an edge $f\in M$ incident with $u$ that satisfies $z(f)\geq z(e)$. This 
implies that $z(\delta(u))\cdot z(\delta(v))$ contains a term $z(f)\cdot z(e) \geq z(e)^2$. Hence, for each $e\in E$, 
we identified a term on the left-hand side of~\eqref{eq:expMatchSize2-improved} of value at least $z(e)^2$; moreover, 
all such terms on the left-hand side are different, which implies~\eqref{eq:expMatchSize2-improved}.

The statement of the lemma now follows by combining~\eqref{eq:expMatchSize1-improved} 
and~\eqref{eq:expMatchSize2-improved} with the 
following inequality, which is valid for any $E_1,E_2\subseteq \supp(z)$ such that $E_1\cap 
E_2=\emptyset$, and 
follows by the Cauchy-Schwarz inequality:
\begin{equation*}
\sum_{e\in E} z(e)^2\geq \sum_{e\in E_1} z(e)^2+\sum_{e\in E_2} z(e)^2\geq 
\frac{z(E_1)^2}{|E_1|}+\frac{z(E_2)^2}{|E_2|}\enspace.
\end{equation*}
\end{proof}

Given the above lemma, we are now ready to prove the randomized version of Theorem \ref{thm:main}:
\begin{proof}[Proof of Theorem \ref{thm:main}]
Given a $k$-wide TAP instance $(T,L)$ with $k=O(1)$, and a solution $(x,\lambda)$ to the $k$-wide-LP, by Lemma~\ref{lem:wired-rounding} we can randomly round $(x,\lambda)$ to a solution $\mathcal{B}$ such that 
\begin{equation}\label{eq:solBBound-improved}
\E[|\mathcal{B}|] \leq x(\Lin)+2x(\Lcross)-\E_{A}\left[\eta \left(A,\Lcr\cap
\begin{psmallmatrix}
A\\
2
\end{psmallmatrix}
\right)\right]\enspace,
\end{equation}
where $A$ fulfills the conditions of Lemma~\ref{lem:wired-rounding}.

We apply Lemma~\ref{lem:improved-matching} to the graph $H=(\Vcr, \Lcr)$ and the restriction of $x$ to $\Lcr$, where, as usual, $A\subseteq \Vcr$ will be the random set of active nodes. To apply Lemma~\ref{lem:improved-matching}, we define $z\in \mathbb{R}_{\geq 0}^{\Lcr}$ to be any vertex of the polytope
\begin{equation*}
Q\coloneqq\left\{y\in \mathbb{R}_{\geq 0}^{\Lcr} \;\middle\vert\; y(\delta_H(v)) = x(\delta_H(v)) \;\;\forall v\in \Vcr\right\}\enspace.
\end{equation*}
Lemma~\ref{lem:improved-matching} then implies that
\begin{equation}\label{eq:boundForSuppZ}
|\supp(z)| \leq |\Vcr|\enspace,
\end{equation}
and the expected size of a maximum cardinality matching in $(A,\Lcr\cap
\begin{psmallmatrix}
A\\
2
\end{psmallmatrix}
)$ is lower bounded by
\begin{equation*}
\E_A\left[\eta(A, \Lcr\cap
\begin{psmallmatrix}
A\\
2
\end{psmallmatrix}
)
\right] \geq \frac{z(E_1)^2}{|E_1|} + \frac{z(E_2)^2}{|E_2|}\enspace,
\end{equation*}
for any $E_1, E_2\subseteq \supp(z)$ with $E_1\cap E_2=\emptyset$.

Let $\Ltwol$ be the links in $\Lcr \cap \supp (z)$ whose endpoints are both leaves, and $\Lonel$ be the remaining links in $\Lcr\cap \supp(z)$. In what follows, we set $E_1 = \Lonel$ and $E_2=\Ltwol$, and will use inequality~\eqref{eq:strongCSforZ} for these two sets, thus leading to
\begin{equation}\label{eq:strongCSforZ}
\E_A\left[\eta(A, \Lcr\cap
\begin{psmallmatrix}
A\\
2
\end{psmallmatrix}
)
\right] \geq \frac{z(\Ltwol)^2}{|\Ltwol|} + \frac{z(\Lonel)^2}{|\Lonel|}\enspace.
\end{equation}
Moreover, because $z$ is a vertex of $Q$, we have
\begin{equation}\label{eq:LtwolBound}
|\Ltwol| \leq K\enspace,
\end{equation}
where $K$ is the number of leaves of $T$. This follows by the fact that, except for nonnegativity constraints, there are only $K$ constraints in $Q$ that involve links with both endpoints being leaves.

\medskip

Consider $z'\in\mathbb{R}^L$ defined by
$$z'_\ell=
\begin{cases}
z_\ell & \text{for } \ell\in\Lcr\enspace,\\
x_\ell & \text{otherwise}\enspace.
\end{cases}
$$
Observe that $z'\in \Pi_T$.
We define $\phi\in [0,1]$ so that $z'(\Lonel)=\phi \cdot z'(\Lcr)=\phi\cdot x(\Lcr)$. (Hence $z'(\Ltwol)=(1-\phi) \cdot 
z'(\Lcr)=(1-\phi)\cdot x(\Lcr)$.)

We proceed by deriving an upper bound for $\E[|\mathcal{B}|]/\OPT^*$, where $\OPT^* = x(L)=z'(L)$ is the optimal value of the $k$-wide-LP, as usual. Notice that $|\Vcr|< 2K$, which readily follows from the fact that the average degree of $T$ must be strictly below $2$, because $T$ is a tree.
Moreover, we must have
\begin{equation*}
\sum_{\ell\in L:v\in \ell}z'_{\ell}\geq 1 \quad \text{for each leaf node $v\in V$}
\end{equation*}
for the edge incident with $v$ to be covered by the LP solution $z'\in \Pi_T$. Each link can contribute to at most two of the sums above, for two different leaf nodes, and the links in $\Lup\cup \Lncr\cup \Lonel$ to at most one of these sums. Thus we can conclude that
\begin{equation}\label{eq:leavesBound}
2z'(\Lcross)+2z'(\Lin)-z'(\Lup)-z'(\Lncr)-z'(\Lonel)\geq K\geq |\Vcr|/2\enspace.
\end{equation}
Let $\ain = \sfrac{x(\Lin)}{\OPT^*} = \sfrac{z'(\Lin)}{\OPT^*}$ be the fraction of $\OPT{}^*$ that corresponds to links in $\Lin$. We define analogously $\aup$, $\across$, $\acr$, $\ancr$, $\atwol$, and $\aonel$. By Lemma \ref{lem:matching}, the expected size of the maximum cardinality matching in $(A, \Lcr\cap
\begin{psmallmatrix}
A\\
2
\end{psmallmatrix}
)$ satisfies
\begin{equation}\label{eq:expMatchSizeApx-improved}
\begin{aligned}
\E_{A}\left[\eta(A, \Lcr\cap
\begin{psmallmatrix}
A\\
2
\end{psmallmatrix}
)
\right] \geq \frac{x(\Lcr)^2}{|\Vcr|}=\frac{z'(\Lcr)^2}{|\Vcr|}
 \geq \frac{
(\acr)^2}{4-2\aup-2\ancr-2\aonel}\cdot \OPT^*\enspace,
\end{aligned}
\end{equation}
where the last inequality above follows from~\eqref{eq:leavesBound}.
Combining~\eqref{eq:solBBound-improved} and~\eqref{eq:expMatchSizeApx-improved}, we conclude that
\begin{equation}\label{eq:guaranteeOfB-improved1}
\begin{aligned}
\frac{\E[|\mathcal{B}|]}{\OPT{}^*} &\leq  
\ain+2\across-\frac{(\acr)^2}{4-2\aup-2\ancr-2\aonel}\\
  &= 1+\across
-\frac{(\acr)^2}{4-2\aup-2\ancr-2\phi\acr}\enspace.
\end{aligned}
\end{equation}

Observe that $|\Ltwol|+|\Lonel| = |\supp(z)| \leq |\Vcr| \leq 2K$, where the first inequality follows from~\ref{eq:boundForSuppZ}. For simplicity, we define $q \coloneqq K - |\Ltwol|$, and notice that $q\geq 0$ due to~\eqref{eq:LtwolBound}. Hence, $|\Ltwol| = K + q$ and $|\Lonel| \leq 2K - |\Ltwol| = K+q$.
We thus obtain by~\eqref{eq:strongCSforZ}
\begin{equation}\label{eqn:refinedMatch}
\begin{aligned}
\E_{A}\left[\eta(A, \Lcr\cap
\begin{psmallmatrix}
A\\
2
\end{psmallmatrix}
)
\right] \geq \frac{z'(\Ltwol)^2}{K-q}+\frac{z'(\Lonel)^2}{K+q}
 = 
 \frac{((1-\phi)z'(\Lcr))^2}{K-q}+\frac{(\phi z'(\Lcr))^2}{K+q}\enspace.
\end{aligned}
\end{equation}

For $\phi\leq 0.5$, the right-hand side of \eqref{eqn:refinedMatch} is growing in $q$, hence the worst matching size 
cost is achieved for $q=0$ in this case, and is at least
$$
\frac{1}{K}\cdot \left(((1-\phi)z'(\Lcr))^2+(\phi z'(\Lcr))^2\right)\enspace,
$$
which implies together with~\eqref{eq:solBBound-improved} the following inequality for any $\phi\leq 0.5$:
\begin{equation}\label{eq:guaranteeOfB-improved2}
\frac{\E[|\mathcal{B}|]}{\OPT{}^*}\leq 
1+\across-\frac{((1-\phi)\acr)^2+(\phi \acr)^2}{2-\aup-\ancr-\phi\acr}\enspace.
\end{equation}

Finally, from Corollary~\ref{cor:chvatal-rounding} we know that we can round $x$ to a solution $\mathcal{A}$ such that
\begin{equation}\label{eq:guaranteeOfA-improved}
\frac{|\mathcal{A}|}{\OPT{}^*}\leq
2\ain-\aup+\across = 2-\across-\aup\enspace,
\end{equation}
 where the equality follows from $\across+\ain =1$.
 
 Overall, from~\eqref{eq:guaranteeOfA-improved},~\eqref{eq:guaranteeOfB-improved1}, and~\eqref{eq:guaranteeOfB-improved2}, we get that if $\phi\leq 0.5$, the approximation ratio is at most
 \begin{align}
 1+\min \Bigg\{ & 1-\across-\aup,\nonumber\\
 & \across-\frac{(\acr)^2}{4-2\aup-2\ancr-2\phi \acr},\nonumber\\
 & \across-\frac{((1-\phi)\acr)^2+(\phi\acr)^2}{2-\aup-\ancr-\phi \acr}\Bigg\}\enspace,\label{eq:min1}
 \end{align}
 and if $\phi\geq 0.5$ it is at most
  \begin{align}
  1+\min\Bigg\{ & 1-\across-\aup,\nonumber\\
  & \across-\frac{(\acr)^2}{4-2\aup-2\ancr-2\phi \acr}\Bigg\}\enspace.\label{eq:min2}
  \end{align}
  First, observe that all terms inside the minima in (\ref{eq:min1}) and (\ref{eq:min2}) are non-increasing in $\aup$ 
and $\ancr$. Therefore, the maxima of both (\ref{eq:min1}) and (\ref{eq:min2}) over all values 
$\across,\aup,\ancr,\acr\in[0,1]$ have $\aup=\ancr=0$. Hence, we focus on finding the maxima of~\eqref{eq:min1} and~\eqref{eq:min2} for $\aup=\ancr=0$. Furthermore, if $\phi\geq 0.5$, then~\eqref{eq:min2} is non-increasing in $\phi$, which means we can also assume $\phi\leq 0.5$.
  
  Therefore, we conclude the ratio is at most 
   \begin{align*}
   1+\min_{\across\in[0,1],\phi\in[0,1/2]}\Bigg\{ & 1-\across,\\
   & \across-\frac{(\across)^2}{4-2\phi \across},\\
   & \across-\frac{((1-\phi)\across)^2+(\phi\across)^2}{2-\phi \across}\Bigg\}\enspace.
   \end{align*}
   
   Finally, observe that in this range, the third term is always dominated by the second one because $1-2\phi+2\phi^2 \geq 0.5$. Therefore, the approximation ratio is at most
      \begin{align*}
      1+\min_{\across\in[0,1],\phi\in[0,1/2]}\Bigg\{ & 1-\across,\\
      & \across-\frac{((1-\phi)\across)^2+(\phi\across)^2}{2-\phi \across}\Bigg\}\enspace.
      \end{align*}
We finish the proof by showing that the maximum ratio attainable under these constraints is $\sqrt{34}/4 < 1.458$.

\medskip

	Let $f_1(x,y)\coloneqq 1-x$, $f_2(x,y)\coloneqq x-x^2\cdot \frac{1-2y+2y^2}{2-xy}$, and $f(x,y)\coloneqq \min\{f_1(x,y),f_2(x,y)\}$. In the sequel, 
	we will only consider $(x,y)\in[0,1]\times[0,\frac{1}{2}]$, since our goal is to find 
	$$
	\max_{(x,y)\in[0,1]\times[0,\frac{1}{2}]} f(x,y)\enspace.
	$$
	
	Let $D=[0,1]\times[0,\frac{1}{2}]$. Consider any maximizer $(x^*,y^*)$ of $f$ over $D$. We will show that $(x^*,y^*)$ satisfies $f_1(x^*,y^*) = f_2(x^*,y^*)$.

To this end, observe that if $f_1(x^*,y^*)<f_2(x^*,y^*)$, then $(x^*,y^*)$ is a local maximizer of $f_1$. This directly implies that $x^*=0$, which holds for any local maximizer of $f_1$ in $D$. However, $f_1(0,y)=1>f_2(0,y)$ for any $y\in [0,\frac{1}{2}]$, which is a contradiction.

Conversely, consider the case $f_1(x^*,y^*) > f_2(x^*,y^*)$. First of all, observe that 
	$$
	\frac{\partial}{\partial x}f_2(x,y)=1-(1-2y+2y^2)\cdot \frac{2x(2-xy)+yx^2}{(2-xy)^2}\enspace.
	$$
	Now, observe that $f(x^*,y^*) \geq f_2(\frac{1}{2},\frac{1}{2})\geq 0.4$. Since $f_1(x^*,y^*)> f_2(x^*,y^*)\geq 0.4$, it 
	follows that $x^*\leq 0.6$. Furthermore, for $x\leq 0.6$, we have
\begin{equation*}
	\frac{\partial}{\partial x}f_2(x,y)=1-(1-2y+2y^2)\cdot \frac{4x-x^2y}{(2-xy)^2}\geq 
	1-\frac{\frac{12}{5}-\frac{9}{25}y}{1.7^2}\geq 	1-\frac{\frac{12}{5}}{1.7^2}>0\enspace.
\end{equation*}
	Hence, $f_2(x,y)$ is strictly increasing in $x$ for $x\leq 0.6$, which implies $x^*\geq 0.6$. Moreover, because $\frac{\partial}{\partial x} f_2(x,y)$ and $f_1(x,y)$ are continuous, we even get $x^*>0.6$. However, this contradicts $x^*\leq 0.6$.
Therefore, we conclude that for any maximizer $(x^*,y^*)$ of $f$, $f_1(x^*,y^*)=f_2(x^*,y^*)$.

	Now, we wish to solve $f_1(x,y)=f_2(x,y)$ for $x$. We have:
\begin{align*}
1-x&=x-x^2\cdot \frac{1-2y+2y^2}{2-xy}\enspace,\\
\intertext{which implies}
0&=x^2(1+2y^2)-x(4+y)+2\enspace,\\
\intertext{and thus leads to}
x=&\frac{4+y\pm\sqrt{(4+y)^2-8(1+2y^2)}}{2(1+2y^2)}\enspace.
\end{align*}
Since $x\in[0,1]$, we need to pick 
\begin{equation}\label{eq:xAsFuncOfY}
x=\frac{4+y-\sqrt{(4+y)^2-8(1+2y^2)}}{2(1+2y^2)}\enspace.
\end{equation}
Finally, we want to maximize $f_1(x,y)=f_2(x,y) = 1-x$, which is equal to the following:
\begin{align*}
&1-\frac{4+y-\sqrt{(4+y)^2-8(1+2y^2)}}{2(1+2y^2)}\\
=&1-\frac{1}{2(1+2y^2)}\frac{(4+y)^2-\left(\sqrt{(4+y)^2-8(1+2y^2)}\right)^2}{4+y+\sqrt{(4+y)^2-8(1+2y^2)}}\\
=&1-\frac{8(1+2y^2)}{2(1+2y^2)(4+y+\sqrt{(4+y)^2-8(1+2y^2)})}\\
=&1-\frac{4}{4+y+\sqrt{(4+y)^2-8(1+2y^2)}}\enspace.&
\end{align*}
Equivalently, we want to maximize $h(y)\coloneqq 4+y+\sqrt{(4+y)^2-8(1+2y^2)}$. We have
	$$
	\frac{\partial}{\partial y}h(y)=1+\frac{1}{2}\frac{8-30y}{\sqrt{(4+y)^2-8(1+2y^2)}}\enspace.
	$$
	Setting $\frac{\partial}{\partial y}h(y)$ to $0$, we finally obtain
	\begin{equation*}
	30y^2-16y+1=0\enspace,
  \end{equation*}
and thus
\begin{equation*}
	y=\frac{1}{30}\cdot \left({8 \pm \sqrt{34}}\right)\enspace.
\end{equation*}
One can conclude that the value of $y$ must be set to $y=\sfrac{(8 + \sqrt{34})}{30}$, because by setting $\frac{\partial}{\partial y}h(y)$ to $0$, 
we equated $8-30y$ (which is negative for $y=\sfrac{(8-\sqrt{34})}{30})$ to a square root of a positive real number. By 
checking that $h(0)$ and $h(1)$ are smaller than $h(\sfrac{(8+\sqrt{34})}{30})$, we determine that $h(y)$ is indeed maximized for $y^*=\sfrac{(8 + \sqrt{34})}{30}$. By determining $x^*$ from $y^*$ through~\eqref{eq:xAsFuncOfY}, we obtain $x^*= 2 - \sfrac{\sqrt{34}}{4}$, and the resulting approximation guarantee is therefore 
\begin{equation*}
1 + f(x^*,y^*) = 1 + f_1(x^*,y^*) = 2 - x^* = \frac{\sqrt{34}}{4} < 1.458\enspace,
\end{equation*}
as desired.
\end{proof}

\section{Extension to Weighted TAP}\label{sec:WTAP}

In this section, we show how our techniques can be extended to weighted TAP instances $(T,L,c)$, where $c\in \mathbb{R}^L_{>0}$ are the
link costs, if the ratio $c_{\max}/c_{\min}$ between the maximum cost and minimum cost is bounded by some constant $\Delta \geq 1$.
More precisely, we show that one can beat the factor $1.5$ in this setting, by a constant that depends on $\Delta$.
To do so, we establish the following analogues of Theorems~\ref{thm:decomp} and~\ref{thm:mainKWide} for this setting:
\begin{theorem}\label{thm:decomp-weights}
Let $k,\Delta\in \mathbb{Z}_{\geq 1}$, $\alpha \geq 1$, and $\mathcal{A}$ be an algorithm that is an 
$\alpha$-approximation for weighted TAP on $k\Delta$-wide instances with $c_{\max}/c_{\min}\leq \Delta$. Then 
there is an $(\alpha+O(\sfrac{1}{\sqrt{k}}))$-approximation algorithm for weighted TAP with $c_{\max}/c_{\min}\leq 
\Delta$, 
making a polynomial number of calls to $\mathcal{A}$ and performing further operations that take time polynomially 
bounded in the input size and $\Delta$.
This reduction also works for randomized algorithms, where the approximation guarantees hold in expectation.
\end{theorem}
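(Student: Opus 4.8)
The plan is to replay the proof of Theorem~\ref{thm:decomp} essentially verbatim, replacing throughout every unweighted LP-mass $x(\cov(\cdot))$ by the cost-weighted mass $\sum_{\ell\in\cov(\cdot)}c_\ell x_\ell$, and tracking the one place where the assumption $c_{\max}/c_{\min}\leq\Delta$ enters. By scaling we may assume $c_{\min}=1$, so $c_\ell\in[1,\Delta]$ for all $\ell\in L$. We work with the weighted cut-LP $\min\{\sum_\ell c_\ell x_\ell : x\in\Pi_T\}$; crucially its covering constraints are still $x(\cov(e))\geq 1$ for every $e\in E$, so the leaf-counting estimate used in the unweighted argument still bounds the number of leaves of any subtree $T'=(V',E')$ by $2\,x(\cov(E'))\leq 2\sum_{\ell\in\cov(E')}c_\ell x_\ell$, where the inequality uses $c_\ell\geq 1$.

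First I would establish the cost-weighted analogue of Lemma~\ref{lem:decompALaDavid}: given $(T,L,c)$ with $c_{\max}/c_{\min}\leq\Delta$ and $x\in\Pi_T$, one can efficiently partition $T$ into subtrees $T_i=(V_i,E_i)$ and find $H_i\subseteq E_i$ such that (i) each $T_i/H_i$ is $k\Delta$-wide, (ii) $\cup_i H_i$ can be efficiently covered by links of total cost $O(\sfrac1{\sqrt k})\cdot\sum_\ell c_\ell x_\ell$, and (iii) $\sum_i\sum_{\ell\in\cov(E_i)}c_\ell x_\ell=(1+O(\sfrac1{\sqrt k}))\sum_\ell c_\ell x_\ell$. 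The construction is exactly that of Appendix~\ref{sec:decomp}, but with ``$\gamma$-light'' and ``$\zeta$-heavy'' now phrased through cost-weighted masses, and with parameters $\gamma=\sfrac1{\sqrt k}$ and $\zeta=\Theta(\Delta\sqrt k)$. The three points follow as before: for a principal subtree $T_v$ of $T_i/H_i$ incident to the non-light non-heavy split edge $e=\{u,v\}$, the cost-weighted $x$-mass on $\cov(E_{T_i}(e,v))$ is at most $\zeta/\gamma+\zeta=O(\Delta k)$ (by the failure of $\gamma$-lightness, exactly as in~\eqref{eq:boundOnPrincipXVal}), so $T_v$ has $O(\Delta k)$ leaves; tuning the constant in $\zeta$ makes this bound at most $k\Delta$. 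For (ii), every $\zeta$-heavy edge $e$ satisfies $x(\cov(e))\geq\zeta/c_{\max}\geq\zeta/\Delta$, so $\tfrac{\Delta}{\zeta}x$ is a fractional cover of all heavy edges, and applying a $2$-approximation for weighted TAP~\cite{frederickson1981approximation} to it yields $M$ with $c(M)\leq\tfrac{2\Delta}{\zeta}\sum_\ell c_\ell x_\ell=O(\sfrac1{\sqrt k})\sum_\ell c_\ell x_\ell$. For (iii), the split-edge double counting is controlled by the cost-weighted $\gamma$-lightness of each $e_i$, exactly as in~\eqref{eqn:gamma-cov}--\eqref{eqn:sum-ei-leq-x}, contributing only $2\gamma\sum_\ell c_\ell x_\ell$.

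Next I would prove the cost-weighted analogue of Lemma~\ref{lem:boostedDecomp}. Define $\Pi_T(\nu,k\Delta)$ as in Appendix~\ref{sec:reductionToWideTrees}, but with the constraints $\sum_{\ell\in\cov(U\setminus H)}c_\ell x_\ell\geq\OPT_c(U\setminus H)$ for every $U\subseteq E$, $H\subseteq U$ with $(V,U)/H$ being $k\Delta$-wide, together with $\sum_\ell c_\ell x_\ell\leq\nu$, where $\OPT_c(S)$ is the minimum cost of a link set covering $S$. The characteristic vector of an optimal solution lies in $\Pi_T(\nu^*,k\Delta)$ for $\nu^*$ the weighted optimum, and the cost-weighted decomposition lemma shows $\min\{\sum_\ell c_\ell x_\ell : x\in\Pi_T(\nu^*,k\Delta)\}$ has integrality gap $1+O(\sfrac1{\sqrt k})$. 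One then runs the ellipsoid method with the partial separation oracle obtained from Algorithm~\ref{alg:partialSepOracle} by checking $\sum_\ell c_\ell x_\ell\leq\nu$, then membership in $\Pi_T$, then decomposing via the cost-weighted lemma, running $\mathcal{A}$ on each $k\Delta$-wide piece $T_i/H_i$ to get $L_i$, and returning the hyperplane $\sum_{\ell\in\cov(E_i\setminus H_i)}c_\ell x_\ell\geq c(L_i)/\alpha$ whenever it is violated. Such a hyperplane is valid for $\Pi_T(\nu,k\Delta)$ since $c(L_i)\leq\alpha\,\OPT_c(E_i\setminus H_i)$, so the same dichotomy as in the proof of Lemma~\ref{lem:boostedDecomp} holds: either the oracle always separates, producing $x\in\Pi_T(\bar\nu,k\Delta)$ for a suitable $\bar\nu\leq\nu^*$, or it fails on some $x$, for which the computed $L_i$ already satisfy $c(L_i)\leq\alpha\sum_{\ell\in\cov(E_i\setminus H_i)}c_\ell x_\ell$. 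In either case we obtain $x\in\Pi_T$ with $c^\top x\leq\nu^*$, a decomposition as in the lemma, and $\alpha$-approximate covers $L_i$ of the $E_i\setminus H_i$. Exactly as in the proof of Theorem~\ref{thm:decomp}, returning $M\cup\bigcup_i L_i$ (with $M$ the cheap cover of $\cup_i H_i$) then gives a feasible solution of cost $(\alpha+O(\sfrac1{\sqrt k}))\,c^\top x\leq(\alpha+O(\sfrac1{\sqrt k}))\,\nu^*$; the randomized version follows by the same probability-amplification argument used for Theorem~\ref{thm:decomp}, using again that a $2$-approximation for weighted TAP is available.

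The only genuinely new points, and hence where the work lies, are two. First, one must verify that the factor $\Delta$ enters only through the width and not through the cost loss: this is exactly why we keep $\gamma=\sfrac1{\sqrt k}$ while inflating the heaviness threshold to $\zeta=\Theta(\Delta\sqrt k)$, which forces every ``cost loss'' term to be $O(\sfrac1{\sqrt k})\cdot c^\top x$ while every ``number of leaves'' term is $O(\Delta k)$. Second, the weighted optimum $\nu^*$ is no longer a small integer, so the enumeration of $\nu\in\{0,\dots,|L|\}$ in the unweighted argument must be replaced by a binary search: since $\nu^*\in[c_{\min},|L|c_{\max}]=[1,|L|\Delta]$ and the link costs have polynomial encoding length, a binary search over the $O(|L|)$ partial sums of the sorted multiset of costs (or simply over a sufficiently fine polynomial-size grid) locates a threshold $\bar\nu\leq\nu^*$ with only $O(\log(|L|\Delta))$ runs of the ellipsoid method, keeping the total running time polynomial in the input size and in $\Delta$. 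Everything else is a routine transcription of Appendices~\ref{sec:reductionToWideTrees} and~\ref{sec:decomp} with cost-weighted LP masses in place of unweighted counts.
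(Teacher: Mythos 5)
Your proposal is correct and follows essentially the same route as the paper: the paper also reduces everything to a cost-weighted analogue of Lemma~\ref{lem:decompALaDavid} (its Lemma~\ref{lem:decompALaDavid-weights}), obtained by redefining $\gamma$-lightness via the masses $\sum_{\ell\in\cov(\cdot)}c_\ell x_\ell$ and letting the factor $\Delta$ enter only through the width bound $k\Delta$ (via $1\leq\sum_{\ell\in\cov(e)}c_\ell x_\ell\leq\Delta\, x(\cov(e))$), and then declares that the ellipsoid-based Lemma~\ref{lem:boostedDecomp} and the proof of Theorem~\ref{thm:decomp} carry over verbatim, with the binary-search handling of the non-integral optimum $\nu^*$ relegated to a footnote exactly as you describe. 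Your placement of the $\Delta$ in the heaviness threshold $\zeta=\Theta(\Delta\sqrt{k})$ rather than in the comparison of weighted to unweighted masses is an immaterial reparametrization, and your treatment is if anything slightly more explicit than the paper's sketch.
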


\begin{theorem}\label{thm:mainKWide-weights}
Let $k,\Delta\in \mathbb{Z}_{\geq 1}$. There exists an approximation algorithm for $k\Delta$-wide 
weighted TAP instances $(T=(V,E),L,c)$ with $c_{\max}/c_{\min}\leq \Delta$, whose running time is polynomial in 
$|V|^{k\Delta}$, and whose approximation guarantee is $\frac{3}{2}-g(\Delta)$, for 
some positive function $g:[1,\infty)\rightarrow \mathbb{R}_{>0}$ that tends to $0$ as $\Delta$ tends to infinity.
\end{theorem}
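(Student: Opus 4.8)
The plan is to transport the entire pipeline of Section~\ref{sec:wide} to the weighted setting, exploiting that a $k\Delta$-wide tree has principal subtrees with at most $k\Delta$ leaves. As in the unweighted case, I would first assume the instance is shadow-complete under the standard convention that a shadow of a link $\ell$ receives cost at most $c_\ell$; then Observation~\ref{obs:shadow-minimalOPT} and Lemma~\ref{lem:minimal-cross} go through unchanged (a shadow-minimal solution of a $k\Delta$-wide tree uses at most $k\Delta$ cross-links per principal subtree). Consequently the weighted $k\Delta$-wide-LP --- the Chv\'atal--Gomory polytope $\Pi_T^{\CG}$ together with the local convex-decomposition constraints~\eqref{eq:convDecConstraints}, but minimizing $\sum_{\ell\in L}c_\ell x_\ell$ --- still has only $|\Lambda_i|=O(|L|^{k\Delta})$ extra variables per subtree and can be optimized in time polynomial in $|V|^{k\Delta}$, and the residual WTAP instances on the principal subtrees (which have $O(k\Delta)$ leaves) are solvable in the same time. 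From an optimal $(x,\lambda)$ I would produce two integral solutions exactly as before: the Fiorini~et~al.\ rounding of Corollary~\ref{cor:chvatal-rounding}, which is valid for WTAP because Theorem~\ref{thm:fioriniGCexact} is, of cost at most $2\sum_{\ell\in\Lin}c_\ell x_\ell-\sum_{\ell\in\Lup}c_\ell x_\ell+\sum_{\ell\in\Lcross}c_\ell x_\ell$; and the sampling-based rounding of Lemma~\ref{lem:wired-rounding} without rewirings, of expected cost at most $\sum_{\ell\in\Lin}c_\ell x_\ell+2\sum_{\ell\in\Lcross}c_\ell x_\ell$. Returning the better of the two already gives a $\sfrac{3}{2}$-approximation.

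The nontrivial point is the rewiring step, and it is here that the weighted case genuinely differs. A rewiring replaces two cross-links $\ell_u,\ell_v$ (one from each of two principal subtrees) by the single cross-link $\{u,v\}$, changing the cost by $c_{\{u,v\}}-c_{\ell_u}-c_{\ell_v}$, which need \emph{not} be negative once costs vary by up to a factor $\Delta$. I would therefore only perform \emph{cost-balanced} rewirings. Concretely, partition $\Lcr$ into $t=O(\log\Delta)$ geometric cost buckets $B_j=\{\ell\in\Lcr:c_\ell\in[\beta_j,1.9\beta_j)\}$, pick the bucket $b$ maximizing $\sum_{\ell\in B_b}c_\ell x_\ell$ (so that this quantity is at least $\sfrac{1}{t}$ of $\sum_{\ell\in\Lcr}c_\ell x_\ell$), and call a vertex $u$ \emph{$b$-active} if the unique cross-link of $L_i$ at $u$ lies in $B_b$. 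For a matching $M\subseteq B_b$ of critical vertices, every edge $\{u,v\}\in M$ with both endpoints $b$-active gives, by Proposition~\ref{prop:rewiringIsOk}, a valid rewiring that saves $c_{\ell_u}+c_{\ell_v}-c_{\{u,v\}}>\beta_b+\beta_b-1.9\beta_b=0.1\beta_b>0$. By shadow-minimality of the sampled sets, $\Pr[u\text{ is }b\text{-active}]=\sum_{\ell\in B_b:\,u\in\ell}x_\ell$ exactly, and the $b$-activity of the two endpoints of an edge of $B_b$ is independent because those endpoints lie in distinct, independently rounded principal subtrees. Hence Lemma~\ref{lem:matching} applies to the graph $(\Vcr,B_b)$ with the $B_b$-restriction of $x$ and yields a matching $M$ whose subset $M'$ of edges with both endpoints $b$-active satisfies $\E[|M'|]\ge\bigl(\sum_{\ell\in B_b}x_\ell\bigr)^2/|\Vcr|$; rewiring $M'$ saves at least $0.1\beta_b\cdot\E[|M'|]$ in expectation.

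To convert this into a bound below $\sfrac{3}{2}$, let $\OPT^{*}$ be the LP value and let $\aup,\across,\acr,\ancr$ be the cost-fractions (so $\across+\ain=1$, $\aup\le\ain$, $\acr+\ancr=\across$). Using $\sum_{\ell\in B_b}x_\ell\ge\bigl(\sum_{\ell\in B_b}c_\ell x_\ell\bigr)/(1.9\beta_b)\ge\acr\OPT^{*}/(1.9t\beta_b)$, together with $\beta_b\le\Delta c_{\min}$ and the elementary bounds $|\Vcr|<2K$ and $K\le 2\OPT^{*}/c_{\min}$ (each leaf edge is fractionally covered, and a tree has average degree below $2$), the expected saving becomes a constant multiple of $\acr^{2}\OPT^{*}/(\Delta t^{2})=\Omega\bigl(\acr^{2}/(\Delta\log^{2}\Delta)\bigr)\cdot\OPT^{*}$. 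Plugging this into Lemma~\ref{lem:wired-rounding}, the second solution has expected cost at most $\bigl(1+\across-\Omega(\acr^{2}/(\Delta\log^{2}\Delta))\bigr)\OPT^{*}$, while the first has cost at most $(2-\across-\aup)\OPT^{*}$. I would then complement these with a weighted/counting analogue of Lemma~\ref{lem:up}: the per-vertex inequality in its proof is a statement about $x$-values and survives, yielding $2\sum_{\ell\in\Lup}x_\ell\ge\sum_{\ell\in\Lncr}x_\ell$ and hence, after translating through $c_{\min}\le c_\ell\le\Delta c_{\min}$, a bound of the form $\ancr=O(\Delta)\cdot\aup$. Maximizing the minimum of the two rounding guarantees over the admissible $(\aup,\across,\acr,\ancr)$ then puts the worst case near $\aup=\ancr=0$, $\acr=\across\approx\sfrac{1}{2}$, giving value $\sfrac{3}{2}-\Theta\bigl(1/(\Delta\log^{2}\Delta)\bigr)$; this is the claimed $\sfrac{3}{2}-g(\Delta)$ with $g$ positive and $g(\Delta)\to 0$.

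The main obstacle, and essentially the only place the argument is not a line-by-line translation, is exactly what the second paragraph addresses: in the weighted world a rewiring can increase the cost, so one must restrict to cost-balanced triples; the bucketing costs a $\Theta(\log\Delta)$ factor, and normalizing the per-rewiring gain $\Theta(\beta_b)$ against $\OPT^{*}$ costs a further $\Theta(\Delta)$ factor, which is why $g(\Delta)$ deteriorates --- but stays strictly positive --- as $\Delta$ grows. A minor secondary issue is that Lemma~\ref{lem:up} survives only in a counting form with an extra $\Delta$-factor, so one must carry these constants through the final optimization; since any stronger lower bound on $\aup$ only improves the first rounding guarantee, this does not change the qualitative conclusion.
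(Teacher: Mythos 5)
Your proposal is correct and follows essentially the same route as the paper's proof: the same weighted $k\Delta$-wide-LP with the two roundings, the same key fix of partitioning $\Lcr$ into $O(\log\Delta)$ geometric cost buckets so that each rewiring within the chosen bucket strictly decreases the cost, the same application of Lemma~\ref{lem:matching} to the heaviest bucket, and the same observation that only the $x$-value form of Lemma~\ref{lem:up} survives, costing a $\Delta$-factor in the final optimization. The only differences are immaterial constants (bucket ratio $1.9$ versus $1.5$, selecting the bucket by cost mass rather than $x$-mass), yielding the same qualitative guarantee $\sfrac{3}{2}-g(\Delta)$ with $g(\Delta)\to 0$.
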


For the sake of simplicity, and since the proofs of the above results bear strong similarities with their respective 
proofs in the unweighted setting, we will try to sketch their proofs by stressing the points where our approach for the 
weighted case diverges from our approach in the unweighted one. A key point we want to highlight is that the 
decomposition behind Theorem~\ref{thm:decomp-weights} produces $k\Delta$-wide trees, but still both the decomposition 
and the rounding algorithm run in polynomial time, since $\Delta$ is a constant. Concerning the approximation 
guarantee we achieve, we remark that, in the interest of clarity, we make no effort to obtain the best possible function $g(\Delta)$, and instead focus on showing that some strictly positive function $g(\Delta)$ exists as described in Theorem~\ref{thm:decomp-weights}, which is all we need to obtain that, if $c_{\max}/c_{\min}$ is a constant, then we can improve upon the $\sfrac{3}{2}$-approximation factor.

Let us begin with establishing Theorem~\ref{thm:decomp-weights}. From the discussion in 
Section~\ref{sec:reductionToWideTrees}, and from inspecting the proofs of Lemma~\ref{lem:boostedDecomp} and Theorem 
\ref{thm:decomp}, it follows that we only need to extend Lemma~\ref{lem:decompALaDavid} to the weighted setting with bound $\Delta$ on the
ratio between highest link cost and lowest link cost. Without loss of generality, we can assume that
$c_{\min}=1$, by scaling the link costs. This implies $c_{\max}\leq \Delta$.
We sketch the proof of a variant of Lemma~\ref{lem:decompALaDavid}, tailored for the bounded-weights case:

\begin{lemma}\label{lem:decompALaDavid-weights}
Let $k\in \mathbb{Z}_{\geq 1}$, let $(T,L,c)$ be a weighted TAP instance on a tree $T=(V,E)$ with $c_{\ell}\in[1,\Delta]$ for all $\ell\in
L$, and let $x\in P_G$.
Then one can efficiently partition $G$ into subtrees $G_i=(V_i,E_i)$ for $i\in [q]$ and find $H_i\subseteq E_i$ for 
$i\in [q]$ such that:
\begin{enumerate}[label=(\roman*),itemsep=-0.2em,topsep=0.4em]
\item\label{item:thinSubInstances-weight} For $i\in [q]$, $G_i/H_i$ is $k\Delta$-wide.
\item\label{item:canCoverHi-weight} One can efficiently obtain a link set $M\subseteq L$ with $c(M) = O(\sfrac{1}{\sqrt{k}}) \sum_{\ell\in L} c_\ell x_\ell$ covering $\cup_{i=1}^q H_i$.
\item\label{item:goodLpSolForSubProbs-weight} $\sum_{i=1}^q \sum_{\ell \in \cov(E_i)}c_\ell x_\ell =  
(1+O(\sfrac{1}{\sqrt{k}}))\cdot \sum_{\ell\in L} c_\ell x_\ell$.
\end{enumerate}
\end{lemma}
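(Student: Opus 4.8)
The plan is to mirror the proof of Lemma~\ref{lem:decompALaDavid} in Appendix~\ref{sec:decomp} almost verbatim, replacing every $x$-mass of the form $x(\cov(\cdot))=\sum_{\ell\in\cov(\cdot)}x_\ell$ by the corresponding cost-weighted mass $\sum_{\ell\in\cov(\cdot)}c_\ell x_\ell$, and then choosing the two thresholds so that the factor $\Delta$ surfaces only in the width of the resulting subtrees while canceling out in all cost bounds. Concretely, I would call an edge $e=\{u,v\}$ of a subtree $T_i$ \emph{$\gamma$-light} if $\sum_{\ell\in\cov(e)}c_\ell x_\ell\le\gamma\cdot\min\bigl\{\sum_{\ell\in\cov(E_{T_i}(e,u))\setminus\cov(e)}c_\ell x_\ell,\;\sum_{\ell\in\cov(E_{T_i}(e,v))\setminus\cov(e)}c_\ell x_\ell\bigr\}$ with $\gamma=\sfrac{1}{\sqrt k}$, and \emph{$\zeta$-heavy} if $\sum_{\ell\in\cov(e)}c_\ell x_\ell\ge\zeta$ with $\zeta=\sfrac{\sqrt k\,\Delta}{4}$. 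The splitting process (repeatedly splitting at critical $\gamma$-light edges until none remains), the ordering $T_1,\dots,T_q$ with split edges $e_1,\dots,e_{q-1}$, the existence in each $T_i$ of a root $r_i$ satisfying the cost-weighted analogue of~\eqref{eq:rootInTi} and hence of~\eqref{eq:lightAwayFromRoot}, and the choice of $H_i\subseteq E_i$ as the connected component (with vertex set $W_i$) containing $r_i$ of the subgraph of $T_i$ formed by its $\zeta$-heavy edges, all carry over unchanged with this bookkeeping substitution; all steps remain polynomial time, with the dependence on $\Delta$ being polynomial.

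For point~\ref{item:goodLpSolForSubProbs-weight} I would reuse the split-tree counting argument of Appendix~\ref{sec:decomp}: the inequality~\eqref{eqn:leq-x(L)} and the identity~\eqref{eqn:2*cov} are per-link combinatorial statements and therefore hold with $c_\ell x_\ell$ in place of $x_\ell$, so it suffices to bound $\sum_{i=1}^{q-1}\sum_{\ell\in\cov(e_i)}c_\ell x_\ell$. Since each $e_i$ is $\gamma$-light, this is at most $\gamma\sum_{i=1}^{q-1}\sum_{\ell\in\cov(E_i)\setminus\cov(e_i)}c_\ell x_\ell\le\gamma\sum_{\ell\in L}c_\ell x_\ell=O(\sfrac1{\sqrt k})\sum_{\ell\in L}c_\ell x_\ell$, where the middle inequality uses that the sets $\cov(E_i)\setminus\cov(e_i)$ are pairwise disjoint. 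For point~\ref{item:canCoverHi-weight}, the key observation is that $c_\ell\le\Delta$ for every link implies that a $\zeta$-heavy edge $e$ has $x(\cov(e))\ge\sfrac{1}{\Delta}\sum_{\ell\in\cov(e)}c_\ell x_\ell\ge\sfrac{\zeta}{\Delta}=\sfrac{\sqrt k}{4}$; hence $\sfrac{\Delta}{\zeta}\cdot x$ fractionally covers every edge of $\bigcup_{i=1}^q H_i$, and applying a $2$-approximation for weighted TAP to the instance induced by these edges yields $M\subseteq L$ with $c(M)\le 2\cdot\sfrac{\Delta}{\zeta}\sum_{\ell\in L}c_\ell x_\ell=\sfrac{8}{\sqrt k}\sum_{\ell\in L}c_\ell x_\ell=O(\sfrac1{\sqrt k})\sum_{\ell\in L}c_\ell x_\ell$, the $\Delta$ canceling precisely because $\zeta$ was scaled by $\Delta$.

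For point~\ref{item:thinSubInstances-weight}, consider a principal subtree $T_v$ of $G_i/H_i$ attached to the contracted root through an edge $e=\{u,v\}$ with $u\in W_i$, $v\notin W_i$. As in the unweighted proof, $e$ is neither $\gamma$-light (else we would have split on it) nor $\zeta$-heavy (by definition of $H_i$), so the cost-weighted form of~\eqref{eq:lightAwayFromRoot} gives $\sum_{\ell\in\cov(E_{T_i}(e,v))\setminus\cov(e)}c_\ell x_\ell\le\sfrac1\gamma\sum_{\ell\in\cov(e)}c_\ell x_\ell\le\sfrac{\zeta}{\gamma}=\sfrac{k\Delta}{4}$, whence $\sum_{\ell\in\cov(E_{T_i}(e,v))}c_\ell x_\ell\le\sfrac{k\Delta}{4}+\zeta\le\sfrac{k\Delta}{2}$. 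Since $c_{\min}=1$, every leaf edge of $T_v$ satisfies $\sum_{\ell\in\cov(\cdot)}c_\ell x_\ell\ge c_{\min}\cdot x(\cov(\cdot))\ge 1$, and every link contributes $c_\ell x_\ell$ to at most two such leaf-edge sums, so $T_v$ has at most $2\cdot\sfrac{k\Delta}{2}=k\Delta$ leaves, i.e.\ $G_i/H_i$ is $k\Delta$-wide. The main obstacle, and the only genuinely new point over the unweighted argument, is getting this interplay of thresholds right: $\zeta$ must be $\Theta(\sqrt k\,\Delta)$ so that point~\ref{item:canCoverHi-weight} still costs only an $O(\sfrac1{\sqrt k})$ fraction, and this forces the per-subtree weighted budget up to $\Theta(k\Delta)$, which—because a leaf contributes only $c_{\min}=1$—is exactly what inflates the width from $k$ to $k\Delta$; everything else is the proof of Lemma~\ref{lem:decompALaDavid} with $x_\ell$ replaced by $c_\ell x_\ell$.
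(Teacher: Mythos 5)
Your proof is correct and follows essentially the same route as the paper's (sketched) argument: redefine $\gamma$-lightness via the cost-weighted mass $\sum_{\ell\in\cov(\cdot)}c_\ell x_\ell$, rerun the splitting process of Appendix~\ref{sec:decomp} verbatim, and let the factor $\Delta$ surface only in the width bound via the gap between $c_{\min}=1$ at the leaves and the heavy/light thresholds. The only cosmetic difference is that you fold $\Delta$ into the heavy-edge threshold (weighted mass at least $\sfrac{\sqrt{k}\,\Delta}{4}$), whereas the paper keeps $\zeta$-heaviness in terms of the unweighted mass $x(\cov(e))\geq\zeta$ and instead invokes $\sum_{\ell\in\cov(e)}c_\ell x_\ell\leq\Delta\cdot x(\cov(e))$ afterwards; the two parametrizations are interchangeable and yield the same bounds.
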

\begin{proof}[Proof-sketch of Lemma \ref{lem:decompALaDavid-weights}]
 We observe that, by modifying the definition of $\gamma$-light edges (Definition \ref{def:light}) to be any edge $e=\{u,v\}$ such that
$$
\sum_{l\in \cov(e)}c_{\ell} x_{\ell} \leq \gamma \cdot \min \left\{ \sum_{\ell\in \cov(E(e,u))\setminus\cov(e)}c_{\ell} x_{\ell},\sum_{\ell\in 
\cov(E(e,v))\setminus\cov(e)}c_{\ell} x_{\ell} \right\},
$$
then, by applying the decomposition process of Section~\ref{sec:decomp} and following the proof of Lemma~\ref{lem:decompALaDavid}, point~\ref{item:goodLpSolForSubProbs-weight} follows immediately. 
Point~\ref{item:canCoverHi-weight} will follow simply from our choice of the constant $\zeta$ with respect to the 
definition of $\zeta$-heavy edges.
Finally, for point~\ref{item:thinSubInstances-weight}, it suffices to observe that
\begin{equation*}
\sum_{\ell\in\cov(e)} c_\ell x_\ell \leq \Delta x(\cov(e))\qquad \forall e=\{u,v\}\in E \enspace,
\end{equation*}
which implies in particular that for any edge $e$ that is not 
$\zeta$-heavy,
\begin{equation*}
\sum_{\ell\in\cov(e)} c_\ell x_\ell \leq \Delta\zeta\enspace.
\end{equation*}
Finally, observe that for any edge $e$ that is adjacent to a leaf we have
\begin{equation*}
\sum_{\ell\in\cov(e)} c_\ell x_\ell \geq 1\enspace.
\end{equation*}
Following the proof of Lemma \ref{lem:decompALaDavid} using these updated bounds then establishes points 
\ref{item:canCoverHi-weight} and 
\ref{item:thinSubInstances-weight}.
\end{proof}

With the previous result at hand, our approach to approximating weighted TAP with bounded cost ratio on $k\Delta$-wide trees is very
similar to our approach for the unweighted case. Ideally, we would like to use the weighted version of the $k\Delta$-wide-LP, and apply
exactly the same rounding algorithm as the one we presented in Section \ref{sec:algo}: either round
using the  $\{0,1/2\}$-Chv\'atal-Gomory cuts, or sample a solution for each principal subtree and apply rewiring. The rounding based on Chv\'atal-Gomory cuts does not require unweighted instances. The only additional technical problem appears in the rewiring step for the latter rounding procedure. More precisely, rewiring as introduced previously may actually increase costs, because it is oblivious to link costs. In particular, the rewiring process may rewire two links into one link of larger total cost; notice that this will never happen in the unweighted setting, and this fact is crucial in proving that rewiring works for that case.

In order to remedy this situation, we conduct the rewiring process in a way that guarantees that every time we rewire a pair of
links into a single link, we decrease the cost of our rounded solution by some significant factor. Towards this goal, we will split the
cross-links into weight groups, where the costs of links in the same group differ only by a small constant factor. Then, instead of applying the rewiring process to
all the cross-links, we will apply it independently to each weight group. While this modification will decrease the number of rewirings we
will be able to perform, it will ensure that replacing two links by one decreases the cost of our solution. For our argument to be
complete, we will only need to ensure that there exists at least one group such that the links it contains represent a significant
fraction of the LP cost. We will show that this is true as long as the max-to-min cost ratio is a constant, which will imply that it suffices to create only a constant number of groups, and therefore at least one of those groups will contain a constant fraction of the cross-link weights in our LP solution.

Let us now establish Theorem \ref{thm:mainKWide-weights}:
\begin{proof}[Proof-sketch of Theorem \ref{thm:mainKWide-weights}]
Let $(T,L,c)$ be a weighted TAP instance, where $T$ is a $k\Delta$-wide tree and $c_{\ell}\in [1,\Delta]$ for any $\ell\in L$. Moreover, let $(x,\lambda)$ by an optimal solution to the weighted version of the corresponding $k\Delta$-wide-LP, whose cost will be denoted by $\OPT^*\coloneqq \sum_{\ell\in L}c_\ell x_\ell$. 

We start by discussing the rounding procedure based on Chv\'atal-Gomory cuts, as introduced by Fiorini et 
al.~\cite{fiorini_2018_approximating}. We first remark that Corollary~\ref{cor:chvatal-rounding} extends 
to the weighted setting---and was indeed used in this setting in~\cite{fiorini_2018_approximating}---which readily 
follows from Theorem~\ref{thm:fioriniGCexact}. Hence, we know that we can round $x$ to a solution $\mathcal{A}$ with 
approximation guarantee 
\begin{equation}\label{eq:guarOfADelta}\begin{aligned}
\frac{1}{\OPT^*} \cdot\hspace{-0.25em}\sum_{\ell\in\mathcal{A}} c_\ell x_\ell &\leq
\frac{1}{\OPT^*}\cdot\left(2\cdot \sum_{\ell\in\Lin} c_\ell x_\ell -\hspace{-0.25em}\sum_{\ell\in\Lup} c_\ell x_\ell +\hspace{-0.25em}\sum_{\ell\in\Lcross} c_\ell x_\ell \right)\\
&= 2\ain - \aup  + \across\\
&=2-\across - \aup\enspace,
\end{aligned}\end{equation}
where the second equality follows from $\across+\ain =1$.

\medskip

We now discuss our (slightly adapted) rounding procedure based on rewirings.
To this end, we partition $\Lcr$ into $p=\lceil\log_{1.5}\Delta\rceil$ groups $\{G_1,\ldots,G_p\}$, such that for any 
$h\in[p]$ and any $\ell,\ell'\in G_h$, we have $c_\ell\leq \frac{3}{2}c_{\ell'}$. Let 
\begin{equation*}
j=\argmax_{h\in[p]} x(G_h) \enspace,
\end{equation*}
which implies 
\begin{equation}\label{eq:G_j}
x(G_j) \geq \frac{1}{p} \sum_{h=1}^p x(G_h) = \frac{1}{p} x(\Lcr)\enspace.
\end{equation}
We perform our rounding step as in the unweighted case, with the only difference being that we apply the rewiring 
procedure separately to each link set $G_h$ for $h\in [p]$. First, observe that rewiring will always decrease the 
cost of the rounded solution, because for any two links $\ell,\ell'$ in the same group $G_h$, if we replace $\ell$ and 
$\ell'$ with a new link $\ell^*\in G_h$, the change in cost is
\begin{equation}\label{eq:costGainWRew}
c(\ell^*) - c(\ell) - c(\ell') \leq c(\ell^*) - \frac{4}{3}c(\ell^*) = -\frac{1}{3}c(\ell^*) \leq -\frac{1}{3}\enspace,
\end{equation}
where we used that both links $\ell$ and $\ell'$ have cost at least $\frac{2}{3}c(\ell^*)$, because they are in the same 
group $G_h$ as $\ell^*$; moreover, the last inequality follows from the fact that all link costs are within 
$[1,\Delta]$. Hence, for each pair of links that gets rewired, the cost decreases by at least $\sfrac{1}{3}$.

For simplicity of exposition, we analyze in the sequel only the cost gain we get through rewiring in the group $G_j$. 
This is enough to prove Theorem~\ref{thm:mainKWide-weights}.
As before, we sample a local solution $L_i$ for each principal subtree $T_i$ and compute a maximum matching $M$ among active vertices, which is then used for rewiring in $G_j$. We highlight that, because we only consider links in $G_j$, active vertices are critical vertices that are adjacent to a link in $G_j$, and not just any link in $\Lcr$.
By the discussion in Section~\ref{sec:algo}, it follows that we get a solution $\mathcal{B}$ with the following guarantee---analogous to~\eqref{eq:boundCardCalL}---when applying our rounding approach based on rewiring.
\begin{equation*}
\sum_{\ell \in \mathcal{B}} c_\ell \leq \sum_{i=1}^q c(L_i) - \frac{1}{3}|M|\enspace,
\end{equation*}
where we used the fact that any rewiring step gains at least $\sfrac{1}{3}$ in terms of costs due to~\eqref{eq:costGainWRew}, and there is one rewiring step for each edge in the matching $M$.
We thus get
\begin{equation}\label{eq:guarBRew}
\E\left[\sum_{\ell\in \mathcal{B}} c_\ell \right]
  \leq \sum_{\ell\in \Lin} c_\ell x_\ell + 2 \sum_{\ell\in \Lcross} c_\ell x_\ell - \frac{1}{3}
\E_A\left[\eta\left(A, G_j \cap
\begin{psmallmatrix}
A\\
2
\end{psmallmatrix}
\right) \right]\enspace,
\end{equation}
where the set of active vertices $A$ fulfills the distributional properties described by~\eqref{eq:propsOfA}. Hence, we can apply Lemma~\ref{lem:matching} to obtain
\begin{equation*}
\E_A\left[\eta\left(A, G_j \cap
\begin{psmallmatrix}
A\\
2
\end{psmallmatrix}
\right) \right] \geq \frac{x(G_j)^2}{|\Vcr|}\enspace,
\end{equation*}
which, combined with~\eqref{eq:guarBRew}, leads to the following expected
approximation guarantee for $\mathcal{B}$:
\begin{equation}\label{eq:guarBRewEx}
\begin{aligned}
\frac{1}{\OPT^*} \cdot \E\left[\sum_{\ell\in \mathcal{B}} c_\ell \right]
  &\leq \ain + 2 \across - \frac{x(G_j)^2}{3 |\Vcr| \OPT^*}\\
&= 1+ \across - \frac{x(G_j)^2}{3 |\Vcr| \OPT^*} \\
&\leq 1 + \across - \frac{x(\Lcr)^2}{3 p^2 |\Vcr| \OPT^*} \enspace,
\end{aligned}
\end{equation}
where the last inequality follows from~\eqref{eq:G_j}. As in the unweighted case, we can bound the number of critical vertices as follows (see~\eqref{eq:leaves-vs-x}).
\begin{equation}
2x(\Lcross)+2x(\Lin)-x(\Lup)-x(\Lncr) = 2x(L) - x(\Lup) - x(\Lncr) \geq K > \frac{|\Vcr|}{2}\enspace.
\end{equation}
Together with~\eqref{eq:guarBRewEx}, we thus obtain
\begin{equation}\label{eq:guarBRewEx2}
\frac{1}{\OPT^*}\cdot \E\left[\sum_{\ell\in \mathcal{B}} c_\ell\right]
\leq 1 + \across - \frac{1}{6 p^2}\cdot \frac{
\left(\frac{x(\Lcr)}{\OPT^*}\right)^2
}{
\frac{2x(L) - x(\Lup) - x(\Lncr)}{\OPT^*}
}\enspace.
\end{equation}
To further expand~\eqref{eq:guarBRewEx2}, we first observe that
\begin{equation}\label{eq:upCrossRel}
x(\Lup) \geq x(\Lncr)\enspace,
\end{equation}
which follows by Lemma~\ref{lem:up}, whose proof does not rely on links having unit costs. (Only the second implication of the lemma, i.e., $\aup\geq \ancr$, relies on links being unweighted.)
Moreover, we have the following relations stemming from the fact that all link costs are within $[1,\Delta]$.
\begin{align}
\frac{x(L)}{\OPT^*} &\leq 1\enspace,\\
\intertext{because every link has cost at least $1$, and}
\frac{x(\Lcr)}{\OPT^*} &\geq \frac{1}{\Delta} \acr\enspace,\label{eq:ccrToA}\\
\frac{x(\Lncr)}{\OPT^*} &\geq \frac{1}{\Delta} \ancr\enspace,\label{eq:nccrToA}
\end{align}
because every link has cost at most $\Delta$. Using~\eqref{eq:upCrossRel}--\eqref{eq:nccrToA}, we can further expand the (expected) approximation guarantee of $\mathcal{B}$ given by~\eqref{eq:guarBRewEx2} as follows.
\begin{align*}
\frac{1}{\OPT^*} \cdot \E\left[ \sum_{\ell\in \mathcal{B}} c_\ell \right]
  &\leq 1 + \across - \frac{1}{12 p^2 \Delta^2}\cdot
    \frac{(\acr)^2}{1-\frac{1}{\Delta}\ancr}\\
  &\leq 1 + \across - \frac{1}{12 p^2 \Delta^2} \cdot (\acr)^2\\
  &= 1 + \across - \frac{1}{12 p^2 \Delta^2} \left(\across - \ancr\right)^2\enspace,
\end{align*}
where the last equality follows from $\across = \acr + \ancr$. Hence, the (expected) approximation factor obtained by returning the better solution of $\mathcal{A}$ and $\mathcal{B}$ is, by the above inequality and~\eqref{eq:guarOfADelta}, upper bounded by 
\begin{equation}\label{eq:apxWTAPcomb}
1 + \min\left\{1-\across - \frac{1}{\Delta}\ancr, \ \across - \frac{1}{12p^2\Delta^2} \left(\across - \ancr \right)^2 \right\}\enspace,
\end{equation}
where we used $\aup \geq \sfrac{\ancr}{\Delta}$ to further expand the guarantee given by~\eqref{eq:guarOfADelta}, which 
follows from~\eqref{eq:upCrossRel} and the fact that link costs are within $[1,\Delta]$.

One way to provide an upper bound for the approximation guarantee described by~\eqref{eq:apxWTAPcomb}, is to analyze the two cases $\ancr\leq \frac{1}{2}\across$ and $\ancr > \frac{1}{2}\across$ separately. More precisely, if $0\leq \ancr\leq \frac{1}{2}\across$, then we can upper bound~\eqref{eq:apxWTAPcomb} by
\begin{equation}\label{eq:apxWTAPBound1}
1 + \min\left\{ 1 - \across, \ \across\left(1-\frac{\across}{48 p^2 \Delta^2}\right) \right\}\enspace.
\end{equation}
Similarly, if $\frac{1}{2}\across < \ancr \leq 1$, then~\eqref{eq:apxWTAPcomb} can be upper bounded by
\begin{equation}\label{eq:apxWTAPBound2}
1 + \min\left\{
   1 - \across\left(1+\frac{1}{2\Delta}\right), \ \across\right\}\enspace.
\end{equation}
Finally, the maximum value that~\eqref{eq:apxWTAPBound1} achieves over $\across\in [0,1]$ is given by
\begin{equation}\label{eq:apxWTAPex}
2 - 4 p\Delta \left(12 p\Delta - \sqrt{144 p^2\Delta^2 - 3}\right) \enspace,
\end{equation}
and one can show that this is strictly larger than the maximum value of~\eqref{eq:apxWTAPBound2} over $\across\in [0,1]$, for any values of $p,\Delta\geq 1$.
Hence,~\eqref{eq:apxWTAPex} is an upper bound on the approximation guarantee of the algorithm. Finally, the result 
follows by observing that the function $f(\beta)\coloneqq 2 - 4 \beta \left(12 \beta - \sqrt{144 \beta^{2} - 3}\right)$ 
is strictly increasing for $\beta\geq 1$ and satisfies $\lim_{\beta \rightarrow \infty} f(\beta) = 1.5$.
\end{proof}

\subsection*{Acknowledgments}
The authors are grateful to the anonymous reviewers for various helpful comments.

\bibliographystyle{plain}

\end{document}